\newcommand{\tr}{\mathrm{Tr}}
\newcommand{\id}{\mathds{1}}
\theoremstyle{plain}
\newtheorem{theorem}{Theorem}[subsection]
\newtheorem{lemma}[theorem]{Lemma}
\newtheorem{example}[theorem]{Example}
\newtheorem{assumption}[theorem]{Assumption}
\DeclareMathAlphabet{\pazocal}{OMS}{zplm}{m}{n}
\numberwithin{equation}{section}
  \title{Indirect retrieval of information and the emergence of facts in quantum mechanics}
\begin{document}

\author[1]{ \small M. Ballesteros \thanks{miguel.ballesteros@iimas.unam.mx}}
\author[2]{M. Fraas\thanks{fraas@math.lmu.de}}
\author[3]{J. Fr\"ohlich\thanks{juerg@phys.ethz.ch}}
\author[4]{B.Schubnel\thanks{baptiste.schubnel@univ-lorraine.fr}}
\affil[1]{Department of Mathematical Physics, Applied Mathematics and Systems Research Institute (IIMAS),  National Autonomous University of Mexico (UNAM)}
\affil[2]{Mathematisches Institut der Universit\"at M\"unchen,  University of Munich}
\affil[3]{Institut f{\"u}r Theoretische Physik,   ETH Zurich}
\affil[4]{IECL, Universit\'{e} de Lorraine, Metz}

\renewcommand\Authands{ and }

\normalsize 
\date \today

\maketitle
\begin{abstract}

Long sequences of successive direct (projective) measurements or observations of a few
``uninteresting'' physical quantities of a quantum system may reveal indirect, but precise and unambiguous 
information on the values of some very ``interesting'' observables of the system. In this paper, the  
mathematics underlying this claim is developed; i.e., we attempt to contribute to a mathematical theory 
of indirect and, in particular, non-demolition measurements in quantum mechanics. 
Our attempt leads us to make novel uses of classical notions and results of probability theory, such as the ``algebra of functions measurable at infinity'', the Central Limit Theorem, results concerning relative entropy and its role in the theory of large deviations, etc.
\end{abstract}

\section{Introduction: Purpose and scope of paper}\label{intro}
This paper is devoted to a study of the theory of indirect measurements and observations in quantum mechanics, \cite{Kraus}. Our main aim is to develop a general perspective on questions that, in very general terms, can be formulated as follows: What sort of information on a quantum system, S, can be extracted from long, time-ordered sequences of direct (so-called projective) measurements or observations of a $single$ physical quantity of $S$ represented by a self-adjoint operator, $X$; (or of a finite number of physical quantities, $\vec{X} := \lbrace X_1,...,X_r \rbrace$, of S)? What do such sequences of direct measurements/observations reveal about \textit{facts} concerning $S$? What does quantum mechanics tell us about the time evolution of quantum systems subjected to repeated measurements/observations? These are fairly fundamental questions about quantum theory that call for general answers arrived at in an exploration of general concepts and structure that goes beyond the analysis of special examples. The analysis presented in this paper is inspired, to some extent, by the work of experimental groups, see, in particular  \cite{guerlin}, and by various theoretical papers, see \cite{Kraus}, \cite{Mass}, \cite{BaBe}, \cite{BaBeTi}, \cite{BePel}, and references given there. Although our main interest concerns  conceptual aspects of quantum mechanics, the main effort that has gone into this paper concerns the study of simple models (see, e.g.,  \cite{BaBe}) describing a concrete physical situation. This has the advantage that it renders our paper comprehensible for readers who are not used to abstraction. Yet, our analysis is intended to illustrate some -- we believe novel -- $general$ insights: The mathematics underlying the theory of long, time-ordered sequences of direct measurements of some physical quantities of a quantum system $S$ turns out to be closely related to the one underlying the theory of equilibrium (Gibbs) states of long chains of classical lattice gases (or classical spins), with time of the quantum system corresponding to $1D$ physical space of the lattice gas and sequences of outcomes of direct quantum-mechanical measurements corresponding to configurations of particles in the lattice gas. Results of statistical mechanics, such as the equivalence of different ensembles in the thermodynamic limit, or the disjointness of equilibrium states corresponding to different values of some thermodynamic parameters, such as the temperature or a chemical potential, have close cousins in the quantum theory of repeated measurements. For example, the observation of an event of finite duration in a quantum system corresponds to the appearance of short-range order in some bounded spatial region of the lattice gas, and an ``eternal property'' of a quantum system (i.e., a property of $S$ observed in a non-demolition measurement) corresponds to a specific value of an \textit{order parameter} labeling different equilibrium states of the lattice gas.
Mathematical methods, such as large-deviation theory, concentration-of-measure estimates, hypothesis testing, etc., can be transferred from statistical mechanics to quantum theory. \\

In this paper, we will study a quantum system, $S$, with the following properties:
\begin{enumerate}
\item For all practical purposes, $S$ can be considered to be an isolated (``closed'') system. This means that, in the Heisenberg picture, the equations of motion of operators representing physical quantities of $S$ take the form of \textit{Heisenberg equations of motion} in which the ``Hamiltonian'' of $S$ appears.
\item $S$ is the composition of two subsystems, $\overline{P}$ and $E$, where $\overline{P}$ is the system we actually wish to study, while $E$ consists of all the experimental equipment - probes, detectors, and other measuring devices - used to observe $\overline{P}$. Clearly, $\overline{P}$ and $E$ interact with one another; so neither $\overline{P}$, nor $E$ are isolated systems; while $S=\overline{P} \vee E$ $is$ isolated if $E$ is chosen appropriately large; see e.g. \cite{FaFS}. 
\end{enumerate}

We propose to analyze a concrete model system where $\overline{P}$ is a quantum dot in a semi-conductor device containing a component, $P$, close to $E$ that can  bind up to $N < \infty$ electrons. A sketch of this system is given in Figure $1$.
\begin{figure}[H]
\begin{center}
\includegraphics[scale=0.42]{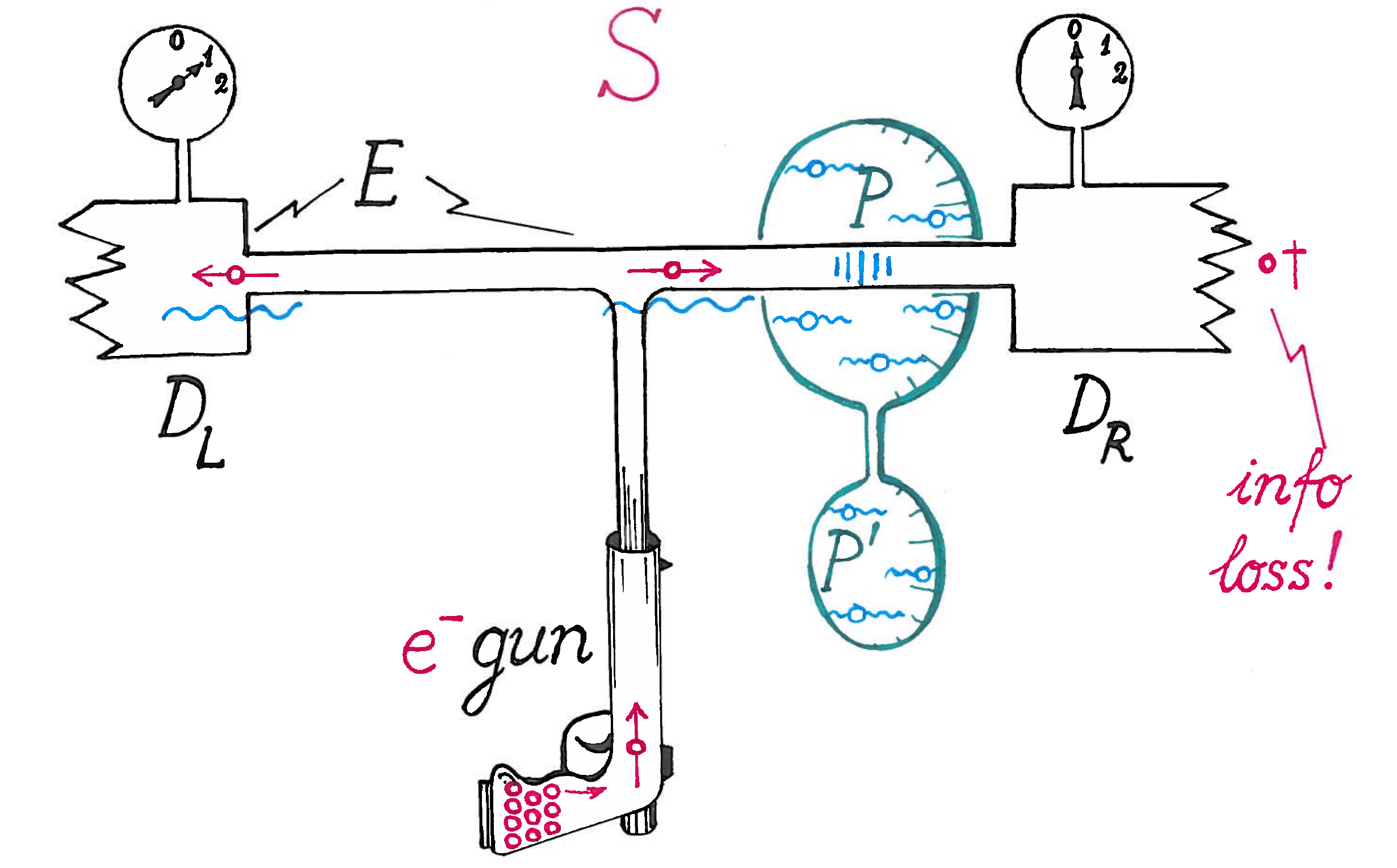}
\caption{The experimental setup.}
\end{center}
\end{figure}
 Electrons can enter into the component $P$ or tunnel out of it and into a component $P'$, (i.e., $P$ can be ionized). However, processes changing the number, $\nu$, of electrons bound by $P$ are very slow (in a sense to be specified below). The experimental equipment  $E$  consists of a conducting channel close to the region where $P$ is located, with electron detectors, $D_L$ and $D_R$, at both ends of the channel, and of an electron gun that shoots electrons into the channel at regular time intervals; (e.g., one electron every $\tau$ seconds, where $\tau$ is such that, typically,  there is only one  electron in the channel). The electrons in the channel connecting $D_L$ and $D_R$ serve as probes to  observe $P$ indirectly: the rates at which electrons in the channel hit $D_L$ or $D_R$, respectively, depend on the number, $\nu$, of electrons  bound by $P$. These latter electrons  give rise to a ``Coulomb blockade'' in the branch of the channel leading towards $D_R$: the larger $\nu$, the more likely it is that an electron propagating through the channel will hit the detector $D_L$.

The \textit{only direct observation} of  $S$ that can be made is to see whether an electron in the channel triggers a click of either $D_L$ or $D_R$. In an idealized (simplified) mathematical description of $S$, this observation will be represented by an operator, $X$, with 
\begin{equation}
X= \mathds{1}_{\overline{P}} \otimes \left( \begin{array}{cc} \mathds{1}&0\\0&-\mathds{1} \end{array} \right)_{E} ,
\end{equation}
with the property that $X$ has the eigenvalue $1$ on all channel states describing only $one$ electron propagating through the channel that will hit $D_L$ and the eigenvalue $-1$ on all channel states describing $one$ electron hitting $D_R$. 
Failed measurements will not be recorded. When an electron is scattered onto a detector it disappears for ever, and the original state of the detector is restored after a time much shorter than the cycle time $\tau$ of the electron gun. The physical quantity represented by the operator $X$ is measured around times $\tau,2 \tau, 3\tau,...$. These measurements are \textit{projective measurements}, as described in \cite{FS}. 
If the electron gun and the detectors function perfectly, only the eigenvalues $\xi=\pm 1$ are observed in every measurement. We thus denote by $\xi_j= \pm 1$ the outcome of the $j^{th}$ successful measurement of $X$; (with $\xi_j=1$ if $D_L$ clicks, and $\xi_j=-1$ if $D_R$ clicks). As is evident from Figure 1, the probability of the outcome $\xi_j=1$ in the $j^{th}$ measurement of $X$ depends on the number $\nu$ of electrons bound by the component $P$ of the quantum dot $\overline{P}$ close to the conducting channel. Because of the effect of the Coulomb blockade on the motion of electrons in the channel, the probability of the outcome $\xi_j=1$ increases with increasing $\nu$. It is convenient to introduce an operator 
\begin{equation}
\mathcal{N}:=\mathcal{N}_P \otimes  \mathds{1}_{P'} \otimes \mathds{1}_E
 \end{equation}
 whose eigenvalues $\nu \in \lbrace0,1,2,...,N\rbrace$ correspond to the number of electrons bound by the component $P$. Clearly, at a fixed time, $\mathcal{N}$ commutes with $X$. We assume that the Heisenberg time evolution of $\mathcal{N}$ is very slow, in the sense that the norm of
 \begin{equation}
 \Vert \frac{d}{dt}\mathcal{N}(t)\Vert
 \end{equation}
 is tiny for all $t$, where $\mathcal{N}(t)$ is the Heisenberg-picture operator corresponding to $\mathcal{N}$ at time $t$. We begin by studying the special case where $\mathcal{N}$ $commutes$ with the time evolution, i.e., $\mathcal{N}(t)$ is independent of $t$. One then speaks of a ``non-demolition measurement'' of the number of electrons in $P$. 
 
 For our model, a measurement protocol of length $k$ consists of a sequence 
 \begin{equation}
\underline{\xi}^{(k)} = (\xi_1,\xi_2,...,\xi_k),
 \end{equation}
 $k=1,2,3,....,$ of outcomes, $\xi_j = \pm 1$, of $k$ successful successive projective measurements of $X$; see, e.g., \cite{Griffiths}. (Recall that  unsuccessful measurements of $X$ are not recorded, and we will henceforth ignore them.) Quantum mechanics enables us to calculate the probability, denoted by $\mu_{\omega}(\xi_1,\xi_2,...,\xi_k)$, of a measurement protocol $\underline{\xi}^{(k)}$, given the state $\omega$ of the system $S$ just before measurements of $X$ have started.
 
 The question we wish to address is what a measurement protocol $\underline{\xi}^{(k)}$ tells us about the number, $\nu$, of electrons bound by $P$. The answer is actually fairly obvious. Given $\nu$, one calculates the repulsive potential acting on an electron in the right branch of the channel near the region where $P$ is located. Simple quantum-mechanical calculations then yield the \textit{Born probabilities}, $p(L \vert \nu)$ and $p(R \vert \nu)$, for an electron in the channel to be scattered onto $D_L$ or onto $D_R$, respectively. Clearly,
 \begin{equation}\label{4}
 p(L \vert \nu)+p(R \vert \nu)=1,
 \end{equation}
 and if the detectors function perfectly  $either$ $D_L$ $or$ $D_R$ will click, each time the electron gun has shot an electron into the channel. Given a measurement protocol $\underline{\xi}^{(k)}$, let 
 \begin{equation}\label{5}
 f^{(l,k)}_{L}(\underline{\xi}):= \frac{1}{k-l} \sharp \big \{ j \in \{l+1,...,k\} \mid \xi_j=1  \big\}
 \end{equation}
 be the frequency that an electron -- out of $k-l$ electrons shot into the channel after the $l^{th}$ shot of the gun  -- is scattered onto $D_L$, and let $f^{(l,k)}_R(\underline{\xi})$ be the frequency that an electron is scattered onto $D_R$, with $\underline{\xi}:=(\xi_1, \xi_2,...) = (\xi_i)_{i \in \mathbb{N}}$. Then 
 \begin{equation}
 f^{(l,k)}_{L}(\underline{\xi}) + f^{(l,k)}_{R}(\underline{\xi})=1.
 \end{equation}
 The law of large numbers suggests that the limit 
 \begin{equation} \label{7}
 f^{(\infty)}_{L/R}(\underline{\xi}):= \underset{k \rightarrow \infty }{\lim} f^{(l,k)}_{L/R}(\underline{\xi})
 \end{equation}
exists for any choice of $l$, $\mu_{\omega}$- a.s.. The value of  $f^{(\infty)}_{L/R}(\underline{\xi})$ is independent of $l$ and of the first $m$ measurement outcomes, $\xi_1,...,\xi_m$, for $any$ $m < \infty$. Such a function is called an ``observable at infinity'', or a ``pointer observable''. Now, if the state of $\overline{P}$ happens to be an eigenstate of the operator $\mathcal{N}$ corresponding to the eigenvalue $\nu$ (i.e., $\nu$ electrons are bound by $P$, and tunneling between $P$ and $P' = \overline{P}\setminus P$ is suppressed) then one expects that
\begin{equation}
f^{(\infty)}_{L/R}(\underline{\xi})=p (L/R \vert \nu), 
 \end{equation}
 as quantum theory predicts.
 Once we understand how fast the limit $k \rightarrow \infty$ on the right side of \eqref{7} is approached, and assuming that $p(L \vert \nu)$ separates points of the set $\{1,...,N\}$, meaning that 
 \begin{equation}\label{9}
 \underset{\nu_1 \neq \nu_2}{\min} \vert p(L \vert \nu_1) - p(L \vert \nu_2) \vert =: \kappa >0,
 \end{equation}
 then we will have found the answer to the question what a measurement protocol $\underline{\xi}^{(k)}$ is likely to tell us about the number $\nu$ of electrons bound by $P$, assuming that $k$ is large enough: 
 $\nu$ can be inferred from $f^{(l,k)}_{L}(\underline{\xi})$, (with $l$ fixed), by using \eqref{7}-\eqref{9}, with an error rate that tends to $0$, as $k \rightarrow \infty$!
 
 Actually, things are a little more complicated, but only slightly. If $\overline{P}$ consists of two or more spatially separated components joined by tunneling junctions, with only $P$ close to the channel connecting $D_L$ and $D_R$, then the quantum state of $\overline{P}$ is usually \textit{not} an eigenstate of the operator $\mathcal{N}$ even if that state is pure.  The probability for $\mathcal{N}$ to have the value $\nu \in \{1,2,..., N \}$ -- when measured -- is determined by the quantum state of $\overline{P}$ according to Born's rule.  A rather straightforward argument shows that, after the passage of a large number of electrons through the channel joining $D_L$ and $D_R$, the state of $\overline{P}$ approaches an \textit{incoherent mixture} of eigenstates of $\mathcal{N}$, i.e., of states with a definite number of electrons bound by $P$.  This phenomenon is called \textit{``decoherence''}. Once decoherence has set in, the claim after \eqref{9} becomes valid. 
 
 Our assumption that $\mathcal{N}$ commutes with time evolution, i.e., that a \textit{non-demolition measurement} of $\mathcal{N}$ is carried out, is only justified in the limit where the probability for an electron to tunnel into or out of $P$  tends to $0$. While this tunneling probability may be very small, it is usually non-zero. We should therefore ask what can be said about the time evolution of the eigenvalue $\nu(t)$ of the operator 
 $\mathcal{N}(t)$ (with $t=time$) inferred from long but finite measurement protocols recording the outcome of repeated direct measurements of the observable represented by the operator $X$.  We will show that  the average number of electrons bound by $P$ during time intervals of length $T\gg \tau$, as revealed by very long protocols of measurements of $X$, can be described by some stochastic  jump process, 
 $(\nu(t))_{t=jT,j=0,1,2,...}$, on the spectrum $\{1,...,N \}$ of the operator $\mathcal{N}$, where $T$ is determined by the speed of convergence in \eqref{7} and the amount by which $\mathcal{N}(t)$ varies over the cycle time $\tau$ of the electron gun, with $T \rightarrow \infty$ in  the limit where $\mathcal{N}(t)$ is independent of $t$ (non-demolition measurement of $\mathcal{N}$).  A fairly elementary result on the process $(\nu(j T))_{j=0,1,2,...}$ is presented in  Section \ref{Section4}. More refined results on the properties of the jump process $(\nu(jT))_{j=0,1,2,...}$ will appear in forthcoming work.

We pause to comment on the analogy between the quantum mechanics of sequences of projective measurements, as described above, and the statistical mechanics of classical lattice gases: A measurement protocol $\underline{\xi}^{(k)}$ corresponds to a configuration of particles of the lattice gas; ($\xi_{j}=1 \leftrightarrow$ a particle occupies site $j\in \mathbb{Z}_{+}$, $\xi_{j}=-1\leftrightarrow$ site $j$ is empty). The quantum-mechanical probability, $\mu_{\omega}(\xi_1,\xi_2,...,\xi_k)$, of observing the sequence 
$\underline{\xi}^{(k)} = (\xi_1,...,\xi_k)$ of measurement outcomes corresponds to the probability of the corresponding configuration of particles, as predicted by the Gibbs equilibrium 
measure of the lattice gas. The frequency  $f^{(0,k)}_{L}(\underline{\xi})$ corresponds to the density of particles in the subset $\lbrace 1,2,...,k \rbrace$ of the one-dimensional lattice $\mathbb{Z}_{+}$, and $f^{(\infty)}_{L}(\underline{\xi})$ corresponds to the infinite-volume density of particles. The number $\nu$ of electrons bound by $P$ corresponds to the chemical potential of particles in the lattice gas. The equivalence between the canonical and the grand-canonical ensemble in the thermodynamic limit of the lattice gas corresponds to the statement that 
$f^{(\infty)}_{L}(\underline{\xi})$ has a sharp value, $p(L \vert \nu)$, and that the fluctuations of 
$f^{(0,k)}_{L}(\underline{\xi})$ are of order $O(k^{-\frac{1}{2}})$, as $k\rightarrow \infty$; etc. 
 
 We conclude this introduction by adding a little more precision to the statements sketched above. Let $\Xi$ denote the space of infinitely long measurement protocols (equipped with the $\sigma$-algebra generated by the cylinder sets
 \begin{equation}\label{10}
 \Sigma(\eta_{j_1},..., \eta_{j_m}):= \{ \underline{\xi} \in \Xi \mid \xi_{j_i}= \eta_{j_i}, i=1,...,m \},
 \end{equation} 
 for arbitrary choices of  $(\eta_{j_1},..., \eta_{j_m})$, $m< \infty$). Given $\underline{\xi} \in \Xi$, we denote by $\underline{\xi}^{(l,k)}$ the sequence of measurement outcomes ($\xi_{l+1},\xi_{l+2},..., \xi_k)$, $l<k$, and, apparently, $\underline{\xi}^{(k)}=\underline{\xi}^{(0,k)}$. As has been indicated above, the choice of a state, $\omega$, of the quantum system $S= \overline{P} \vee E$ equips the measure space $\Xi$ with  a probability measure, $\mu_{\omega}$, defined on the $\sigma$-algebra generated by the cylinder sets $\Sigma(\eta_{j_1},...,\eta_{j_m})$. It associates with a protocol $\underline{\xi}^{(k)}$  the generalized Born probability 
 \begin{equation}\label{11}
 \mu_{\omega}(\xi_1,...,\xi_k)
 \end{equation}
 for the first $k$ projective measurements of the observable $X$ (carried out at times $\approx j\tau $, $j=1,...,k$) to yield the values $\xi_1,...,\xi_k$. The unique expression for $\mu_{\omega}( \xi_1,...,\xi_k)$ provided by quantum mechanics was first found by Schwinger in \cite{Schw} and rediscovered by Wigner \cite{Wig} and, most certainly, by very many other theorists. It is recalled in Sect.  \ref{Sec21}, Eq. \eqref{LSW}. For consistency of $\mu_{\omega}$, it is necessary that 
 \begin{equation}
 \sum_{\xi_{k+1}} \mu_{\omega}(\xi_1,...,\xi_k,\xi_{k+1})= \mu_{\omega}(\xi_1,...,\xi_k),
 \end{equation}
 with $\sum_{\xi} \mu_{\omega}(\xi)=1$, so that $\mu_{\omega}$ is a probability measure on $\Xi$. These conditions hold always true in quantum mechanics. One then observes that 
 \begin{equation}\label{13}
 \mu_{\omega}(\Sigma(\eta_{j_1},..., \eta_{j_m}))= \underset{\underset{\xi_{j_i}= \eta_{j_i},i=1,...,m}{\xi_1,...,\xi_{j_m}}}{\sum} \mu_{\omega}(\xi_1,...,\xi_{j_m}).
 \end{equation}
 
 The general theory of direct measurements/observations in quantum mechanics requires that $\mu_{\omega}$ also satisfies a certain ``decoherence condition'', which says, roughly speaking, that subsequent measurements of the observable $X$ are ``independent'' of one another, (which does not mean that the measurements outcomes $\xi_1,...,\xi_k$,... are uncorrelated). For the purposes of this introduction we do not need to know precisely what is meant by decoherence and why it is important; (but see Sect. \ref{Section3} and, e.g., \cite{Griffiths, FS}). A consequence of decoherence is that 
 \begin{equation}\label{14}
 \sum_{\xi_1,...,\xi_k} \mu_{\omega}(\xi_1,...,\xi_k, \xi_{k+1},....,\xi_{k+r})=\mu_{\omega}(\xi_{k+1},...,\xi_{k+r}),
 \end{equation}
 (possibly only up to a tiny error), for arbitrary $k, r \in \mathbb{N} $, where the right side of \eqref{14} is the generalized Born probability for the outcomes $\xi_{k+1},..., \xi_{k+ r}$ in measurements $k+1,...,k+r$, assuming that measurements $1,...,k$ have not been carried out; (there have been no gun shots at times $\tau,2 \tau,...,k\tau)$. 
  
 Next, we define certain subsets of $\Xi$:
 \begin{equation}\label{15}
 \Xi_{\nu}^{(l,k)}( \underline{\varepsilon}):= \{ \underline{\xi} \in \Xi \mid \vert f_{L}^{(l,k)}(\underline{\xi})- p(L \vert \nu) \vert < \varepsilon_{k-l} \}, \qquad l<k,
 \end{equation}
 where, we recall, $ f_{L}^{(l,k)}(\underline{\xi})$ is the frequency of clicks of $D_L$ in measurements 
 $l+1,...,k$, (see \eqref{5}), and $p(L \vert \nu)$ has been defined right above \eqref{4}. In \eqref{15}, 
 $\underline{\varepsilon}=(\varepsilon_m)_{m=1}^{\infty}$ is a suitably chosen sequence of positive numbers converging to $0$, with $\sqrt{m}\varepsilon_{m} \rightarrow \infty$, as 
 $m \rightarrow \infty$. An appropriate choice of $\underline{\varepsilon}$ depends on details of the model used to describe the system $S$; (see Sects. \ref{Section2}, \ref{Section4} and the appendix for simple examples). We set
 \begin{equation} \label{16}
\Xi^{(l,k)}( \underline{\varepsilon})= \bigcup_{\nu=1}^{N} \Xi_{\nu}^{(l,k)}( \underline{\varepsilon}).
 \end{equation}
 The following claim is obvious.
 \vspace{1mm}
 
 \begin{enumerate}[(A)]
 \item If $k-l$ is so large that $\varepsilon_{k-l}< \kappa/2$, where $\kappa= \underset{\nu_1 \neq \nu_2}{\min} \vert p(L \vert \nu_1)- p(L \vert \nu_2) \vert$, see \eqref{9}, then the sets  $\Xi_{1}^{(l,k)}( \underline{\varepsilon}),...,\Xi_{N}^{(l,k)}( \underline{\varepsilon})$ are all disjoint from one another.
 \end{enumerate}
 \vspace{1mm}
 
 Let $\mu_{\omega} (\xi_{l+1},...,\xi_{k} \vert \eta_{j_1},...,\eta_{j_m})$, $(j_m \leq l)$, denote the conditional probability of measurement outcomes $\xi_{l+1},..., \xi_{k}$, given that the outcomes in measurements $j_1<j_2<...<j_m \leq l$ were $\eta_{j_1}$,...,$\eta_{j_m}$, respectively. ( As an aside, we note that 
 \begin{equation}\label{17}
 \mu_{\omega}(\xi_{l+1},..., \xi_{k} \vert \eta_{j_1},...,\eta_{j_m})=\frac{\mu_{\omega}(\Sigma(\eta_{j_1},...,\eta_{j_m},\xi_{l+1},...,\xi_{k}))}{\mu_{\omega}(\Sigma(\eta_{j_1},...,\eta_{j_m}))},
 \end{equation}
 with $\Sigma(...)$ as in \eqref{10}). Under quite general hypotheses on the model used to describe $S$, we will show:
 \begin{enumerate}[(B)]
 \item The conditional probabilities of the complement of the set $\Xi^{(l,k)}( \underline{\varepsilon})$ satisfy a bound 
 \begin{equation} \label{18}
  \mu_{\omega}( (\Xi^{(l,k)}( \underline{\varepsilon}))^c  \vert \eta_{j_1},...,\eta_{j_m}) =1-   \mu_{\omega}( \Xi^{(l,k)}( \underline{\varepsilon})  \vert \eta_{j_1},...,\eta_{j_m})  < \delta_{k-l},
 \end{equation}
 for a sequence $\underline{\delta}=(\delta_n)_{n=1}^{\infty}$ converging to zero, as $n \rightarrow \infty$, uniformly in $\eta_{j_1},...,\eta_{j_m}$, for each fixed $l$.
 \end{enumerate}
 \vspace{1mm}
 
 The bound \eqref{18} means that the frequencies,  $f_{L}^{(l,k)}(\underline{\xi})$, of long, but finite measurement protocols $\xi_{l+1},..., \xi_{k}$, are almost always very close to one of the numbers $p(L \vert 1),..., p(L \vert N)$, independently of what has been measured in earlier measurements. 
 
 Statements (A) and (B) are analogous to the statement that, in the statistical mechanics of classical lattice gases, the canonical and grand canonical ensembles are equivalent in the thermodynamic limit, independently of boundary conditions, and that the limiting Gibbs states are ``mutually singular'', as some thermodynamic parameters, such as the chemical potential of particles in a lattice gas -- the analogue of 
 $\nu \in \{1,...,N \}$ -- is varied. The mathematics used to prove these statements is similar to the one used to prove (B) and is borrowed from methods used in ``hypothesis testing'', (see \cite{Chernoff}, \cite{JaPi12}).
  
We say that a measurable function $h$ is an ``observable at $\infty$'' (or a ``pointer observable'') for $S$ iff $h$ is well defined and bounded except on sets of $\mu_{\omega}$-measure $0$, and the value, $h(\underline{\xi})$, of $h$ at an arbitrary point $\underline{\xi} \in \Xi$ is independent of the first $m$ components, $\underline{\xi}^{(m)}$, of $\underline{\xi}$, for any $m<\infty$, almost surely with respect to the measure $\mu_{\omega}$, where 
$\omega$ is an arbitrary state of $S$. An example of an observable at $\infty$ is the limiting frequency 
 $f_{L}^{(\infty)}(\underline{\xi})$ defined in \eqref{7}. Observables at infinity span a commutative algebra of functions on $\Xi$, denoted by $\mathcal{O}_{\infty}[S]$. Let us assume, for simplicity, that $\mathcal{O}_{\infty}[S]$ is finite-dimensional. Then it is the algebra of functions on a finite set 
 \begin{equation} \label{19}
 \Xi_{\infty} \simeq \{1,...,N_{\infty} \},
 \end{equation} 
for some finite integer $N_{\infty}$. (More generally, $\mathcal{O}_{\infty}[S]$ is the algebra of bounded measurable functions on a compact Hausdorff space, $\Xi_{\infty}$; see e.g. \cite{FS}. But \eqref{19} holds in the model described above, with $N_{\infty}=N$; see also Section 3.) It then follows that, for an arbitrary state $\omega$ of $S$, the measure $\mu_{\omega}$ on 
 $\Xi$ associated to $\omega$ has the unique decomposition
 \begin{equation}\label{20}
 \mu_{\omega}(\cdot)= \sum_{\nu \in \Xi_{\infty}} P_{\omega}(\nu) \mu_{\omega}(\cdot\vert \nu),
 \end{equation}
 where the measures $\mu_{\omega}(\cdot \vert \nu_1)$ and $\mu_{\omega}(\cdot \vert \nu_2)$ are mutually singular, for distinct values of $\nu_1$ and $\nu_2$, and $P_{\omega}(\nu) \geq 0$, for all $\nu$, with 
 $\sum_{\nu} P_{\omega}(\nu)=1$.
 
 The point of these remarks is that, even without knowing, a priori, which physical quantity, $A$, of $S$ can be measured indirectly, in a non-demolition measurement, with the help of very many successive direct measurements of the quantity represented by $X$, the commutative algebra $\mathcal{O}_{\infty}[S]$ of functions measurable at infinity tells us something about $any$ such quantity: The operator $A$ representing it corresponds to a continuous function on $\Xi_{\infty}$. (In the example of the system discussed above, $A=\mathcal{N}$, where $\mathcal{N}$ is the electron number operator associated with $P$.) 
 A point $\nu \in   \Xi_{\infty}$ is henceforth called a \textit{fact} (concerning the system $S$). 
 It can be reconstructed from very long measurement protocols. Results on ``hypothesis testing'' are 
 useful to understand how well functions measurable at infinity  can be approximated by sequences of functions of \textit{finite} measurement protocols, and how rapidly the latter converge to the former, 
 as the length of the measurement protocols tends to $\infty$. (A detailed study of these matters will 
appear in forthcoming work; but see Sects. 2 and 3.) 
 
 We now return to analyzing the system $S$ introduced above; see Figure 1 and Eqs. (1.1) - (1.5). We consider the situation where the operator $\mathcal{N}(t)$ slowly evolves in time $t$. We consider subsets, $\Xi_{\nu_1,...,\nu_p}(k;  \underline{\varepsilon})$, with the property that the frequencies, 
 $f_{L}^{(ik-k,ik)}(\underline{\xi})$, of clicks of the detector $D_L$ in measurements $ik-k+1,..., ik$ is within 
 $\varepsilon_{k}$ from the Born probability $p(L \vert \nu_i)$ introduced above \eqref{4}, for all $i=1,...,p$. If the number operator $\mathcal{N}$ \textit{commutes} with time evolution then the $\mu_{\omega}$-measure of the complement of the set 
 \begin{equation} \label{21}
 \Xi_{p}(k; \underline{\varepsilon}):= \bigcup_{\nu_1,...,\nu_p} \Xi_{\nu_1,...,\nu_p}( k; \underline{\varepsilon})
 \end{equation}
 is tiny ($< p\delta_{k}$, where $(\delta_n)_{n=1}^{\infty}$ is as in \eqref{18}), for an arbitrary state $\omega$ of 
 $S$. If $\mathcal{N}(t)$ evolves in time $t$ very slowly, e.g. if 
 \begin{equation*}
 \underset{0 \leq s \leq \tau}{\max} \| \mathcal{N}(t+s)- \mathcal{N}(t) \| \ll 1,
 \end{equation*}
 then the $\mu_{\omega}$-measure of the complement of $ \Xi_{p}(k; \underline{\varepsilon})$ remains very small:
 \begin{equation}\label{22}
 \mu_{\omega}( \Xi_{p}(k; \underline{\varepsilon})^c) = 1- \mu_{\omega}( \Xi_{p}(k; \underline{\varepsilon}))<2 p \delta_k \ll 1,
 \end{equation}
 provided $k$ is large, but $p$ is not too large. This fact enables us to introduce a measure, $\mathcal{P}_{\omega}$, on finite ``quantum trajectories'' (or ``histories of facts''), $\{ \nu(t_i) =: \nu_{i }\mid t_i=ik \tau\rbrace_{i=1}^{p}$, with $p$ such that $2 p \delta_{k} \ll 1$:
 \begin{equation}\label{23}
 \mathcal{P}_{\omega}(\nu_1,...,\nu_p):= 
 \mu_{\omega}(\Xi_{\nu_1,...,\nu_p}( k; \underline{\varepsilon})).
 \end{equation}
 By \eqref{22},
 \begin{equation} \label{24}
 \sum_{\underset{i=1,...,p}{\nu_{i}=1,...,N}} \mathcal{P}_{\omega}(\nu_1,...,\nu_p) \geq 1-2p \delta_{k}
  \approx 1.
 \end{equation}
 It is interesting to characterize the measures $\mathcal{P}_{\omega}$ more precisely in certain limiting regimes. 
 
 Section 2 contains an outline of general concepts and new results concerning indirect and, in particular, non-demolition measurements. 
 In Section 3, we introduce a family of concrete models describing systems $S$ of the type discussed above; (see \cite{BaBe,BaBeBe2,BaBeTi}). For these models, the general picture sketched above can be translated into rigorous mathematics. 
 In Section 4, we study indirect measurements of quantities evolving slowly in time, and we analyze trajectories of ``quantum jumps'' (described, in a suitable limiting regime, by Markov jump processes).
 
To conclude this introduction, we remark that, in many quantum systems tracked with the help of direct observations of probes, these probes are coherent pulses of field quanta of a wave medium, such as the quantized electromagnetic field, rather than electrons in a conducting channel. An example is the indirect observation of the trajectory of a charged particle ($\overline{P}$) by means of light scattering (E). 
A mathematically precise analysis of such systems is, however, quite involved and is postponed to a future publication.\\

{\bf{Acknowledgements.}} We thank Ph. Blanchard for numerous helpful discussions and encouragement, and M. Bauer and D. Bernard for some correspondence at an early stage of this work and some discussions. M. B. thanks  Jos\'{e} Luis \'{A}ngel P\'{e}rez Garmendia for  helpful discussions.  J.F. thanks the Institut des Hautes Etudes Scientifiques
for hospitality during final stages of work on this paper. The research of B.S. is supported in parts by the r\'{e}gion Lorraine.

 \section{Indirect retrieval of information on quantum systems\\
 through sequences of projective measurements} \label{Section2}
 In this section we explain in general terms how interesting information on a quantum system $S$ can be retrieved from long sequences of projective measurements of a possibly very uninteresting looking property of $S$. Our analysis can be viewed as a contribution to a general theory of ``non-demolition measurements/observations''. In this section, it is carried out within the standard  Hilbert-space formalism of quantum mechanics; (although using a more abstract algebraic formulation of quantum mechanics involving $C^*$-algebras of operators would actually be more natural -- see, e.g., \cite{FS}).  We consider an isolated quantum system $S$  characterized by the  data $(\mathcal{H}_S, \{U(t,s)\}_{t,s \in \mathbb{R}}, \mathcal{O}_S)$,  where 
\begin{itemize}
\item  $\mathcal{H}_{S}$ is the Hilbert space of pure state vectors of $S$.
\item  $\{U(t,s)\}_{t,s \in \mathbb{R}}$ is a family of unitary operators  on $\mathcal{H}_S$ representing time evolution. They satisfy the composition rule
\begin{equation}
U(t,s)=U(t,r)U(r,s), \qquad \forall \text{   } r, s, t \in \mathbb{R}.
\end{equation}
\item $\mathcal{O}_S \subset \mathcal{B}(\mathcal{H}_S)$ is a set of self-adjoint operators on 
$\mathcal{H}_S$ representing physical quantities of $S$ that can be measured/observed  in projective measurements. This set has the property that if an operator $X$ belongs to $\mathcal{O}_S$ and $f$ is a real-valued, bounded, continuous function on $\mathbb{R}$ then $f(X)$ belongs to $\mathcal{O}_S$, too.
 \end{itemize}
In many examples of concrete physical systems $S$, the set $\mathcal{O}_S$ is generated by only a few (often finitely many) operators. To simplify matters, we will assume that $\mathcal{O}_S$ is $abelian$, i.e. that it is generated by commuting, self-adjoint,  bounded operators. In this case, we may identify 
$\mathcal{O}_S$ with the \textit{commutative algebra} generated by all operators in 
 $\mathcal{O}_S$. In this paper we will usually assume that the spectrum, $\sigma_S$, of the commutative algebra 
 $\mathcal{O}_S$ is a \textit{countable} set of points. We then denote by $\pi_{\xi}$ (the orthogonal projection on $\mathcal{H}_S$ corresponding to) the characteristic function of the set $\{\xi\}$, where $\xi$ is an arbitrary point in the spectrum $\sigma_S$ of $\mathcal{O}_S$. The operators  $(\pi_{\xi})_{\xi \in \sigma_S}$ satisfy
 \begin{equation}\label{2.2}
 \sum_{\xi \in \sigma_S} \pi_{\xi} = \mathds{1}_{\mathcal{H}_S}, \qquad \pi_{\xi} \pi_{\xi'}= \delta_{\xi,\xi'} \pi_{\xi}. 
 \end{equation}
All operators in $\mathcal{O}_S$ can be written as linear combinations of the projections $(\pi_{\xi})_{\xi \in \sigma_S}$.
 
 \subsection{Calculus of frequencies and observables at infinity} \label{Sec21}
 We suppose that successive projective measurements of quantities in $\mathcal{O}_S$ are carried out at  times $t_i$, with $t_i < t_{i+1}$, $i=1,2,3, ...$. Given a state $\omega(\cdot)= \mathrm{Tr}(\rho_{\omega} (\cdot))$  on $\mathcal{B}(\mathcal{H}_S)$, where $\rho_{\omega}$ is a density matrix specified at some time $t_0$, the LSW formula (for L\"{u}ders, Schwinger, Wigner; see \cite{Lue}, \cite{Schw}, \cite{Wig}, and \cite{Griffiths}) yields the a-priori probability,  
 $\mu_{\omega}(\xi_1,...,\xi_n)$, for  observing a sequence of measurement results, $\xi_i \in \sigma_S$, observed at times $t_i$, $i=1,2,...,n$, corresponding to ``events'' $(\pi_{\xi_1}(t_1),...,\pi_{\xi_n}(t_n))$, where 
 \begin{equation}
 \pi_{\xi}(t):= U(t_0,t)\pi_{\xi} U(t,t_0).  
 \end{equation}
 Indeed, the LSW formula tells us that
 \begin{equation} \label{LSW}
 \mu_{\omega}(\xi_1,...,\xi_k) = \mathrm{Tr}(\pi_{\xi_k}(t_k) \cdot \cdot \cdot \pi_{\xi_1}(t_1)\rho_{\omega}  \pi_{\xi_1}(t_1)\cdot \cdot \cdot\pi_{\xi_k}(t_k)).
 \end{equation}
\textit{Remark.} It is well known that the LSW formula should only be used if a suitable form of ``decoherence'' holds for histories of repeated projective measurements of quantities represented by operators in 
$\mathcal{O}_S$; see, e.g., \cite{Griffiths}, \cite{FS}, (where a theory of projective measurements has been developed), and references given there. We will not discuss these matters here, because they do not play an important role in the following.\\
It follows from \eqref{LSW} and from \eqref{2.2} that 
 \begin{equation}
 \sum_{\xi_k \in \sigma_S} \mu_{\omega}(\xi_1,...,\xi_k) = \mu_{\omega}(\xi_1,...,\xi_{k-1}), \qquad \text{for arbitrary  } k < \infty\\
\end{equation}
and
 \begin{equation}
  \sum_{\xi \in \sigma_S} \mu_{\omega} (\xi) = 1,
 \end{equation}
and hence the Kolmogorov consistency criterion implies that the functional $\mu_{\omega}$ extends to a probability measure on the  measure space $(\Xi, \Sigma)$, where $\Xi= \bigtimes_{i=1}^{\infty} \sigma_{S}^{(i)}$ is an infinite Cartesian product of copies of the spectrum $\sigma_S$ of 
$\mathcal{O}_S$ and  $\Sigma$ is the $\sigma$-algebra generated by the cylinder sets in $\Xi$. 
The $\sigma$-algebra $\Sigma$  is the Borel $\sigma$-algebra for the compact topological space $\Xi$.  
Let $L^{\infty}(\Xi,\mu_{\omega})$ be the usual space of equivalence classes of  bounded measurable functions with respect to the measure  $\mu_{\omega}$. This space contains a closed subspace, denoted by 
$\mathcal{O}_{\infty}[\omega]$, of bounded measurable functions, $h$, with the property that
$$h(\underline{\xi})=h(\underline{\xi}'), \text{   } \text{almost surely with respect to }    \mu_{\omega},$$
whenever  $\xi_i =\xi'_i$, except for finitely many $i$. Of course, $L^{\infty}(\Xi,\mu_{\omega})$ and 
$\mathcal{O}_{\infty}[\omega]$ are algebras; $\mathcal{O}_{\infty}[\omega]$ is called the ``algebra of observables at infinity'' (or ``algebra of pointer observables'').

Given a function $f \in L^{\infty}(\Xi,\mu_{\omega})$, we define a bounded linear functional on the algebra of observables at infinity by setting
    \begin{equation} \label{2.7}
  \mathbb{E}_{\omega}(f \vert h):= \int_{\Xi} f(\underline{\xi}) h(\underline{\xi})  d\mu_{\omega}(\underline{\xi}), 
    \end{equation}
 for any $h \in \mathcal{O}_{\infty}[\omega]$. If $f \geq 0$, then $   \mathbb{E}_{\omega}(f \vert \cdot )$ is actually a bounded \textit{positive} linear functional on the algebra $\mathcal{O}_{\infty}[\omega]$ and hence corresponds to a  regular  Borel measure  $\mu_{\omega}(f \vert \cdot )$ on the spectrum, denoted by 
 $\Xi_{\infty}$, of $\mathcal{O}_{\infty}[\omega]$. 
 By \eqref{2.7},  $\mathbb{E}_{\omega}(f \vert \chi_{\Delta} )=0$  if   
 $   \mathbb{E}_{\omega}(f\equiv 1 \vert \chi_{\Delta} )=0$, where $\Delta$ is  an arbitrary Borel  subset of 
 $\Xi_{\infty}$. Thus, the measure  $\mu_{\omega}(f \vert \cdot )$ is absolutely continuous with respect to  $\mu_{\omega}(1 \vert \cdot )=: P_{\omega}$. Consequently, the Radon-Nikodym derivative  of $\mu_{\omega}(f \vert \cdot )$ with respect to $\mu_{\omega}(1 \vert \cdot )$ is well-defined, and we have that 
   \begin{equation} \label{2.8}
    \mathbb{E}_{\omega}(f \vert 1) = \int_{\Xi_{\infty}} \frac{ d \mu_{\omega}(f \vert \cdot )}{d \mu_{\omega}(1 \vert \cdot )}(\nu) dP_{\omega}(\nu),
   \end{equation}
   for all positive functions $f \in L^{\infty}(\Xi,\mu_{\omega})$. This formula can  be extended to \textit{arbitrary} real functions, $f$, in $L^{\infty}(\Xi,\mu_{\omega})$ by writing $f=f_{+} - f_{-}$, with $f_{+}, f_{-} \geq 0$, and applying \eqref{2.8} separately to $f_{+}$ and $f_{-}$ and, subsequently to arbitrary complex-valued functions in $L^{\infty}(\Xi,\mu_{\omega})$.  Following the proof of \cite[Th. 5.7]{Rao}, one can find a set $N \subset \Xi_{\infty}$ of measure zero and select for each $f$ a representative of the Radon-Nikodym derivative  such that, on the complement of $N$,  
 $ \frac{ d \mu_{\omega}(f \vert \cdot )}{d \mu_{\omega}(1 \vert \cdot )}(\nu)$ is linear in $f$, positive whenever $f\geq 0$, and bounded by $\Vert f \Vert_{\infty}$. Hence we may rewrite Eq. \eqref{2.8} as
   \begin{equation} \label{2.9}
      \mathbb{E}_{\omega}(f \vert 1) = \int_{\Xi_{\infty}} \left( \int_{\Xi} f(\underline{\xi}) d\mu_{\omega}(\underline{\xi} \vert \nu)\right)  dP_{\omega}(\nu), 
   \end{equation}
    where $\mu_{\omega}( \cdot  \vert \nu)$ is a Borel probability measure on $\Xi$, for $P_{\omega}$- a.e. $\nu \in \Xi_{\infty}$.
 
 Before studying properties of  the ``extremal'' measures $\mu_{\omega}( \cdot  \vert \nu)$, 
 $\nu \in \Xi_{\infty}$, we describe an explicit example of an observable at infinity: We consider a bounded measurable function 
 $f: \sigma_{S}^{\times l} \rightarrow \mathbb{C}$ and define ``frequencies'' $f^{(k)}$, $k=1,2,3,...$,  on $\Xi$ by  
    \begin{equation*}
    f^{(k)}(\underline{\xi}):=\frac{1}{k} \sum_{j=1}^{k} f_{j}(\underline{\xi}), \qquad \text{where}
    \end{equation*} 
    \begin{equation} \label{2.10}
    f_{j}(\underline{\xi}):=f(\xi_j, ... , \xi_{j+l-1}).
    \end{equation}
If  the sequence of functions $(f^{(k)})_{k=1}^{\infty}$ converges $\mu_{\omega}$- a.e., as $k\rightarrow \infty$, then the limiting function, denoted by $f^{(\infty)}$, belongs to 
$\mathcal{O}_{\infty}[\omega]$. Sufficient conditions  for convergence may be derived from ergodicity hypotheses. Assuming, for example, that, for arbitrary $n<\infty$, the ``decoherence condition''
\begin{equation} \label{ergo}
\sum_{\xi_1,...,\xi_n \in \sigma_S} \mu_{\omega}(\xi_1,...,\xi_n, \underline{\xi}^{(> n)})=\mu_{\omega}(\underline{\xi}^{(> n)}),
\end{equation}
holds, the ergodic theorem implies that $f^{(k)}$ converges $\mu_{\omega}$- a.e., as $k\rightarrow \infty$, to an          element in
 $\mathcal{O}_{\infty}[\omega]$. For an autonomous system $S$ (i.e., for one with $U(t,s)=U(t-s)$, for all $t,s \in \mathbb{R}$) with the additional property that there is ``perfect decoherence'', for arbitrary sequences of projective measurements of quantities represented by operators in $\mathcal{O}_S$, Eq. \eqref{ergo} holds.

\subsection{Hypothesis testing and emergence of facts} \label{hypo}
Let  $f: \sigma_{S}^{\times l} \rightarrow \mathbb{C}$ be a function depending on finitely many measurement outcomes, and let $f^{(k)}$ denote the frequency associated with $f$, as defined in \eqref{2.10}. We decompose $\Xi_{\infty}$ into the level sets of the limiting function $f^{(\infty)}$:
\begin{equation} \label{2.12}
\Xi_{\infty} = \bigcup_{\alpha=1}^{N} \Xi_{\infty}^{\alpha},
\end{equation}
where $\Xi_{\infty}^{\alpha}$ is a level set of $f^{(\infty)}$. We henceforth assume that
\begin{equation} \label{2.13}
\text{min}_{\alpha_{1}\neq \alpha_{2}} \vert f^{(\infty)}|_{\Xi_{\infty}^{\alpha_1}} - f^{(\infty)}|_{\Xi_{\infty}^{\alpha_2}}\vert =: \kappa > 0.
\end{equation}
We define
\begin{equation} \label{2.14}
d\mu_{\omega}^{\alpha}:= \int_{\Xi_{\infty}^{\alpha}} d\mu_{\omega}(\cdot\vert \nu) 
dP_{\omega}(\nu).
\end{equation}
Note that the measures $d\mu_{\omega}^{\alpha}$ are mutually singular for different values of $\alpha \in \lbrace 1,...,N \rbrace$.
We introduce the ``fluctuation variables''
\begin{equation} \label{2.15}
\phi^{(k)}_{\alpha}:= \sqrt k \left( f^{(k)} - m_{\alpha} \right), \qquad
\text{where } \text{   } m_{\alpha} := f^{(\infty)}(\underline{\xi})\vert_{\Xi_{\infty}^{\alpha}}.
\end{equation}
Our analysis is based on the assumption that $(d\mu_{\omega}^{\alpha}, \phi^{(k)}_{\alpha})$ satisfies a suitable Central Limit Theorem \cite{esseen, Leb14,Feller}, for all $\alpha=1,2,...,N$. We consider the generating functions, $F_{\alpha}^{(k)}(h)$, of connected correlations of fluctuation variables defined by
\begin{equation} \label{2.16}
F_{\alpha}^{(k)}(h):= \text{ln}\left( \int_{\Xi} d\mu_{\omega}^{\alpha}(\underline{\xi})e^{h\phi_{\alpha}^{(k)} (\underline{\xi})} \right),
\end{equation}
where $h \in \mathbb{R}$. We assume that, for every  $\alpha=1,...,N$, there is an open interval 
$U \subset \mathbb{R}$ containing the origin such that
the sequence  $(F_{\alpha}^{(k)}(h))_{k=1}^{\infty}$ of real-valued functions, together with their first, second and third derivatives in $h$ converge to bounded, continuous functions of $h \in U$, as $k\rightarrow \infty$. Furthermore, we assume that -- to mention a concrete example of a suitable hypothesis --
\begin{itemize}
\item 
$\frac{d}{dh} F^{(k)}_{\alpha}(0) \rightarrow 0,$
\item
 $\frac{d^2}{dh^2} F^{(k)}_{\alpha}(0) \rightarrow \gamma_{\alpha}>0, $
and
\item 
$\frac{d^3}{dh^3} F^{(k)}_{\alpha}(h) \rightarrow 0, \text{ for all  } h \in U,$
\end{itemize}
as $k \rightarrow \infty$, for all $\alpha=1,2,...,N$. Then the fluctuation variables $\phi^{(k)}_{\alpha}$ approach \textit{Gaussian random variables}, $\phi^{(\infty)}_{\alpha}$, with variance $\gamma_{\alpha}$ centered at $0$. This is the contents of the Central Limit Theorem. \\
The hypotheses formulated above can be derived from suitable cluster properties of the cumulants 
$$ \mathbb{E}^{\alpha}_{h,k}(f_{j_1}; ... ;f_{j_m}), \qquad \text{for } m=2,3 \text{   }\text{and all } h\in U,$$
where $\mathbb{E}^{\alpha}_{h,k}$ denotes an expectation with respect to the probability measure
$$ \text{exp}[-F_{\alpha}^{(k)}(h)] \text{  } e^{h\phi_{\alpha}^{(k)}(\underline{\xi})}\text{  } 
d\mu^{\alpha}_{\omega}(\underline{\xi}).$$
To be somewhat more precise, these cumulants should be assumed to be summable in $j_2,...,j_m$, for every fixed $j_1$, with bounds that are uniform in $k$.

We introduce subsets 
\begin{equation} \label{2.16}
\Xi_{\alpha}^{(k)} (\underline{\varepsilon}):= \lbrace \underline{\xi} \vert \text{ }| f^{(k)}(\underline{\xi}) - 
m_{\alpha}\vert < \varepsilon_{k} \rbrace, 
\end{equation}
see Eq. \eqref{2.15}, where $(\varepsilon_k)_{k=1}^{\infty}$ is a sequence of positive numbers converging to $0$, with 
$\varepsilon_{k}\cdot \sqrt k \rightarrow \infty$, as $k \rightarrow \infty$.

The following theorem is an immediate consequence of \eqref{2.13} and of the Central Limit Theorem.

\begin{theorem} \label {2.2.1}
 1. If $k$ is so large that $\varepsilon_{k} < \frac{\kappa}{2}$ \text{   }then 
\begin{equation}
\Xi^{(k)}_{\alpha_1}( \underline{\varepsilon}) \cap \Xi_{\alpha_2}^{(k)}( \underline{\varepsilon}) = \emptyset, 
\end{equation}
for \text{   }  $\alpha_1 \neq \alpha_2$.

2. The measure of the complement of the set $\cup_{\alpha} \Xi_{\alpha}^{(k)}( \underline{\varepsilon})$ tends to $0$, as $k$ tends to $\infty$; i.e.,
\begin{equation}
\mu_{\omega}\left( (\cup_{\alpha}\Xi_{\alpha}^{(k)}( \underline{\varepsilon}))^{c}\right) \rightarrow 0,
\end{equation}
as $k \rightarrow \infty$, at a speed determined by the speed of convergence of the fluctuation variables 
$\phi^{(k)}_{\alpha}$ to Gaussian random variables.
\end{theorem}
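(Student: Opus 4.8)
The plan is to prove the two parts separately, with Part 1 being essentially a triangle-inequality argument and Part 2 being where the Central Limit Theorem enters. For Part 1, suppose $\underline{\xi}\in\Xi^{(k)}_{\alpha_1}(\underline{\varepsilon})\cap\Xi^{(k)}_{\alpha_2}(\underline{\varepsilon})$ with $\alpha_1\neq\alpha_2$. Then by the definition \eqref{2.16} of these sets, $|f^{(k)}(\underline{\xi})-m_{\alpha_1}|<\varepsilon_k$ and $|f^{(k)}(\underline{\xi})-m_{\alpha_2}|<\varepsilon_k$, so by the triangle inequality $|m_{\alpha_1}-m_{\alpha_2}|<2\varepsilon_k<\kappa$ once $\varepsilon_k<\kappa/2$. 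But $m_{\alpha_i}=f^{(\infty)}|_{\Xi_{\infty}^{\alpha_i}}$, so this contradicts the separation hypothesis \eqref{2.13}. Hence the intersection is empty. This step is routine.

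For Part 2, I would first reduce the bound on the complement of $\cup_\alpha\Xi^{(k)}_\alpha(\underline{\varepsilon})$ to a sum of bounds over the individual measures $\mu_\omega^\alpha$ introduced in \eqref{2.14}. Using the decomposition \eqref{2.9} of $\mathbb{E}_\omega(\cdot\,|\,1)$ together with the partition \eqref{2.12} of $\Xi_\infty$ into level sets, one has $\mu_\omega=\sum_\alpha \mu_\omega^\alpha$ (as measures on $\Xi$, by \eqref{2.14}), each $\mu_\omega^\alpha$ being concentrated — up to the convergence $f^{(k)}\to f^{(\infty)}$ $\mu_\omega$-a.e. — on configurations with $f^{(\infty)}=m_\alpha$. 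The key point is then: for fixed $\alpha$,
\begin{equation*}
\mu_\omega^\alpha\!\left(\bigl(\Xi^{(k)}_\alpha(\underline{\varepsilon})\bigr)^c\right)
=\mu_\omega^\alpha\!\left(|f^{(k)}-m_\alpha|\geq\varepsilon_k\right)
=\mu_\omega^\alpha\!\left(|\phi^{(k)}_\alpha|\geq\sqrt{k}\,\varepsilon_k\right)\longrightarrow 0,
\end{equation*}
because $\sqrt{k}\,\varepsilon_k\to\infty$ and, by the assumed Central Limit Theorem for $(d\mu_\omega^\alpha,\phi^{(k)}_\alpha)$, the variables $\phi^{(k)}_\alpha$ converge in distribution to a Gaussian $\phi^{(\infty)}_\alpha$ with finite variance $\gamma_\alpha$; tightness then forces the tail probability beyond a threshold tending to infinity to vanish. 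I would make the rate explicit via the Berry–Esseen-type control embedded in the hypothesis that $F^{(k)}_\alpha(h)$ and its first three derivatives converge on a neighborhood $U$ of the origin: this gives, for $h\in U$, an exponential Chebyshev/Markov bound $\mu_\omega^\alpha(|\phi^{(k)}_\alpha|\geq t)\leq 2\exp(F^{(k)}_\alpha(h)-ht)\approx 2\exp(\tfrac12\gamma_\alpha h^2-ht)$, optimized in $h$, which decays like $\exp(-ct^2)$ for $t=\sqrt{k}\,\varepsilon_k$ in the regime where $h=t/\gamma_\alpha$ stays inside $U$ — i.e. as long as $\varepsilon_k$ is not too large — and otherwise decays at least at the rate dictated by the boundary behavior of $F^{(k)}_\alpha$ on $U$. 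Summing the $N$ (finitely many) bounds gives $\mu_\omega\bigl((\cup_\alpha\Xi^{(k)}_\alpha(\underline{\varepsilon}))^c\bigr)\to 0$ at the stated speed.

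The main obstacle I anticipate is not any single estimate but the bookkeeping around the conditional decomposition: one must check that the event $\{|f^{(k)}-m_\alpha|\geq\varepsilon_k\}$, which is defined pathwise on $\Xi$, is correctly matched against the measure $\mu_\omega^\alpha$ that "lives on" the level set $\Xi_\infty^\alpha$ only through the disintegration \eqref{2.9}, and that the $\mu_\omega$-null set on which $f^{(k)}\not\to f^{(\infty)}$ does not interfere. Concretely, for the complement of $\cup_\alpha\Xi^{(k)}_\alpha(\underline{\varepsilon})$ one needs $\mu_\omega\bigl(\{|f^{(k)}-m_\alpha|\geq\varepsilon_k \text{ for all }\alpha\}\bigr)\leq\sum_\alpha\mu_\omega^\alpha\bigl(|f^{(k)}-m_\alpha|\geq\varepsilon_k\bigr)+o(1)$, which follows because outside a null set every $\underline{\xi}$ belongs to exactly one level set $\Xi_\infty^\alpha$ (via $f^{(\infty)}$) and is governed there by $\mu_\omega^\alpha$. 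Once this is set up cleanly, Part 2 is an application of the CLT hypothesis term by term, and the rate claim is just the optimized exponential-moment bound above.
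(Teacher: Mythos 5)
Your proof is correct and takes the route the paper intends: the paper gives no written proof, asserting only that the theorem is ``an immediate consequence of \eqref{2.13} and of the Central Limit Theorem'', and your argument (triangle inequality for Part 1; the decomposition $\mu_{\omega}=\sum_{\alpha}\mu_{\omega}^{\alpha}$ together with the CLT hypothesis and $\sqrt{k}\,\varepsilon_{k}\rightarrow\infty$ for Part 2) is exactly that consequence spelled out. The bookkeeping you worry about at the end is not actually needed: since $(\cup_{\alpha'}\Xi^{(k)}_{\alpha'}(\underline{\varepsilon}))^{c}\subseteq(\Xi^{(k)}_{\alpha}(\underline{\varepsilon}))^{c}$ for each $\alpha$, one has $\mu_{\omega}\bigl((\cup_{\alpha'}\Xi^{(k)}_{\alpha'}(\underline{\varepsilon}))^{c}\bigr)=\sum_{\alpha}\mu_{\omega}^{\alpha}\bigl((\cup_{\alpha'}\Xi^{(k)}_{\alpha'}(\underline{\varepsilon}))^{c}\bigr)\leq\sum_{\alpha}\mu_{\omega}^{\alpha}\bigl(|\phi^{(k)}_{\alpha}|\geq\sqrt{k}\,\varepsilon_{k}\bigr)$ exactly, with no $o(1)$ correction.
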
 

This theorem tells us that the value, $\alpha$, of $f^{(\infty)}$ can be inferred from long, but finite measurement protocols, $\underline{\xi}^{(k)}$, with an error that tends to $0$, as $k$ approaches $\infty$. The value of $f^{(\infty)}$ will henceforth be called a \textit{``fact''}. Apparently, long measurement protocols reveal \textit{facts} about the state of the system, and the chance of being mistaken about which fact is emerging tends to $0$, as the length $k$ of the measurement protocol tends to $\infty$.

The a-priori probability of observing an arbitrary measurement protocol $\underline{\xi}^{(k)}$ corresponding to the value $\alpha$ of $f^{(\infty)}$ approaches the value
\begin{equation} \label{2.20}
\mathcal{P}_{\omega}(\alpha):= \int_{\Xi^{\alpha}_{\infty}} dP_{\omega}(\nu)
\end{equation}
This is \textit{Born's Rule}.

\textit{Remark.} With the frequency $f^{(k)}$ we can associate an operator, $A^{(k)}$, acting on the space of density matrices and defined by 
$$ A^{(k)}(\rho) := \frac{1}{k} \sum_{j=1}^{k} \sum_{\xi_j,...,\xi_{j+l-1}} f(\xi_j,...,\xi_{j+l-1}) \text{  } \pi_{\xi_{j+l-1}}
(t_{j+l-1})\cdot \cdot \cdot \pi_{\xi_j}(t_j) \text{   } \rho \text{   }\pi_{\xi_j}(t_j) \cdot \cdot \cdot \pi_{\xi_{j+l-1}}(t_{j+l-1}),$$
where $\rho$ is an arbitrary density matrix. Viewed as an operator acting on density matrices, an ``observable at infinity'' is obtained by letting $k$ tend to $\infty$ in the above expression. In concrete examples of quantum systems $S$, it is often possible to identify an ``observable at infinity'' with an explicit operator, 
$\mathcal{N}$, (with $|\sigma(\mathcal{N})| = N$), acting on the Hilbert space $\mathcal{H}_{S}$ of the system that has the property that it commutes with time evolution; i.e., $\mathcal{N}(t)\equiv U(t_0,t)\mathcal{N}U(t,t_0)=\mathcal{N}$ \textit{independent} of time $t$. One then speaks of a \textit{non-demolition measurement} of the quantity represented by the operator $\mathcal{N}$.

To conclude this section, we remark that, in practice, the algebra of observables at infinity is most often \textit{trivial} (i.e., it consists of multiples of the identity), because most ``facts'' about a concrete quantum system $S$ tend to actually \textit{vary with time}, and the revelation of such facts, while usually accomplished through long sequences of indirect projective measurements, does \textit{not} correspond to a ``non-demolition measurement'', in the strict sense of the word. In this situation, it remains meaningful to attempt to construct a suitable decomposition of the support of the measure $d\mu_{\omega}$ into subsets corresponding to ``facts'' that are only valid (with \textit{very high} probability) during some long, but finite interval of time. To be concrete, we define functions, $f^{(k,k+r)}$, where $r=0,1,2,...$ is a coarse-grained ``time variable'', as follows:
\begin{equation}\label{2.21}
f^{(k,k+r)}:= \frac{1}{r}\sum_{j=  k +1}^{k+ r}f_{j}(\underline{\xi}),
\end{equation}
see \eqref{2.10}. We introduce sets
\begin{equation} \label{2.22}
\Xi_{\alpha}^{(k,k+r)}( \underline{\varepsilon}):= \lbrace \underline{\xi} \mid  \text{ } \vert f^{(k,k+r)}(\underline{\xi}) - m_{\alpha} \vert < \varepsilon_{r} \rbrace,
\end{equation}
where $m_{\alpha}$ is chosen suitably -- it replaces the quantity 
$f^{(\infty)}(\underline{\xi})|_{\Xi^{\infty}_{\alpha}}$, 
$\alpha=1,...,N$ in Eq. \eqref{2.16} -- and the sequence $(\varepsilon_n)_{n=1}^{\infty}$ tends to $0$, as 
$n\rightarrow \infty$, as above. We assume that 
$$ \text{min}_{\alpha_1\not= \alpha_2}\vert m_{\alpha_1}-m_{\alpha_2} \vert \geq \kappa >0.$$
Clearly, 
$$\Xi_{\alpha_1}^{(k,k+r)} ( \underline{\varepsilon}) \cap \Xi_{\alpha_2}^{(k,k+r)} ( \underline{\varepsilon})  = 
\emptyset$$
if $r$ is so large that $\varepsilon_{r} < \frac{\kappa}{2}$. Next, we introduce the sets
$$\Xi_{\alpha_1,....,\alpha_p}(r;\underline{\varepsilon}):= 
\overset{p}{\underset{j=1}{\bigcap}} \text{ }\Xi_{\alpha_j}^{(j r-r, jr)}( \underline{\varepsilon}).$$
We say that the sequence $(\alpha_1,...,\alpha_p)$ represents a \textit{``history of (plausible) facts''} iff, for a suitable choice of positive integers $r$ and $p\geq1$, the set 
$\cup_{\alpha_1,...,\alpha_{p}}\Xi_{\alpha_1,....,\alpha_p}(r;\underline{\varepsilon})$ has essentially full 
\mbox{$\mu_{\omega}$- measure}, i.e.,
\begin{equation} \label{2.23}
\mu_{\omega}\big(\bigcup_{\alpha_1,...,\alpha_{p}}\Xi_{\alpha_1,....,\alpha_p}(r;\underline{\varepsilon})\big) \geq 1 - \delta_{p,r},
\end{equation}
where $\delta_{p ,r}$ is so small that, for all practical purposes, it is indistinguishable from $0$. If \eqref{2.23} holds then, with probability very close to $1$, \textit{one} of the histories 
$(\alpha_1,...,\alpha_p)$ will take place,
and the probability to observe the history $(\alpha_1,...,\alpha_p)$ is predicted to be given by
\begin{equation} \label{2.24}
\mathcal{P}_{\omega}(\alpha_1,...,\alpha_p) = \mu_{\omega}(\Xi_{\alpha_1,....,\alpha_p}
(r;\underline{\varepsilon})),
\text{   } \text{with } \sum_{\alpha_1,...,\alpha_p}\mathcal{P}_{\omega}(\alpha_1,...,\alpha_p)\approx 1,
\end{equation}
which is a generalization of Born's Rule, see \eqref{2.20}.

In Section \ref{Section4}, we consider ``interesting'' physical quantities of a quantum system $S$ that vary very slowly in time and can be measured indirectly through long, but strictly \textit{finite} sequences of projective measurements of some rather ``uninteresting'' quantities.

To return to the analogy between the quantum systems discussed above and one-dimensional lattice gases (or spin chains) alluded to in the introduction, we point out that ``time'' in the quantum systems becomes ``space'' in the lattice gases , measurement protocols, $\underline{\xi}$, are analogous to configurations of (various species of) particles, the measures $\mu_{\omega}$ are analogous to Gibbs equilibrium measures of lattice gases, the frequencies $f^{(\infty)}$ are analogous to particle densities, with $\alpha$ playing the role of a chemical potential, Theorem \ref{2.2.1} is analogous to results concerning the equivalence of the canonical and the grand-canonical ensemble in the thermodynamic limit, $k \rightarrow \infty$, of lattice gases.
In the theory of lattice gases, it is well known that the Gibbs measures corresponding to different particle densities are mutually singular, (i.e., have disjoint supports). The material discussed between Eqs. \eqref{2.21} and \eqref{2.24} would correspond to a lattice gas with a space-dependent chemical potential
and hence a spatially variable particle density.

 \subsection{Exchangeable measures on the space of measurement protocols} \label{exch}
We conclude Section 2 by considering a special case of the general theory of non-demolition measurements developed in the last two subsections. We assume that the probabilities of results of successive projective measurements of observables in $\mathcal{O}_S$ are independent of the order in which these results are observed. To be more concrete, we imagine that, at each time of measurement, the quantities corresponding to the operators in $\mathcal{O}_S$ are measured with the help of some ``probe'' temporarily interacting with the system, and that the probes used at different measurement times  are different from each other and are prepared and measured independently of each other. A specific model of this type has been described in the Introduction and will be studied in more detail in the next section.
Mathematically, the above assumption is translated into the following property of the measures 
$\mu_{\omega}$:
\begin{equation} \label{2.25}
\mu_{\omega}(\xi_{\pi(1)},...,\xi_{\pi(k)}) \text{   } \text{is } independent \text{    }\text{of the permutation   } \pi,
\end{equation}
for arbitrary permutations, $\pi$, of $\lbrace 1,...,k\rbrace$, for all $k=2,3,...$ .
Measures satisfying \eqref{2.25} are called ``exchangeable''. Such measures have been characterized completely by de Finetti and his followers \cite{Finetti,Hewitt,Aldou}, who have shown that such measures can be written as convex combinations of product measures. Thus, if $\mu_{\omega}$ is exchangeable then there exists a measure space 
$\Xi_{\infty}$, a measure $dP_{\omega}$ on $\Xi_{\infty}$ and probability distributions $p(\cdot|\nu)$ on
$\sigma_S$,
$\nu \in \Xi_{\infty}$, such that
\begin{equation} \label{2.26}
\mu_{\omega}(\xi_1,...,\xi_k) = \int_{\Xi_{\infty}} dP_{\omega}(\nu)\text{    } \prod_{j=1}^{k} p(\xi_j|\nu).
\end{equation}
This equation is a concrete instance of the general representation \eqref{2.9}. It follows from Kolmogorov's zero-one law that $\Xi_\infty$ appearing in the decomposition is indeed the spectrum of the tail algebra. For exchangeable measures, we introduce the frequencies
$$f_{\xi}^{(k)}(\underline{\xi}):= \frac{1}{k} \sum_{j=1}^{k} \delta_{\xi_j,\xi}, \qquad \text{and  } \text{   }\phi_{\nu, \xi}^{(k)}(\underline{\xi}) :=\sqrt {k} \big( f_{\xi}^{(k)}(\underline{\xi}) - p(\xi \vert \nu) \big). $$
Using the Law of Large Numbers and the Central Limit Theorem in the form due to Berry and Ess\'{e}en, 
see, e.g., \cite{Feller}, it is straightforward to prove an explicit (quantitative) version of Theorem 2.2.1.

In subsequent sections, we study families of models of systems $S$ giving rise to measures $\mu_{\omega}$ that are exchangeable for \textit{arbitrary} states $\omega$ of $S$. Such models have the advantage of being quite easy to analyze mathematically.

\section{A family of simple models of quantum systems} \label{Section3}

\subsection{Description of non-demolition measurements}

\label{des_non_dem}

A general mathematical formalism to describe the statistics of measurement outcomes and the reduced evolution in an indirect measurement was developed by Kraus \cite{Kraus}, (see also \cite{Holevo}). Here we briefly recall some elements of this formalism. We consider a system $S$ composed of a subsystem $\overline{P}$ of interest to an experimentalist and a subsystem $E$ consisting of a measurement device. We suppose that $\mathcal{H}_S= \mathcal{H}_{\overline{P}} \otimes  \mathcal{H}_{E}$. The reduced time evolution of the subsystem $\bar{P}$, conditioned upon a measurement event $\wp \subset \sigma_S$, is encoded in a completely positive map $\Phi_{*}(\wp)$ acting on $\mathcal{B}(\mathcal{H}_{\overline{P}})$. To give rise to a probability measure on a space of measurement events this family of maps must be countably additive, (in particular, for disjoint sets 
$\wp_1$ and $\wp_2$, $\Phi_*(\wp_1 \cup \wp_2) =\Phi_*(\wp_1) + \Phi_*(\wp_2)$), and  
$\Phi_*(\sigma_S)$ must be the identity map on $\mathcal{B}(\mathcal{H}_{\overline{P}})$, (i.e.,
$\Phi_*(\sigma_S)[\mathds{1}] = \mathds{1}$). The maps $\Phi_*(\cdot)$ describe the evolution of ``observables'' of the subsystem $\overline{P}$ in the presence of measurement events (i.e., events happening in the device $E$); the evolution of states of $\overline{P}$ is described by
 dual maps, $\Phi(\wp) \equiv (\Phi_*(\wp))^*$.

For simplicity we assume that $\sigma_S$ is finite and that $\dim(\mathcal{H}_{\overline{P}})< \infty$. Hence $\Phi_*(\cdot)$ is completely determined by the maps $\Phi_*(\{\xi\}) =: \Phi_{* \xi}$, with $\xi \in \sigma_S$; in particular, $\Phi_*(\sigma_S) = \sum_{\xi \in \sigma_S} \Phi_{*\xi}$. For a given state, $\omega(\cdot) = \tr(\rho^{(0)}\, (\cdot))$, of $\overline{P}$, the probability to observe a measurement result $\xi$ is given by $\omega(\Phi_{*\xi}[\mathds{1}])$, and the reduced non-normalized posterior state of $\overline{P}$, given that $\xi$ has been observed, is 
$\Phi_\xi [\rho^{(0)}]$. For a sequence of independent  identical probes, as discussed in Section \ref{exch},  the probability to observe the   measurement protocol $\underline{\xi}^{(k)} = (\xi_1, \, \dots,\,\xi_k)$ is given by 
\begin{equation}
\label{3.1.1}
\mu_\omega(\underline{\xi}^{(k)}) = \omega(\Phi_{* \xi_1} \circ \dots \circ \Phi_{* \xi_k}[ \mathds{1}]),
\end{equation}
see  \eqref{LSW}.
The corresponding reduced state, $\rho^{(k)}(\underline{\xi}^{(k)})$, of $\overline{P}$ is then given by
\begin{equation}
\label{3.1.2}
\rho^{(k)}(\underline{\xi}^{(k)}) = \frac{\Phi_{\xi_k} \circ \dots \circ \Phi_{\xi_1} [\rho^{(0)}]}{\tr(\Phi_{\xi_k} \circ \dots \circ \Phi_{\xi_1} [\rho^{(0)}])}. 
\end{equation}
Our goal is to find the tail algebras for all such i.i.d measurement processes, i.e., to determine the ``facts'' emerging, in a given experiment, from projective observations of very (infinitely) many identical, independent probes that have interacted with the subsystem $\overline{P}$. In this paper we wish to determine these tail algebras in the setting described in Sections 1 and 2.

In a non-demolition measurement, the mere presence of a measurement device does not affect the probabilities of sequences of measurement results. Mathematically, this translates into the condition
\begin{equation}
\label{3.1.3}
\sum_{\xi_{j} \in \sigma_S} \mu_\omega(\xi_1, \dots,\xi_{j-1},\xi_j,\xi_{j+1}, \dots , \xi_k) = \mu_\omega(\xi_1, \dots,\xi_{j-1}, \xi_{j+1}, \dots , \xi_k).
\end{equation}
 Upon inspection of Eq.~(\ref{3.1.1}), we see that a suitable assumption that guarantees that this condition holds is
\begin{equation}
\label{3.1.4}
\Phi_{*\xi} \circ \Phi_{*\xi'}  = \Phi_{*\xi'} \circ  \Phi_{*\xi}, \quad \mbox{for all} \quad \xi,\,\xi' \in \sigma_S.
\end{equation}
(A more detailed discussion of such hypotheses will be presented elsewhere). 
Condition \eqref{3.1.4} immediately implies that the measures $\mu_{\omega}$ are exchangeable, see  \eqref{2.25}, and hence $\mu_{\omega}$ has a de Finetti decomposition; see \eqref{2.26}. Assuming that condition \eqref{3.1.4} is satisfied,  we  can fully describe the ``algebra of pointer observables''  under the  technical assumption that  $\Phi(\sigma_S)$ has a faithful stationary state. In this particular case, we show in Lemma~\ref{lem:ins2} that 
\begin{equation}
\label{3.1.5}
[\Phi_{*\xi}[\mathds{1}], \Phi_{*\xi'}[\mathds{1}]] = 0,  \quad \mbox{for all} \quad \xi,\,\xi' \in \sigma_S.
\end{equation}
We denote by $\{\Pi_{\nu} \vert \Pi_{\nu} \in B(\mathcal{H}_{\overline{P}}), \nu \in \Xi_{\infty} \}$, the joint spectral projections of the commuting family 
$\{\Phi_{*\xi}[\mathds{1}]\}_{\xi \in \sigma_S}$. The conditional probabilities $p(\xi|\nu)$ appearing in the decomposition (\ref{2.26}) are the eigenvalues of $\Phi_{* \xi}$ corresponding to the eigenprojections 
$\Pi_\nu$, i.e., 
\begin{equation}
\label{3.1.8}
\Phi_{* \xi} [\mathds{1}] = \sum_{\nu \in \Xi_\infty} p(\xi|\nu) \Pi_\nu,
\end{equation}
and the probabilities $p(\cdot \vert \nu)$ on $\sigma_S$ are mutually distinct. Eq. (\ref{2.26}) then takes the form
\begin{equation}
\label{3.2}
\mu_{\omega}(\underline{\xi}) = \sum_{\nu \in \Xi_\infty} \omega(\Pi_\nu)\, \mu_{\omega}(\underline{\xi} \vert \nu), \quad \text{with } \quad \mu_{\omega} (\xi_1,...,\xi_k \vert \nu)= \prod_{j=1}^{k} p(\xi_j \vert \nu).
\end{equation}
 The tail algebra is  isomorphic to the commutative algebra generated by the orthogonal projections 
 $\lbrace \Pi_{\nu} \rbrace_{\nu \in \Xi_{\infty}}$, i.e., by all operators of the form  $\sum_\nu f(\nu) \Pi_\nu$,  where $f : \Xi_{\infty} \rightarrow \mathbb{R}$ is an arbitrary bounded measurable function on $\Xi_{\infty}$. 
In \cite{Mass}, Maassen and K{\"u}mmerer have considered the special case where
\begin{equation}
\label{3.1.9}
\Phi_\xi[ \rho] = C_\xi \rho C_\xi^*,
\end{equation}
$C_\xi$  being the operators on $\mathcal{H}_{\overline{P}}$ that appear in a Kraus decomposition,
\begin{equation}
\label{3.1.7}
\Phi(\sigma_s) [\rho] = \sum_{\xi \in \sigma_S} C_\xi \rho C^*_\xi, \quad \sum_{\xi \in \sigma_S} C^*_\xi C_\xi = \mathds{1}.
\end{equation}
A  special feature of this choice is that each map $\Phi_\xi$ maps pure states to pure states. The family 
$\{C_{\xi} \}_{\xi \in \sigma_S}$ of operators satisfies \eqref{3.1.4} and \eqref{3.1.8},  provided  that $C_\xi = \sum_{\nu \in \Xi_{\infty}} c_\xi(\nu) \Pi_\nu$, with $|c_{\cdot}(\nu)| \neq |c_{\cdot}(\nu')|$ if $\nu \neq \nu'$. A straightforward calculation then shows that
$$
\Phi_{* \xi}[\mathds{1}] = \sum_{\nu \in \Xi_{\infty}} p(\xi|\nu) \Pi_\nu, \quad \text{where    } p(\xi | \nu) = |c_\xi(\nu)|^2.
$$
The normalization condition in  (\ref{3.1.7}) ensures that $p(\cdot|\nu)$ is a probability distribution.
In a non-demolition experiment of the kind described in the Introduction, the coefficients $c_\xi(\nu)$ can be interpreted as transition amplitudes for a probe (always prepared in the same initial state, $\psi_0$) on which a projective measurement is performed after it has interacted with $\overline{P}$, with the measurement result corresponding to a final state $\psi_{\xi}$. The Hamiltonian describing the time evolution of the probe interacting with the system $\overline{P}$ depends on the point $\nu \in \Xi_{\infty}$; i.e., the joint Hamiltonian of the system $\overline{P}$ and a single probe has the form
$
H = \sum_{\nu \in \Xi_{\infty}} \Pi_\nu \otimes H_\nu,
$
which leads to a joint time evolution (over one measurement cycle) of the form
\begin{equation} \label{Uhu}
U = \sum_{\nu \in \Xi_{\infty}} \Pi_\nu \otimes U_\nu.
\end{equation}
The coefficients $c_{\xi}(\nu)$ are then given by $c_{\xi}(\nu) = \langle \psi_{\xi}, U_{\nu} \psi_{0} \rangle$, where $\psi_{0}$ is the initial state of the probe and $\psi_{\xi}$ is its state after a projective measurement on the probe sub-system has yielded the result $\xi$.

\subsection{Emergence of facts in non-demolition measurements} \label{nondem}
In this section we proceed with the analysis of the measures $\mu_{\omega}$ given in Eqs.~(\ref{2.26}) and (\ref{3.2}), under the additional assumption that the spectrum $\Xi_\infty$ consists of finitely many points and, hence, can be identified with a subset of $\mathbb{R}$. We assume that   $\sigma_S$  is  countable and allow $\mathcal{H}_{\overline{P}}$ to be infinite dimensional.  It is well known that the product measures appearing in Eqs. ~(\ref{2.26}) and (\ref{3.2}) satisfy large deviation estimates. For every fixed $\xi \in \sigma_S$, we introduce the random variable 
\begin{equation}\label{alf3rep} 
f_{\xi}^{(k)}( \underline{\xi}) \equiv f_{\xi}^{(0,k)}(\underline{\xi}) : = \frac{1}{k} \# \Big \{ j \in \{1, \cdots, k \} \: \Big | \: \xi_j = \xi  \Big \}  
\end{equation}
on $(\Xi,\Sigma)$. Given $\nu \in \Xi_{\infty}$, the empirical measures $\sum_{\xi \in \sigma_S} f_{\xi}^{(k)} \delta_{\xi}$  satisfy large deviation estimates and converge to $\sum_{\xi \in \sigma_S} p(\xi \vert \nu) \delta_{\xi}$ exponentially fast; see Theorem  \ref{bs}, below. Before stating this result more precisely, we introduce some notation and recall some preliminary facts.

We denote by $\mathcal{M}(\sigma_S)$ the polish space of probability measures on 
$\sigma_S$; (we note, in passing, that $\sigma_S$ is a polish space, because every countable compact Hausdorff space is polish). Convergence in  $\mathcal{M}(\sigma_S)$  is equivalent to weak convergence, i.e., to convergence against bounded continuous functions on $\sigma_S$.  Let $p= \sum_{\xi \in \sigma_S} p(\xi) \delta_{\xi} \in \mathcal{M}(\sigma_S)$  be an arbitrary probability measure on $\sigma_S$. We denote by 
  \begin{equation} \label{pnu}
  p_{\nu}= \sum_{\xi \in \sigma_S} p(\xi \vert \nu) \delta_{\xi}
  \end{equation}
   the probability measure on $\sigma_S$ with density $p(\xi \vert \nu)$. The \textit{relative entropy} of the probability measure $p$ with respect to the measure $p_{\nu}$ is defined as 
\begin{align} \label{ent}
I_{p_{\nu}}(p) : = \begin{cases} \sum_{\xi \in \sigma_S} p(\xi) \ln \frac{p(\xi)}{p(\xi\vert \nu)} , & \text{if} \:  p \ll p_{\nu}, \: 
\\ \infty, & \text{otherwise}.  
\end{cases}
\end{align}
With an arbitrary subset $K \subset \mathcal{M}(\sigma_S) $ of probability measures on $\sigma_S$ we associate the quantity
\begin{align}\label{entGamma}
I_{p_{\nu}}(K) : =  \inf_{p \in K} I_{p_{\nu}}(p).
\end{align}
  It is easy to see that $I_{p_{\nu}}  $ is a convex function of $p$ and that $I_{p_{\nu}}(p) \geq 0  $, for every $p \in \mathcal{M}(\sigma_S)$, with equality iff  $ p=p_{\nu} $. The following theorem is  well known; (see, e.g., \cite{dembo,Ellis}).
  
\begin{theorem}[Boltzmann, Sanov] \label{bs} 
For any closed subset $K$ of $\mathcal{M}(\sigma_S)$ not containing $p_{\nu}$, one has that  $ I_{p_{\nu}}(K)>0$, and there is a constant $C_{K}> 0$ such that
\begin{equation} \label{largee}
\mu_{\omega}\Big(\big \{ \underline \xi \: \big | \:  \sum_{\xi' \in \sigma_S} f_{\xi'}^{(k)}(\underline{\xi}) \delta_{\xi'}  \in K \big \}  \Big \vert \nu \Big) \leq C_{K} \exp(-k  I_{p_{\nu}}(K)/2),
\end{equation}
for all $k \in \mathbb{N}$.
\end{theorem}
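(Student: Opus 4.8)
The plan is to establish \eqref{largee} via the standard exponential-Chebyshev (Cramér--Chernoff) method applied to the i.i.d.\ measure $\mu_{\omega}(\cdot\mid\nu)=\bigotimes_{j}p_{\nu}$ from \eqref{3.2}, combined with the well-known fact that Sanov's rate function on the finite-dimensional simplex $\mathcal{M}(\sigma_S)$ is exactly the relative entropy $I_{p_{\nu}}$. First I would dispose of the easy claim $I_{p_{\nu}}(K)>0$: since $K$ is closed and does not contain $p_{\nu}$, and since $I_{p_{\nu}}$ is convex, lower semicontinuous, nonnegative, and vanishes only at $p_{\nu}$ (all noted just before the theorem), the infimum over the closed set $K$ is attained (one may intersect $K$ with the compact simplex of $\mathcal{M}(\sigma_S)$ when $\sigma_S$ is finite, or argue by lower semicontinuity on the compact one-point compactification in the countable case) and is therefore strictly positive.

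For the exponential bound itself I would first treat the case $|\sigma_S|=n<\infty$. Write $L^{(k)}:=\sum_{\xi'}f^{(k)}_{\xi'}(\underline{\xi})\,\delta_{\xi'}$ for the empirical measure. The key inequality is the ``method of types'' / Chernoff estimate: for any $p\in\mathcal{M}(\sigma_S)$ the $\mu_{\omega}(\cdot\mid\nu)$-probability that $L^{(k)}$ equals a fixed type $q$ (a measure with $kq(\xi)\in\mathbb{Z}$) is bounded by $\exp(-k\,I_{p_\nu}(q))$, because the multinomial weight $\binom{k}{kq(\cdot)}\prod_\xi p(\xi\mid\nu)^{kq(\xi)}\le \exp(k H(q)-k\sum_\xi q(\xi)\ln\frac{1}{p(\xi\mid\nu)})=\exp(-kI_{p_\nu}(q))$, using $\binom{k}{kq(\cdot)}\le e^{kH(q)}$. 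Summing over the at most $(k+1)^n$ types lying in $K$, and bounding each of them by $\exp(-k\,I_{p_\nu}(K))$, gives
\begin{equation*}
\mu_{\omega}\big(\{\underline{\xi}\mid L^{(k)}\in K\}\,\big|\,\nu\big)\le (k+1)^{n}\exp\!\big(-k\,I_{p_\nu}(K)\big).
\end{equation*}
The polynomial prefactor $(k+1)^n$ is absorbed into the exponential at the cost of replacing $I_{p_\nu}(K)$ by $I_{p_\nu}(K)/2$: indeed $(k+1)^n\le \exp\!\big(k\,I_{p_\nu}(K)/2\big)$ for all $k\ge k_0(n,I_{p_\nu}(K))$, and for the finitely many $k<k_0$ one simply enlarges the constant $C_K$. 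This yields \eqref{largee} with the factor $1/2$ exactly as stated, which explains why the authors wrote $I_{p_\nu}(K)/2$ rather than $I_{p_\nu}(K)$.

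To handle countable $\sigma_S$ one reduces to the finite case: fix $\varepsilon>0$, choose a finite set $F\subset\sigma_S$ with $p_\nu(\sigma_S\setminus F)<\varepsilon$, split the event according to whether $L^{(k)}(\sigma_S\setminus F)$ exceeds $2\varepsilon$ or not; the former has probability exponentially small by a one-dimensional Chernoff bound on the Bernoulli variable $\mathds{1}[\xi_j\notin F]$, and on the latter event $L^{(k)}$ is within $2\varepsilon$ (in total variation) of a type supported on $F\cup\{*\}$, so the finite-alphabet estimate applies with $I$ replaced by the infimum of $I_{p_\nu}$ over a $2\varepsilon$-neighbourhood of $K$; letting $\varepsilon\downarrow 0$ and using lower semicontinuity of $I_{p_\nu}$ recovers $I_{p_\nu}(K)$ up to an arbitrarily small loss, again absorbed by the $1/2$. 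The main obstacle is purely bookkeeping in this countable reduction — making sure the neighbourhood-of-$K$ infimum converges back up to $I_{p_\nu}(K)$ and that the stray constants stay uniform in $k$; the finite case is entirely standard and the strict positivity of the rate function is immediate from the properties of relative entropy already recorded. Alternatively, one may simply cite \cite{dembo} or \cite{Ellis} for Sanov's theorem in the Polish-space setting and note that the upper bound for closed sets, together with the polynomial-to-exponential absorption, gives \eqref{largee}; but I would include the method-of-types computation since it is short and makes the constant explicit.
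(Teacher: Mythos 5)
Your proposal is correct. Note, however, that the paper does not actually prove Theorem~\ref{bs}: it states it as ``well known'' and refers to \cite{dembo,Ellis}, which is precisely the alternative you mention at the end. So your method-of-types computation is genuinely additional content rather than a variant of the paper's argument. The computation itself is the standard one and is sound: the bound $\binom{k}{kq(\cdot)}\prod_{\xi}p(\xi\mid\nu)^{kq(\xi)}\le \exp(-k\,I_{p_\nu}(q))$ per type, the $(k+1)^{n}$ count of types, and the absorption of the polynomial prefactor into the exponent at the cost of the factor $1/2$ (adjusting $C_K$ for finitely many small $k$) together give exactly the stated inequality in the finite-alphabet case, and this also explains the otherwise mysterious $I_{p_\nu}(K)/2$ in the statement. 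Two points deserve explicit care. First, the positivity $I_{p_\nu}(K)>0$ requires that the infimum of the lower semicontinuous function $I_{p_\nu}$ over the closed set $K$ be attained; this follows because $\sigma_S$ is a countable \emph{compact} Hausdorff space (as noted in the paper), so $\mathcal{M}(\sigma_S)$ is weak-$*$ compact and $K$ is compact --- you should invoke this rather than a one-point compactification. Second, in the countable-alphabet reduction the event $\{L^{(k)}\in K\}$ must be transferred to an event about the truncated empirical measure; your truncation-plus-neighbourhood scheme works, but you need the fact that $I_{p_\nu}(K^{2\varepsilon})\uparrow I_{p_\nu}(K)$ as $\varepsilon\downarrow 0$ for closed $K$ (again a consequence of compactness and lower semicontinuity) to conclude that the loss is eventually smaller than $I_{p_\nu}(K)/2$. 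With those two points made explicit, the argument is complete and self-contained, which is arguably an improvement over the paper's bare citation.
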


The construction leading to Eq.~(\ref{3.2}) ensures that the probability distributions $p_{\nu}$ and $p_{\nu'}$ are distinct, and  hence the measures  $\mu_{\omega}(\cdot \vert \nu)$ and $\mu_{\omega}(\cdot \vert \nu')$ are mutually singular.  We can then decompose $\Xi$ into a collection $(\Xi_{\nu})_{\nu \in \Xi_{\infty}}$ of measurable subsets of $\Xi$ such that $\mu_{\omega}(\Xi_{\nu'} \vert \nu)=\delta_{\nu,\nu'}$. The proof of  the mutual singularity of the product measures $\mu_{\omega}(\cdot \vert \nu)$ and $\mu_{\omega}(\cdot \vert \nu')$  is straightforward (involving a \textit{lim inf} of sequences of sets). In order to keep this paper as self-contained as possible, we sketch it in an appendix, Sect. 5. As already discussed in the Introduction and in Section 2, Theorem \ref{bs} enables an experimentalist in the lab to measure properties of the system $\overline{P}$ corresponding to functions on $\Xi_{\infty}$ indirectly, and hence to determine 
$\nu \in \Xi_{\infty}$, by recording empirical frequencies of long sequences of outcomes of projective measurements of (possibly very uninteresting) quantities in $\mathcal{O}_S$.

\subsection{Decoherence and ``purification''} \label{sec311}

In the previous subsection, we have shown that, in non-demolition experiments, facts $\nu \in \Xi_\infty$ (identified with a subset of $\mathbb{R}$) emerge exponentially fast. In this subsection, we are interested in studying the time evolution of the reduced density matrix describing the state of $\overline{P}$, as projective measurements of quantities in $\mathcal{O}_S$ are carried out, with results $\xi_1,..., \xi_k$ belonging to 
$\sigma_S$, with $k=1,2,3,...$. In particular, we will prove that, under suitable hypotheses, the reduced density matrix of $\overline{P}$ has a limit, as 
$k \rightarrow + \infty$.

At the level of generality chosen at the beginning of Section~\ref{des_non_dem}, the limit of these reduced density matrices may not exist, because \eqref{3.1.8} does not sufficiently constrain the time evolution restricted to the range of $\Pi_\nu$.
We avoid this difficulty by assuming that the time evolution of the system consisting of $\overline{P}$ and one probe has the form given in Eq. \eqref{Uhu}. Thus, in this section we consider maps $\Phi_\xi$ given by Eq.~(\ref{3.1.9}), with (cf. text below Eq.~(\ref{3.1.7}))
$$
C_\xi = \sum_{\nu \in \Xi_\infty} c_\xi(\nu) \Pi_\nu.
$$
In summing over  $\Xi_\infty$  we tacitly assume that the measures $p_\nu$ (see  \eqref{pnu}) are mutually distinct. The dynamics, constrained in this way, has the property that the maps $\Phi_{*\xi}$ act on operators of the form $\Pi_{\nu}(\cdot) \Pi_{\nu'}$ as multiplication operators, and we have that
\begin{equation}
\label{3.2.1}
\Pi_\nu (\Phi_\xi [\rho]) \Pi_{\nu'} = c_{\xi}(\nu)  \Pi_\nu  \rho \Pi_{\nu'} \overline{c_{\xi}(\nu')},
\end{equation}
for an arbitrary density matrix $\rho$ on $\mathcal{H}_{\overline{P}}$.
Next, we state and prove a result of \cite{Mass} rediscovered in \cite{BaBe}. Our proof is significantly different from those in \cite{Mass, BaBe}; (it shows that the result is a nearly immediate corollary of Theorem \ref{bs}).  In the following, $\| \cdot\|$ denotes either the  operator norm or the trace-norm, the statements being correct for both choices.

\begin{theorem} \label{312}
For $\nu \not= \nu'$,
\begin{equation} \label{dec}
\| \Pi_{\nu} \rho^{(k)}(\underline{\xi}^{(k)}) \Pi_{\nu'} \| \rightarrow 0,  \qquad \mu_{\omega} \text{-}\text{   } a.s.,
\end{equation}
as $k\rightarrow \infty$, where $\rho^{(k)}(\underline{\xi}^{(k)})$ has been defined in \eqref{3.1.2}.
Furthermore, there exists a random variable $\Theta: \Xi \rightarrow \Xi_{\infty}$ such that 
\begin{equation} \label{purififi}
 \Big \|  \rho^{(k)}(\underline{\xi}^{(k)})  -  \frac{ \Pi_{\Theta} \rho^{(0)}   \Pi_{\Theta} }{ \mathrm{Tr}(\Pi_{\Theta}   \rho^{(0)} \Pi_{\Theta}))} \Big \|  \rightarrow 0, \qquad \mu_{\omega} \text{-} \text{ }  a.s.,
\end{equation}
as $k\rightarrow \infty$.
The probability that $\Theta=\nu$ is   equal to $P_{\omega}(\nu) = \mathrm{Tr}( \rho^{(0)} \Pi_{\nu})$.
\end{theorem}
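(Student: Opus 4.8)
\emph{Proof proposal.}
The plan is to make precise the remark that this is ``a nearly immediate corollary of Theorem \ref{bs}'': first extract the exact block structure of the iterated map, then reduce the whole statement to the asymptotics of a product of scalars. First I would iterate \eqref{3.2.1}. Decomposing $\rho^{(0)}=\sum_{\nu,\nu'}\Pi_\nu\rho^{(0)}\Pi_{\nu'}$ and using that each $\Phi_\xi$ multiplies the block $\Pi_\nu(\cdot)\Pi_{\nu'}$ by the scalar $c_\xi(\nu)\overline{c_\xi(\nu')}$, one obtains
\begin{equation*}
\Phi_{\xi_k}\circ\cdots\circ\Phi_{\xi_1}[\rho^{(0)}]=\sum_{\nu,\nu'}\Big(\prod_{j=1}^{k}c_{\xi_j}(\nu)\,\overline{c_{\xi_j}(\nu')}\Big)\,\Pi_\nu\rho^{(0)}\Pi_{\nu'}.
\end{equation*}
Taking the trace, only the diagonal blocks survive (since $\mathrm{Tr}(\Pi_\nu\rho^{(0)}\Pi_{\nu'})=\delta_{\nu\nu'}\mathrm{Tr}(\rho^{(0)}\Pi_\nu)$), so the normalisation in \eqref{3.1.2} equals $\sum_{\nu}w_k(\nu)P_\omega(\nu)$, where $w_k(\nu):=\prod_{j=1}^{k}|c_{\xi_j}(\nu)|^2=\prod_{j=1}^{k}p(\xi_j\vert\nu)$ and $P_\omega(\nu)=\mathrm{Tr}(\Pi_\nu\rho^{(0)}\Pi_\nu)=\mathrm{Tr}(\rho^{(0)}\Pi_\nu)$. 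Hence
\begin{equation*}
\Pi_\mu\,\rho^{(k)}(\underline{\xi}^{(k)})\,\Pi_{\mu'}=\frac{\big(\prod_{j=1}^{k}c_{\xi_j}(\mu)\,\overline{c_{\xi_j}(\mu')}\big)\,\Pi_\mu\rho^{(0)}\Pi_{\mu'}}{\sum_{\nu}w_k(\nu)P_\omega(\nu)},
\end{equation*}
and everything is reduced to controlling the scalar weights $w_k(\nu)$.

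The hard part is the claim that, for every $\nu$ with $P_\omega(\nu)>0$ and every $\nu'\neq\nu$, one has $w_k(\nu')/w_k(\nu)\to0$ as $k\to\infty$, $\mu_\omega(\cdot\vert\nu)$-almost surely. Since $\mu_\omega(\xi_1,\dots,\xi_k\vert\nu)=w_k(\nu)$ by \eqref{3.2} (and $w_k(\nu)>0$ $\mu_\omega(\cdot\vert\nu)$-a.s., as each $\xi_j$ has law $p(\cdot\vert\nu)$), this ratio is exactly the likelihood ratio $\mu_\omega(\underline{\xi}^{(k)}\vert\nu')/\mu_\omega(\underline{\xi}^{(k)}\vert\nu)$, and
\begin{equation*}
\frac1k\ln\frac{w_k(\nu')}{w_k(\nu)}=\frac1k\sum_{j=1}^{k}\ln\frac{p(\xi_j\vert\nu')}{p(\xi_j\vert\nu)}
\end{equation*}
is an average of i.i.d.\ random variables whose mean with respect to $p_\nu$ is $-I_{p_{\nu'}}(p_\nu)$. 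Because $p_\nu\neq p_{\nu'}$ by construction, Gibbs' inequality gives $-I_{p_{\nu'}}(p_\nu)<0$ (possibly $-\infty$), so the strong law of large numbers — equivalently Theorem \ref{bs}, which already pins the empirical measure $\sum_\xi f_\xi^{(k)}\delta_\xi$ at $p_\nu$ — forces $w_k(\nu')/w_k(\nu)\to0$, in fact exponentially fast. The technical points here are minor: the positive part of $\ln\big(p(\xi\vert\nu')/p(\xi\vert\nu)\big)$ is $p_\nu$-integrable because $\ln t\le t$, so the mean is well defined in $[-\infty,0)$; and if $p(\xi\vert\nu')=0$ for some outcome of positive $p(\cdot\vert\nu)$-probability, then $w_k(\nu')=0$ for all large $k$.

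Granting this, the theorem follows by bookkeeping, and the argument is insensitive to $\dim\mathcal{H}_{\overline P}$ because $\Xi_\infty$ is finite, so only finitely many blocks occur and each $\Pi_\mu\rho^{(0)}\Pi_{\mu'}$ has norm at most $\|\rho^{(0)}\|$. On the measurable set $\Xi_\nu$ (where $\mu_\omega(\Xi_{\nu'}\vert\nu)=\delta_{\nu\nu'}$, so $\mu_\omega(\cdot\vert\nu)$-a.s.\ statements hold $\mu_\omega$-a.s.) I would write $\sum_{\nu''}w_k(\nu'')P_\omega(\nu'')=w_k(\nu)\big(P_\omega(\nu)+\sum_{\nu''\neq\nu}\tfrac{w_k(\nu'')}{w_k(\nu)}P_\omega(\nu'')\big)=w_k(\nu)P_\omega(\nu)(1+o(1))$; then the diagonal block $\Pi_\nu\rho^{(k)}\Pi_\nu\to\Pi_\nu\rho^{(0)}\Pi_\nu/P_\omega(\nu)$, every other diagonal block $\Pi_{\nu'}\rho^{(k)}\Pi_{\nu'}$ carries the vanishing factor $w_k(\nu')/w_k(\nu)$, and an off-diagonal block ($\mu\neq\mu'$) is bounded in norm by $P_\omega(\nu)^{-1}\sqrt{(w_k(\mu)/w_k(\nu))(w_k(\mu')/w_k(\nu))}\,\|\Pi_\mu\rho^{(0)}\Pi_{\mu'}\|\to0$, since at least one of $\mu,\mu'$ differs from $\nu$ while the other ratio stays bounded. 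Summing the finitely many blocks gives, on $\Xi_\nu$, $\rho^{(k)}(\underline{\xi}^{(k)})\to\Pi_\nu\rho^{(0)}\Pi_\nu/\mathrm{Tr}(\Pi_\nu\rho^{(0)}\Pi_\nu)$, $\mu_\omega$-a.s.; the same off-diagonal bound applied conditionally on each $\nu$ with $P_\omega(\nu)>0$ proves \eqref{dec}. Finally, setting $\Theta:=\nu$ on $\Xi_\nu$ (the sets with $P_\omega(\nu)=0$ being $\mu_\omega$-null since $\mu_\omega(\Xi_\nu)=P_\omega(\nu)$) yields \eqref{purififi}, and $\mu_\omega(\Theta=\nu)=\mu_\omega(\Xi_\nu)=P_\omega(\nu)=\mathrm{Tr}(\rho^{(0)}\Pi_\nu)$, the last assertion. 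The only genuine obstacle is the likelihood-ratio/large-deviation step of the second paragraph; everything else is routine.
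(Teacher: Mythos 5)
Your proposal is correct, and the reduction to the scalar weights $w_k(\nu)=\prod_j p(\xi_j\vert\nu)$ is sound; but it follows a genuinely different route from the paper's. For the off-diagonal decay \eqref{dec}, the paper does \emph{not} condition on the sets $\Xi_\nu$: it multiplies \eqref{decsn} by $\mu_\omega(\xi_1,\dots,\xi_k)$ and sums, obtaining the unconditional moment bound $\mathbb{E}\Vert\Pi_\nu\rho^{(k)}\Pi_{\nu'}\Vert\le\Vert\Pi_\nu\rho^{(0)}\Pi_{\nu'}\Vert\,(\sum_\xi\vert c_\xi(\nu)\vert\,\vert c_\xi(\nu')\vert)^k$, where the sum is $<1$ by Cauchy--Schwarz and $p_\nu\neq p_{\nu'}$, and then invokes Borel--Cantelli. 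For the diagonal convergence behind \eqref{purififi}, the paper's appendix runs Sanov's theorem (Theorem \ref{bs}) on the empirical measures to get exponential bounds $\mu_\omega(\Xi_{\nu'}^{(n)}\vert\nu)\le Ca^n$ and again concludes by Borel--Cantelli. You instead argue pathwise on each $\Xi_\nu$: the log-likelihood ratio $\frac1k\ln(w_k(\nu')/w_k(\nu))$ is an i.i.d.\ average with mean $-I_{p_{\nu'}}(p_\nu)<0$ under $p_\nu$, so the strong law forces all ratios $w_k(\nu')/w_k(\nu)\to 0$, and every block of $\rho^{(k)}$ is then controlled by $\sqrt{(w_k(\mu)/w_k(\nu))(w_k(\mu')/w_k(\nu))}$. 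Your treatment of the degenerate cases (mean $-\infty$, outcomes with $p(\xi\vert\nu')=0$) is adequate. What each approach buys: the paper's Cauchy--Schwarz estimate gives a clean, state-uniform geometric rate $\delta_{\nu\nu'}^{\,k}$ that is reused verbatim in the perturbative analysis of Section \ref{Section4}, and its Sanov-based construction is what actually produces the sets $\Xi_\nu$; your argument is more elementary (only the SLLN, no large-deviation machinery), unifies both halves of the theorem into a single statement about the scalars $w_k(\nu)$, and identifies the a.s.\ exponential rates in terms of the relative entropies $I_{p_{\nu'}}(p_\nu)$ --- though note that you still import the existence of the sets $\Xi_\nu$ from the paper, whereas your own likelihood-ratio criterion (e.g.\ $\Xi_\nu:=\{\underline{\xi}:\ w_k(\nu')/w_k(\nu)\to 0\ \forall\nu'\neq\nu\}$) would have let you construct them without appealing to Theorem \ref{bs} at all.
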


\begin{proof}

Eq.~(\ref{3.2.1}) yields a recurrence relation for the density matrix $\rho^{(k)}$,

\begin{equation}\label{recu}
 \Pi_{\nu} \rho^{(k)} (\underline{\xi}^{(k)}) \Pi_{\nu'} = \frac{ c_{\xi_k}(\nu) \Pi_{\nu} \rho^{(k-1)} (\underline{\xi}^{(k-1)}) \Pi_{\nu'}  \overline{c_{\xi_k}(\nu')}}{ \underset{\nu \in \Xi_{\infty}}{ \sum}  \vert  c_{\xi_k}(\nu)  \vert^2   \mathrm{Tr}(\Pi_{\nu} \rho^{(k-1)} (\underline{\xi}^{(k-1)})) },
\end{equation}
which, by iteration, implies that
\begin{equation} \label{decsn}
\Pi_{\nu} \rho^{(k)} (\underline{\xi}^{(k)}) \Pi_{\nu'}= \frac{1}{ 
\mu_{\omega}(\xi_1,...,\xi_k) } \text{ } \Pi_{\nu} \rho^{(0)}  \Pi_{\nu'} \text{ }  \prod_{i=1}^{k}  c_{\xi_i}(\nu) \overline{ c_{\xi_i}(\nu')}.
\end{equation}
Multiplying both sides by by $\mu_{\omega}(\xi_1,...,\xi_k)$, taking norms, and summing over all possible 
$\xi_1,...,\xi_k$, we find that
\begin{equation} \label{dec}
\mathbb{E} \|  \Pi_{\nu} \rho^{(k)} (\underline{\xi}^{(k)}) \Pi_{\nu'} \| \leq \| \Pi_{\nu} \rho^{(0)}  \Pi_{\nu'}  \|  (\sum_{\xi \in \sigma_S} \vert  c_{\xi}(\nu) \vert   \text{ }\vert  c_{\xi}(\nu') \vert )^k \leq \| \Pi_{\nu} \rho^{(0)}  \Pi_{\nu'}  \|  (\delta_{\nu \nu'})^k,
\end{equation}
for some $0<\delta_{\nu \nu'}<1$. The factor $(\delta_{\nu \nu'})^k$ on the right side is obtained from the Cauchy-Schwarz inequality  and the non-degeneracy assumption,  $p_{\nu} \neq p_{\nu'}$ for $\nu \not= \nu'$. The exponential decay of the expected value in Eq. \eqref{dec} implies almost sure convergence -- a consequence of the Borel-Cantelli lemma. 

To prove \eqref{purififi}, we observe that from the fact that the measures $\mu_{\omega}(\cdot \vert \nu)$ are mutually singular, as $\nu$ ranges over $\Xi_{\infty}$, one may deduce that  the sets $\Xi_{\nu}$ (introduced at the end of Section \ref{nondem}) have the property that, for $\nu \neq \nu'$,
\begin{equation} \label{3.15}
\|   \Pi_{\nu} \rho^{(k)} (\underline{\xi}^{(k)}) \Pi_{\nu}  \|\chi_{\Xi_{\nu'}}  \rightarrow 0, \qquad \text{as  } \text{   } k\rightarrow \infty,
\end{equation}
$\mu_{\omega}$- a.e.. This property is proven in the appendix (see subsection 5.1.2), where we will also show that 
\begin{equation}  \label{3.16}
   \|  \mathrm{Tr}(  \Pi_{\nu} \rho^{(0)} )  \Pi_{\nu} \rho^{(k)} (\underline{\xi}^{(k)}) \Pi_{\nu}  -\Pi_{\nu} \rho^{(0)} \Pi_{\nu}  \| \chi_{\Xi_{\nu}}(\underline{\xi}) \rightarrow 0, \qquad  \mu_{\omega}\text{-} \text{ a.e.},
\end{equation}
as $k \rightarrow \infty$.
Let $\Theta$ be the random variable defined by $$\Theta(\underline{\xi}) =\sum_{\nu \in \Xi_{\infty}} \chi_{\Xi_{\nu}}(\underline{\xi})  \text{ }\nu.$$ Using \eqref{dec}, \eqref{3.15}, \eqref{3.16}, the triangle inequality and the fact that $\mu_{\omega}(\cup_{\nu} \Xi_{\nu})=1$, one shows that 
\begin{align*}
\Big \|  \rho^{(k)}(\underline{\xi}^{(k)})   -  \frac{ \Pi_{\Theta(\underline{\xi})} \rho^{(0)}   \Pi_{\Theta(\underline{\xi})} }{ \mathrm{Tr}(\Pi_{\Theta(\underline{\xi})}   \rho^{(0)} \Pi_{\Theta(\underline{\xi})}))} \Big \| 
\end{align*} 
converges to zero, $\mu_{\omega}$- a.e..
\end{proof}

\section{Indirect measurements of physical quantities varying slowly in time} \label{Section4}
\subsection{A perturbative approach to indirect measurements}
At the end of Section \ref{hypo}, we outlined some general ideas concerning indirect measurements or observations of physical quantities that evolve slowly in time; see Eqs. \eqref{2.21} through \eqref{2.24}. In this section, we intend to add some mathematical precision to those ideas by analyzing indirect measurements of a physical quantity evolving slowly in time, for a class of simple models. For this purpose, we develop a \textit{perturbative approach} to the theory of indirect measurements. We consider models whose time-evolutions are perturbatively close to the ones considered in our analysis of non-demolition measurements presented in Section \ref{des_non_dem}. As in Section \ref{des_non_dem}, we  consider models of a quantum system $S$ that is the composition of a subsystem $\overline{P}$ of primary interest with a subsystem $E$ consisting of equipment used to observe $\overline{P}$. 
We  are interested in describing observations of physical properties of the subsystem 
$\overline{P}$, also called ``facts'', using the experimental equipment described by $E$. For this purpose, successive projective measurements of some quantities referring to $E$ are carried out. They yield a sequence of reduced states, $\rho^{(k)}(\underline{\xi}^{(k)})$, on the algebra, $B(\mathcal{H}_{\overline{P}})$, of bounded operators on $\mathcal{H}_{\overline{P}}$ depending on  measurement protocols 
$\underline{\xi}^{(k)}=(\xi_1,...,\xi_k) \in (\sigma_{S})^{\times k}$ of arbitrary length $k< \infty$, (with $\xi_j$ the values of a quantity referring to $E$ measured in the $j^{th}$ projective measurement).

The reduced state of $\overline{P}$, after $k$ projective measurements carried out on $E$, is assumed to be given, recursively, by the density matrix
\begin{equation}\label{Rec}
\rho^{(k)}(\underline{\xi}^{(k)}) = \frac{\Phi_{\underline \xi}^{(k)}[\rho^{(k-1)}(\underline{\xi}^{(k-1)})]}{\mathrm{Tr}(\Phi_{\underline \xi}^{(k)}[\rho^{(k-1)}(\underline{\xi}^{(k-1)})])},
\end{equation}
where, for all  $\underline{\xi}$, the evolution map $\Phi_{\underline \xi}^{(k)}$ only depends on the first $k$ measurement results $\underline{\xi}^{(k)} = (\xi_1,..., \xi_k)$, for all $k \in \mathbb{N}$.  As before,  we assume that  each map $\Phi_{\underline \xi}^{(k)}$ is the dual of a completely positive map and that 
\begin{equation}
\sum_{\xi_k \in \sigma_S} \tr(\Phi_{\underline \xi}^{(k)}[\rho])=\tr(\rho)
\end{equation}
for any trace-class operator $\rho \in \mathcal{B}(\mathcal{H}_{\overline{P}})$. The probability of the measurement protocol $\underline{\xi}^{(k)}$, assuming that the system has been prepared in the state 
$\omega$ corresponding to a density matrix $\rho^{(0)}$, is given by 
\begin{equation}
\mu_{\omega}(\xi_1,...,\xi_k)=\mathrm{Tr}(\Phi_{\underline \xi}^{(k)} \circ \cdots  \circ \Phi_{\underline \xi}^{(1)}[\rho^{(0)}]).
\end{equation}

In contrast to the assumptions required in Section~\ref{des_non_dem}, we allow the commutator appearing in Eq.~(\ref{3.1.5}) to be non-zero, but small, (and the system is usually not autonomous).We compare the evolution described in \eqref{Rec} to one considered in our analysis of non-demolition measurements which it is assumed to be close to. We add a tilde to all quantities referring to non-demolition measurements, as treated in Section \ref{sec311}; namely, we will write
\begin{align}
\tilde{\Phi}_{\xi}[\rho]&:=\sum_{\nu, \nu' \in \Xi_{\infty}} c_{\xi}(\nu) \Pi_{\nu} \rho  \Pi_{\nu'}\overline{c_{\xi}(\nu')} ,\\
\tilde{\mu}_{\omega}(\xi_1,...,\xi_n)&:=\mathrm{Tr}( \tilde \Phi_{  \xi_n} \circ  \cdots \circ  \tilde \Phi_{  \xi_1}[\rho^{(0)}]).
\end{align}
To avoid confusions, the trace norm of a trace-class operator $\rho$ is denoted by $\| \rho \|$, and  the operator norm of an operator $A$ in $\mathcal{B}(\mathcal{H}_{\overline{P}})$ is denoted by $\|A \|_{\mathrm{op}}$. The operator norm of a map $\Psi$ (= $\tilde{\Phi}_{\xi}$ or $\Phi_{\underline \xi}^{(k)}$), viewed as an operator on the Banach space of trace-class operators, is denoted by $\| \Psi \|$. In this section, we always assume that $\sigma_S$ and $\Xi_{\infty}$ are finite point sets. We propose to study the dynamics of the reduced system $\overline{P}$ under the following assumptions:

\begin{assumption}\label{Ass:m1}
We assume that there exist constants $d_1 \in [0, 1)$ and  $d_2  \in (d_1,1] $ such that, for all $n \in \mathbb{N}$, for every $\xi \in \sigma_S$ and  every $\underline {\xi}$ with $  \xi_k = \xi $, 
\begin{enumerate}
\item[(i)] $\|  \Phi_{\underline \xi}^{(k)} - \tilde \Phi_{ \xi} \|
 \leq d_1 \|  \tilde \Phi_{ \xi} \|$,
\item[(ii)] $
\mathrm{Tr}( \tilde \Phi_{\xi} \rho) \geq d_2   \|   \tilde  \Phi_{\xi} \| $, for all density matrices $\rho$ on $\mathcal{H}_{\overline{P}}$.
\end{enumerate}
\end{assumption}

\noindent The first assumption concerns the smallness of the difference between the actual evolution and one corresponding to a non-demolition measurement, while the second assumption implies  that the a-priori probability of  any outcome of a projective measurement of a quantity referring to $E$ (used in a 
non-demolition measurement of a quantity of $\overline{P}$) never vanishes.  
Items (i)  and (ii) imply that
\begin{equation}
\label{eq:m1}
 \mathrm{Tr}(  \Phi_{\underline \xi}^{(k)} [\rho]) \geq  \underset{=:d}{\underbrace{(d_2 - d_1)}}\|  \tilde \Phi_{\xi_k} \|,
\end{equation}
 for an arbitrary density matrix $\rho$ and any $k$. Without loss of generality, we also assume that $d=d_2 - d_1<1$. 
 Before stating the main results of this section, we give some examples of maps satisfying Assumption \ref{Ass:m1}.
 \begin{example}[illustrating (i)]
Let $(H^{(k)})_{k=1}^{\infty}$ be a sequence of Hamiltonians with $ \| H^{(k)}\|_{\mathrm{op}} \leq d_1/2$, for all $k$. Then (i) holds for a perturbation dynamics 
$ \Phi^{(k)}_{\underline \xi} = \tilde{\Phi}_{\xi_k} \exp(-i Ad_{H^{(k)}})$.
\end{example}

\begin{example}[illustratig (ii)]
Let $ \tilde \Phi_\xi = \upsilon_\xi \text{  } {id} + \Upsilon_\xi$, where $\upsilon_\xi$ is a probability distribution on $\sigma_S$ and $\Upsilon_\xi$ is a family of maps satisfying $\sum_\xi \Upsilon_\xi = 0$. Then $ \tilde  \Phi_\xi$ satisfies (ii) with the constant 
$$d_2=  \sup_{\xi \in \sigma_S} \left(1 - 2 \frac{\|\Upsilon_\xi\| }{\upsilon_\xi + \| \Upsilon_\xi \| }\right),$$
provided $\upsilon_\xi >  \| \Upsilon_\xi \| $, uniformly in $\xi$.  If the ratio of $\| \Upsilon_\xi \|$ and $\upsilon_\xi$ is small, for all $\xi$, then $d_2$ is close to $1$.
\end{example}

\noindent In studying the first example we use the  inequality $|1-e^{ix}| \leq  \vert x \vert$  and conclude that 
\begin{align*}
\| \Phi^{(k)}_{\underline \xi} - \tilde \Phi_{\xi_k} \|  &\leq \|  \tilde \Phi_{\xi_k} \| \, \| 1 - \exp(-i Ad_{H^{(k)}})\|  \\
						&\leq 2 \| \tilde \Phi_{\xi_k} \|   \| H^{(k)}\|_{\mathrm{op}} ,
\end{align*}
where we have used that $\|Ad_{H} \| \leq 2\| H \|_{\mathrm{op}} $. To treat the second example, we remark that (when viewed as a  map acting on the space of trace-class operators) $\tilde \Phi_\xi $ satisfies the inequality
$$
(\upsilon_\xi - \| \Upsilon_\xi \| )\cdot id \leq  \tilde \Phi_\xi \leq (\upsilon_\xi + 
\|\Upsilon_\xi\| )\cdot id.
$$
Hence we have that
\begin{align*}
\tr( \tilde \Phi_\xi \rho) &\geq \upsilon_\xi - \| \Upsilon_\xi\|  =  \upsilon_\xi + 
\| \Upsilon_\xi \|  - 2 \| \Upsilon_\xi \|  \\
				& \geq \| \tilde \Phi_\xi \|  (1- 2 \frac{\| \Upsilon_\xi \|}{\upsilon_\xi + \| \Upsilon_\xi \| }),
\end{align*}
for an arbitrary density matrix $\rho$.

\subsection{Trajectories of quantum jumps on $\Xi_{\infty}$}
We denote by 
\begin{equation}
\mathcal{N}=\sum_{\nu \in \Xi_{\infty}} \nu \Pi_{\nu} \in \mathcal{B}(\mathcal{H}_{\overline{P}})
\end{equation}
the operator representing the physical quantity of $\overline{P}$ that we wish to measure indirectly. (It is assumed here that $\Xi_{\infty}$ is a subset of $\mathbb{R}$ and that $\nu$ is a function on $\Xi_{\infty}$ that separates points of $\Xi_{\infty}$. In the example discussed in the Introduction, the operator 
$\mathcal{N}$ is the number operator counting the number of electrons in the component $P$ of the quantum dot $\overline{P}$ close to the conducting channel.) The time evolution of  $\mathcal{N}$ is assumed to be non-trivial under the dynamics corresponding to the maps $\Phi_{\underline{\xi}}^{(k)}$. Nevertheless, one may hope that the measurement protocols $\underline{\xi}^{(k)}$ can be used to track the  values of $\mathcal{N}$ during fairly long, but finite intervals of time (on which the value of $\mathcal{N}$ is constant with very high probability), because  the dynamics determined by the maps $\Phi_{\underline{\xi}}^{(k)}$ is assumed to be  close to the ``non-demolition dynamics'' studied in Section \ref{sec311}; see  Assumption \ref{Ass:m1}. For each $k \in \mathbb{N}$, we  introduce an ``estimator''  $\hat{\mathcal{N}}^{(k,k+r)}(\underline\xi)$ whose value correctly predicts the outcome of a direct measurement of 
$\mathcal{N}$ at a time $\approx   k+r $ with high probability, provided the outcomes, 
$(\xi_{k+1},...,  \xi_{k+r})$, of $r$ projective probe measurements are known, for some ``time constant'' 
$r \in \mathbb{N}$: For every fixed $\xi \in \sigma_S$, we set
\begin{equation}\label{alf3rep} 
f_{\xi}^{(k,k+r)}( \underline{\xi}) : = \frac{1}{r} \# \Big \{ j \in \{k+1, \cdots, k + r \} \: \Big | \: \xi_j = \xi  \Big \},  
\end{equation}
and we define the estimator $\hat{\mathcal{N}}^{(k,k+r)}(\underline\xi)$  by  
\begin{equation}\label{esti}
\hat{\mathcal{N}}^{(k,k+r)}(\underline\xi) := \underset{\nu \in \Xi_{\infty} }{\mathrm{argmin} } \text{ }  I_{p_{\nu}}\big ( \sum_{\xi \in \sigma_S} f_{\xi}^{(k,k+r)}( \underline{\xi}) \text{ } \delta_{\xi}  \big),
\end{equation}
for any fixed choice of argmin in case there does not exist a unique minimizer.  We remind the reader that the  measure   $p_{\nu}$ has been defined in \eqref{pnu}, and that the coefficients $p(\xi \vert \nu)$ are those that have been introduced in the ``non-demolition model'' discussed in Section \ref{sec311}, i.e., $p(\xi \vert \nu)=\vert c_{\xi}(\nu) \vert^2.$ The level sets of the estimator $\hat{\mathcal{N}}^{(k,k+r)}(\underline\xi)$, 
\begin{equation}
\Xi_{\hat{\mathcal{N}}^{(k,k+r)} =\nu}:  =\Big \{  \underline \xi  \in \Xi \mid  
\hat{\mathcal{N}}^{(k,k+r)}(\underline\xi) = \nu \Big \},
\end{equation}
yield a partition of the space  $\Xi$ of infinitely long measurement protocols.  
The probability of an error in the prediction by the estimator of the outcome of a direct measurement of 
the value of $\mathcal{N}$ at time $ k+r$ is then given by
\begin{align} \label{plk}
\epsilon^{(k,k+r)}(\nu) :=  \sum_{\underline{\xi} \in \Xi_{\hat{\mathcal{N}}^{(k,k+r)} =\nu}} \mathrm{Tr} \big ((1-\Pi_{\nu}) 
 \rho^{( k+r)}(\underline \xi^{( k+r)} )\big )  \text{ }\mu_{\omega}(\underline{\xi}^{( k+r)}).
\end{align} 

The error probability for the ``non-demolition model'' is denoted by $\tilde \epsilon^{(k,k+r)}(\nu)$.
In the context of the ``non-demolition model'', we are facing a classical parameter estimation problem that is equivalent to a problem concerning multinomial tests \cite{Hoef} and hence can be solved using large deviation inequalities, as in Theorem \ref{bs}. In particular, it is possible to characterize the speed of convergence to the values $\nu$ on the level sets of the estimator, for the ``non-demolition model''. 

The hypotheses underlying the results in this subsection are the ones summarized in Assumption \ref{Ass:m1}.
Our first result is the  following lemma.
 
 \begin{lemma}[Estimation Fidelity]
\label{lem:m2}
There are constants $a \in (0,1)$  and $C>0$ such that,  for all $k, r \in \mathbb{N}$, the following bounds  hold uniformly in $\rho^{(0)}$:   
\begin{align}\label{abu1}
|  \tilde \mu_{\omega}(\Xi_{\hat{\mathcal{N}}^{(k,k+r)} =\nu}) - \tilde{ \mathbb{E}} \mathrm{Tr}(\Pi_{\nu} \tilde \rho^{(k)}) | & \leq 2 C a^{r},\\
\Big |\mu_{\omega}(\Xi_{\hat{\mathcal{N}}^{(k,k+r)} =\nu}) -  \mathbb{E}  \mathrm{Tr}(\Pi_{\nu} \rho^{(k)}) \Big | & \leq 2 C a^{r}  +  \frac{d_1}{d^{-1}-1} d^{-r-1},\label{otra01}
\end{align}
and 
\begin{align}\label{abu2}
\tilde \epsilon^{(k,k+r)}(\nu) &\leq C a^{r},\\ 
\epsilon^{(k,k+r)}(\nu) &\leq  C a^{r}+ \frac{ d_1 }{d^{-1}-1} d^{-r-1}, \label{otra02}
\end{align}
for all $\nu \in \Xi_{\infty}$.
\end{lemma}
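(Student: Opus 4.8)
I would prove the ``non-demolition'' bounds \eqref{abu1}, \eqref{abu2} first, directly from the structure of $\tilde\Phi_\xi$ and Theorem~\ref{bs}, and then obtain \eqref{otra01}, \eqref{otra02} by a \emph{local} perturbative comparison on the window $\{k+1,\dots,k+r\}$. The organising device throughout is to condition on the first $k$ outcomes: by \eqref{Rec}, $\mu_\omega(\underline\xi^{(k+r)})=\mu_\omega(\underline\xi^{(k)})\,\mathrm{Tr}\big(\Phi^{(k+r)}_{\underline\xi}\cdots\Phi^{(k+1)}_{\underline\xi}[\rho^{(k)}(\underline\xi^{(k)})]\big)$, and since $\hat{\mathcal N}^{(k,k+r)}(\underline\xi)$ depends only on $\xi_{k+1},\dots,\xi_{k+r}$, each of the four quantities becomes the $\mu_\omega$-expectation over $\underline\xi^{(k)}$ of a functional of $\rho^{(k)}(\underline\xi^{(k)})$ involving only $r$ further steps; this is exactly what keeps every estimate uniform in $k$ and in $\rho^{(0)}$.

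For the non-demolition model, iterating \eqref{3.2.1} (equivalently \eqref{decsn}) gives $\mathrm{Tr}(\Pi_\nu\tilde\rho^{(k)}(\underline\xi^{(k)}))=\tilde\mu_\omega(\underline\xi^{(k)})^{-1}\,\omega(\Pi_\nu)\prod_{i=1}^k p(\xi_i|\nu)$, the Bayesian posterior weight of $\nu$; multiplying by $\tilde\mu_\omega(\underline\xi^{(k)})$ and summing gives the martingale identity $\tilde{\mathbb E}\,\mathrm{Tr}(\Pi_\nu\tilde\rho^{(k)})=\omega(\Pi_\nu)$. By the de Finetti decomposition \eqref{3.2}, $\tilde\mu_\omega(\Xi_{\hat{\mathcal N}^{(k,k+r)}=\nu})=\sum_{\nu'}\omega(\Pi_{\nu'})\,\tilde\mu_\omega(\{\hat{\mathcal N}^{(k,k+r)}=\nu\}\mid\nu')$, and conditionally on $\nu'$ the window outcomes are i.i.d.\ $p(\cdot|\nu')$. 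For that product law the event $\{\hat{\mathcal N}^{(k,k+r)}\ne\nu'\}$ forces the window empirical measure $\sum_\xi f^{(k,k+r)}_\xi(\underline\xi)\,\delta_\xi$ into a closed subset of $\mathcal M(\sigma_S)$ not containing $p_{\nu'}$ (the $p_\nu$ being distinct, relative entropy is strictly positive off the diagonal), so Theorem~\ref{bs} together with a union bound over the finitely many competitors in $\Xi_\infty$ gives $\tilde\mu_\omega(\{\hat{\mathcal N}^{(k,k+r)}\ne\nu'\}\mid\nu')\le Ca^r$ for some constants $C>0$, $a\in(0,1)$ (the closedness/measurability being routine since $\sigma_S$ is finite). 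Thus $\tilde\mu_\omega(\{\hat{\mathcal N}=\nu\}\mid\nu)\ge1-Ca^r$ and $\tilde\mu_\omega(\{\hat{\mathcal N}=\nu\}\mid\nu')\le Ca^r$ for $\nu'\ne\nu$; inserting this into the two displays yields \eqref{abu1} (via $\sum_\nu\omega(\Pi_\nu)\le1$) and, after the identity $\tilde\epsilon^{(k,k+r)}(\nu)=\tilde\mu_\omega(\Xi_{\hat{\mathcal N}=\nu})-\omega(\Pi_\nu)\,\tilde\mu_\omega(\{\hat{\mathcal N}=\nu\}\mid\nu)=\sum_{\nu'\ne\nu}\omega(\Pi_{\nu'})\,\tilde\mu_\omega(\{\hat{\mathcal N}=\nu\}\mid\nu')$, also \eqref{abu2}.

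For the perturbed model, fix $\rho:=\rho^{(k)}(\underline\xi^{(k)})$ and bound the difference between the true window propagator and the non-demolition one by the telescoping expansion $\Phi^{(k+r)}_{\underline\xi}\cdots\Phi^{(k+1)}_{\underline\xi}-\tilde\Phi_{\xi_{k+r}}\cdots\tilde\Phi_{\xi_{k+1}}=\sum_{j=1}^r\Phi^{(k+r)}_{\underline\xi}\cdots\Phi^{(k+j+1)}_{\underline\xi}\,(\Phi^{(k+j)}_{\underline\xi}-\tilde\Phi_{\xi_{k+j}})\,\tilde\Phi_{\xi_{k+j-1}}\cdots\tilde\Phi_{\xi_{k+1}}$. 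In the $j$-th term the ``past'' piece $\tilde\Phi_{\xi_{k+j-1}}\cdots\tilde\Phi_{\xi_{k+1}}[\rho]\ge0$ has trace summing, over $\xi_{k+1},\dots,\xi_{k+j-1}$, to $\mathrm{Tr}(\rho)=1$; the middle piece obeys $\|\Phi^{(k+j)}_{\underline\xi}-\tilde\Phi_{\xi_{k+j}}\|\le d_1\|\tilde\Phi_{\xi_{k+j}}\|$ by Assumption~\ref{Ass:m1}(i) with $\sum_{\xi_{k+j}}\|\tilde\Phi_{\xi_{k+j}}\|\le d^{-1}$ by \eqref{eq:m1}; and, after a Hahn decomposition of the (non-positive) error operator and using that each $\Phi^{(m)}_{\underline\xi}$ is the dual of a completely positive map with $\sum_{\xi_m}\mathrm{Tr}(\Phi^{(m)}_{\underline\xi}[\sigma])=\mathrm{Tr}(\sigma)$, the ``future'' composition contributes only a factor $1$ once summed over $\xi_{k+j+1},\dots,\xi_{k+r}$. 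This gives $\sum_{\xi_{k+1},\dots,\xi_{k+r}}\big\|\big(\Phi^{(k+r)}_{\underline\xi}\cdots\Phi^{(k+1)}_{\underline\xi}-\tilde\Phi_{\xi_{k+r}}\cdots\tilde\Phi_{\xi_{k+1}}\big)[\rho]\big\|\le r\,d_1\,d^{-1}\le\frac{d_1}{d^{-1}-1}\,d^{-r-1}$, the last step the elementary inequality $r(1-d)d^{r-1}\le1$. Since $|\mathrm{Tr}(B\sigma)|\le\|\sigma\|$ whenever $\|B\|_{\mathrm{op}}\le1$ (apply with $B=\mathds1$, $\Pi_\nu$, or $\mathds1-\Pi_\nu$), the true window functionals entering $\mu_\omega(\Xi_{\hat{\mathcal N}^{(k,k+r)}=\nu})$ and $\epsilon^{(k,k+r)}(\nu)$ differ from their non-demolition analogues by at most this quantity; and, exactly as in the previous paragraph, these analogues evaluated at $\rho=\rho^{(k)}$ reduce by Theorem~\ref{bs} to $\mathrm{Tr}(\Pi_\nu\rho^{(k)})$ up to $Ca^r$ (for \eqref{otra01}) and to a quantity $\le Ca^r$ (for \eqref{otra02}). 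Taking $\mathbb E$ over $\underline\xi^{(k)}$ and collecting terms gives \eqref{otra01} and \eqref{otra02}.

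The step I expect to be the main obstacle is this perturbative comparison. It must be kept local to the window: a naive comparison of $\rho^{(k)}$ with $\tilde\rho^{(k)}$ would accumulate an error growing with $k$, whereas the lemma asks for bounds uniform in $k$. Moreover the telescoping bookkeeping has to exploit the trace-preservation of the summed completely positive maps to prevent the ``future'' propagator from contributing a factor $C^r$ rather than the linear $r$; this is the point at which \eqref{eq:m1} (equivalently Assumption~\ref{Ass:m1}(ii)) enters. By contrast the non-demolition half is essentially immediate once one observes that $\mathrm{Tr}(\Pi_\nu\tilde\rho^{(k)})$ is a martingale with mean $\omega(\Pi_\nu)$ and that the window outcomes are conditionally i.i.d., so that Theorem~\ref{bs} applies verbatim; the only subtlety there is the measurability of the relative-entropy level sets of the estimator, which is standard for multinomial tests (\cite{Hoef}) and elementary here because $\sigma_S$ and $\Xi_\infty$ are finite.
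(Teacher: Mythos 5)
Your proposal is correct and follows essentially the same route as the paper: the non-demolition bounds \eqref{abu1}, \eqref{abu2} come from Sanov-type estimates (Theorem \ref{bs}) applied to the window empirical measure together with the inclusion of the estimator's level sets in the large-deviation sets, and \eqref{otra01}, \eqref{otra02} come from a telescoping comparison of the window propagator with the non-demolition one, localized by conditioning on $\underline{\xi}^{(k)}$ exactly as in Lemmas \ref{conv} and \ref{comparison} and Sections \ref{stepone}--\ref{steptwo}. The only (cosmetic) differences are that you phrase the large-deviation step directly via the closed set where the argmin is not $\nu'$ rather than via the neighborhoods $\mathcal{U}_\nu$, and your telescoping bookkeeping yields the slightly sharper linear-in-$r$ error $r\,d_1 d^{-1}$ before you weaken it to the stated $\frac{d_1}{d^{-1}-1}d^{-r-1}$.
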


\noindent The proof of Lemma \ref{lem:m2} is not particularly difficult but  lengthy, and we  defer it to the appendix; (see Section \ref{App2}). We note that, in the ``non-demolition model'', 
$$\tilde{ \mathbb{E}} \mathrm{Tr}(\Pi_{\nu} \tilde \rho^{(k)})=\mathrm{Tr}(\Pi_{\nu} \rho^{(0)}),$$
 and hence $\tilde{\mathbb{E}} \mathrm{Tr}(\Pi_{\nu} \tilde \rho^{(k)})$ can be replaced by $\mathrm{Tr}(\Pi_{\nu}  \rho^{(0)})$ in \eqref{abu1}. The basic ideas of the proof of Lemma \ref{lem:m2} are quite straightforward. First, one controls the ``non-demolition dynamics'' using large deviation estimates that lead to \eqref{abu1} and \eqref{abu2}. Then one controls the difference between the true dynamics, as given by the composition $\Phi^{(k)}_{\underline{\xi}} \circ \Phi^{(k-1)}_{\underline{\xi}} \circ \cdot\cdot\cdot$ of completely positive maps, and the ``non-demolition dynamics'', using simple perturbative estimates. This leads to the bounds given in \eqref{otra01} and \eqref{otra02}. The constant $a$ turns out to be directly related to the infimum over $\nu \neq \nu'$ of the relative entropies $I_{p_{\nu}}(p_{\nu'})$; see  Section \ref{App2}. 
\vspace{2mm}

We deduce from the definition of the error probability $ \epsilon^{(k,k+r)}(\nu) $ that    
\begin{align}
\label{eq:m7}
 \mu_{\omega} \Big ( \Big \{ \underline \xi \:\Big | \:  \min_{\nu \in \Xi_{\infty}} \tr( (1- \Pi_{\nu}) \rho^{(k+r)}(\underline\xi )) \geq \Delta \Big \} \cap 
\Xi_{\hat{\mathcal{N}}^{(k,k+r)} =\nu}  \Big ) \leq  \frac{  \epsilon^{(k,k+r)}(\nu)   }{\Delta},
\end{align}
for  any  $\Delta>0$.
Since, by definition, the union of the sets $ \Xi_{\hat{\mathcal{N}}^{(k,k+r)} =\nu} $
is the entire space $\Xi$, we conclude that 
\begin{align} 
\label{eq:m7prima}
  \mu_{\omega}  \Big ( \Big \{ \underline \xi \:\Big | \:  \min_{\nu \in  \Xi_{\infty} } \tr( (1- \Pi_{\nu})  \rho^{(k+r)}(\underline\xi ))  \geq  \Delta \Big \}  \Big ) \leq   \sum_{\nu \in \Xi_{\infty}} \frac{  \epsilon^{(k,k+r)}(\nu)   }{\Delta}, 
\end{align}
 uniformly in $\rho^{(0)}$. Combining this with Eq.\eqref{otra02}, we obtain the first part of our main result, Theorem \ref{princ1}.

\begin{theorem}[Jump process]\label{princ1}
Let $\varepsilon \in (0, 1]$. If $r$ is large enough and if $d_1$ is small enough,  then
\begin{align}
\label{eq:m7primaprimaprima}
  \mu_{\omega}  \Big (  \Big \{ \underline \xi \:\Big | \:  \max_{\nu \in \Xi_{\infty}} \mathrm{Tr}(  \Pi_{\nu} \rho^{(k+r)}(\underline\xi ))  \geq  1-  \varepsilon \Big \}  \Big )  \geq  1 - \varepsilon,
\end{align}
 for all $k \geq 0$, uniformly with respect to the initial condition  $\rho^{(0)}$. Furthermore, 
\begin{equation}\label{fin}
  \mu_{\omega}  \Big ( \Big \{  \underline \xi \hspace{.2cm} \Big | 
\hspace{.3cm}  \exists \: \nu  \in  \Xi_{\infty}\: : \:    \|  \rho^{(k+r)}(\underline\xi ) - \Pi_{\nu} \rho^{(k+r)}(\underline\xi ) \Pi_{\nu} \| \leq  \varepsilon \Big \}  \Big )\geq 1 - \varepsilon, 
\end{equation} 
uniformly with respect to  $\rho^{(0)}$. 
\end{theorem}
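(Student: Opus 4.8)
The plan is to obtain the two inequalities \eqref{eq:m7primaprimaprima} and \eqref{fin} in succession. The first is essentially already in hand from Lemma~\ref{lem:m2}; the second will then follow from it together with a purely algebraic estimate on block decompositions of density matrices.

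For \eqref{eq:m7primaprimaprima}: since $\sum_{\nu\in\Xi_\infty}\Pi_\nu=\mathds{1}$ and $\mathrm{Tr}(\rho^{(k+r)})=1$, the event $\{\underline\xi\mid\max_\nu\mathrm{Tr}(\Pi_\nu\rho^{(k+r)}(\underline\xi))\geq 1-\varepsilon\}$ coincides with $\{\underline\xi\mid\min_\nu\mathrm{Tr}((1-\Pi_\nu)\rho^{(k+r)}(\underline\xi))\leq\varepsilon\}$, whose complement has $\mu_\omega$-measure at most $\varepsilon^{-1}\sum_{\nu\in\Xi_\infty}\epsilon^{(k,k+r)}(\nu)$ by \eqref{eq:m7prima} with $\Delta=\varepsilon$; by \eqref{otra02} this is bounded by $|\Xi_\infty|\,\varepsilon^{-1}\big(Ca^r+\tfrac{d_1}{d^{-1}-1}d^{-r-1}\big)$. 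I would fix $r$ first, large enough that $|\Xi_\infty|\,\varepsilon^{-1}Ca^r\leq\varepsilon/2$, and then, with $r$ now fixed, take $d_1$ small enough that the remaining, $d_1$-linear, term is $\leq\varepsilon/2$; the complement then has measure $\leq\varepsilon$ uniformly in $k$ and $\rho^{(0)}$, which is \eqref{eq:m7primaprimaprima}.

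The crux is the following deterministic lemma, which I would establish first: there is a function $g$, depending only on $|\Xi_\infty|$, with $g(\delta)\to0$ as $\delta\to0$, such that every density matrix $\rho$ on $\mathcal H_{\overline P}$ satisfying $\mathrm{Tr}(\Pi_\nu\rho)\geq 1-\delta$ for some $\nu\in\Xi_\infty$ obeys $\|\rho-\Pi_\nu\rho\,\Pi_\nu\|\leq g(\delta)$. To prove it, one writes $\rho=\sum_{\mu,\mu'\in\Xi_\infty}\Pi_\mu\rho\,\Pi_{\mu'}$ (a finite sum, as $\Xi_\infty$ is finite); for $\rho\geq0$ one has $\Pi_\mu\rho\,\Pi_{\mu'}=(\rho^{1/2}\Pi_\mu)^*(\rho^{1/2}\Pi_{\mu'})$ with $\rho^{1/2}$ Hilbert--Schmidt, whence $\|\Pi_\mu\rho\,\Pi_{\mu'}\|\leq\sqrt{q_\mu q_{\mu'}}$ with $q_\mu:=\mathrm{Tr}(\Pi_\mu\rho)$, $\sum_\mu q_\mu=1$, $q_\nu\geq 1-\delta$. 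Summing $\|\Pi_\mu\rho\,\Pi_{\mu'}\|$ over all $(\mu,\mu')\neq(\nu,\nu)$ and applying Cauchy--Schwarz over the finite index set then gives $\|\rho-\Pi_\nu\rho\,\Pi_\nu\|\leq 2\sqrt{|\Xi_\infty|\,\delta}+|\Xi_\infty|\,\delta=:g(\delta)$; note $g$ is non-decreasing. (Since the operator norm is dominated by the trace norm, this covers either meaning of $\|\cdot\|$ in use here.)

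To conclude: given $\varepsilon\in(0,1]$, pick $\delta>0$ with $g(\delta)\leq\varepsilon$ and set $\varepsilon':=\min(\delta,\varepsilon)$. Applying \eqref{eq:m7primaprimaprima} with $\varepsilon'$ in place of $\varepsilon$ (legitimate once $r$ is large enough and $d_1$ small enough, now in terms of $\varepsilon'$, hence of $\varepsilon$), the event $\{\underline\xi\mid\max_\nu\mathrm{Tr}(\Pi_\nu\rho^{(k+r)}(\underline\xi))\geq 1-\varepsilon'\}$ has $\mu_\omega$-measure $\geq 1-\varepsilon'\geq 1-\varepsilon$, uniformly in $k$ and $\rho^{(0)}$; on this event the lemma (with $\varepsilon'\leq\delta$ and $g$ non-decreasing) produces some $\nu\in\Xi_\infty$ with $\|\rho^{(k+r)}(\underline\xi)-\Pi_\nu\rho^{(k+r)}(\underline\xi)\Pi_\nu\|\leq g(\varepsilon')\leq\varepsilon$. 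Hence this event is contained in the set appearing in \eqref{fin}, which proves \eqref{fin}. I do not anticipate a real obstacle: the only technical ingredient is the Cauchy--Schwarz bound on the off-diagonal blocks $\Pi_\mu\rho\,\Pi_{\mu'}$ of a positive trace-class operator, and the one point requiring care is the order in which the parameters $r$, $d_1$ and $\delta$ are chosen.
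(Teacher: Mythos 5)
Your proof is correct, and while the first inequality is handled exactly as in the paper (combining \eqref{eq:m7prima} with \eqref{otra02} and choosing $r$ before $d_1$), your derivation of \eqref{fin} takes a genuinely different route. The paper controls the off-diagonal blocks \emph{dynamically}: it bounds $\sum_{\nu\neq\nu'}\mathbb{E}\,\|\Pi_\nu\rho^{(k+r)}\Pi_{\nu'}\|$ by $(\delta_{\nu\nu'})^{r}+\tfrac{d_1}{d^{-1}-1}d^{-r-1}$ using the exponential decoherence estimate \eqref{dec} from the non-demolition analysis plus a perturbative comparison, then applies Markov's inequality to a bad set $\mathcal{B}_{k+r}$ and intersects with the good set from the first inequality (each bad set getting measure $\varepsilon/2$). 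You instead prove a purely \emph{deterministic} operator inequality: for any density matrix, $\mathrm{Tr}(\Pi_\nu\rho)\geq 1-\delta$ forces $\|\rho-\Pi_\nu\rho\,\Pi_\nu\|\leq 2\sqrt{|\Xi_\infty|\delta}+|\Xi_\infty|\delta$, via the Cauchy--Schwarz bound $\|\Pi_\mu\rho\,\Pi_{\mu'}\|\leq\sqrt{q_\mu q_{\mu'}}$ on the blocks of a positive trace-class operator -- and then \eqref{fin} follows from \eqref{eq:m7primaprimaprima} alone, applied with a smaller parameter $\varepsilon'$. Your argument is more elementary and self-contained: it needs no second appeal to the dynamics, to Assumption \ref{Ass:m1}, or to the decoherence bound, and it makes transparent that the second claim is a soft consequence of the first. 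What the paper's route buys in exchange is sharper quantitative information -- the off-diagonal blocks are shown to be exponentially small in $r$ in expectation, whereas your deterministic lemma only yields an $O(\sqrt{\delta})$ loss, so the required $r$ is somewhat larger (one needs $\delta\sim\varepsilon^{2}/|\Xi_\infty|$); for the theorem as stated this makes no difference. Both arguments share the same mild bookkeeping caveat, inherited from Lemma \ref{lem:m2}, that the term $\tfrac{d_1}{d^{-1}-1}d^{-r-1}$ tends to $0$ as $d_1\to 0$ for fixed $r$ only when $d_2$ is strictly less than $1$, so that $d^{-1}-1$ stays bounded away from zero.
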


Theorem \ref{princ1} says that if the dynamics of the system $S$ is very close to a ``non-demolition dynamics'' then the evolution of the reduced state of $\overline{P}$ is very close to one described by a stochastic process  on the space of density matrices commuting with the operator 
$\mathcal{N}$, uniformly in the initial state $\rho^{(0)}$. In other words, the dynamics of the reduced states of $\overline{P}$ is encoded into a jump process on the spectrum, $\Xi_{\infty}$, of $\mathcal{N}$. Unfortunately, we are unable to determine the transition rates  of this jump process for general models. However, in certain limiting regimes, it is given by a Markov chain with explicit transition probabilities. 
Figure 2, below, refers to  a simple model of this kind. 

\begin{figure}[H]
\begin{center}
\begin{tikzpicture}
 \def\a{0.125};
  \def\b{8};

  \fill[color=red!30] (1.5,-0.1) rectangle (8.5,0.1);  
       \draw[->,thick] (0,0)--(14,0.0);

              \draw[<-] (0.5,-0.8)--(0.8,-0.8); 
              \draw[->] (1.2,-0.8)--(1.5,-0.8);
            
            \draw[-,color=blue] (0.5,-0.5)--(0.5,0.5);
              \draw[-,color=blue] (0.5+\a,-0.5)--(0.5+\a,0.5);
             \draw[-,color=blue] (0.75,-0.5)--(0.75,0.5);
             \draw[-,color=blue] (0.75+\a,-0.5)--(0.75+\a,0.5);
             \draw[-,color=blue] (1,-0.5)--(1,0.5);
                \draw[-,color=blue] (1+\a,-0.5)--(1+\a,0.5);
              \draw[-,color=blue] (1.25,-0.5)--(1.25,0.5);
              \draw[-,color=blue] (1.25+\a,-0.5)--(1.25+\a,0.5);
               \draw[-,color=blue] (1.5,-0.5)--(1.5,0.5);
               
            \draw[color=blue] (1,-0.7) node[below] {  \footnotesize $\lambda_1$}; 
            
            \draw[color=blue,decorate,decoration={brace}]
  (0.5,0.7) -- (1.5,0.7) node[above,pos=0.5] {M};

                \draw[<-] (0.5+\b,-0.8)--(0.8+\b,-0.8); 
              \draw[->] (1.2+\b,-0.8)--(1.5+\b,-0.8);
            
            \draw[-,color=blue] (0.5+\b,-0.5)--(0.5+\b,0.5);
              \draw[-,color=blue] (0.5+\a+\b,-0.5)--(0.5+\a+\b,0.5);
             \draw[-,color=blue] (0.75+\b,-0.5)--(0.75+\b,0.5);
             \draw[-,color=blue] (0.75+\a+\b,-0.5)--(0.75+\a+\b,0.5);
             \draw[-,color=blue] (1+\b,-0.5)--(1+\b,0.5);
                \draw[-,color=blue] (1+\a+\b,-0.5)--(1+\a+\b,0.5);
              \draw[-,color=blue] (1.25+\b,-0.5)--(1.25+\b,0.5);
              \draw[-,color=blue] (1.25+\a+\b,-0.5)--(1.25+\a+\b,0.5);
               \draw[-,color=blue] (1.5+\b,-0.5)--(1.5+\b,0.5);
               
            \draw[color=blue] (1+\b,-0.7) node[below] {  \footnotesize $\lambda_1$}; 
                               
                 \draw[<->] (0.5,-1.2)--(8.5,-1.2); 
                 \draw (5,-1.2) node[below] {  \footnotesize 1 cycle of duration $\lambda_2$};

            \draw[color=blue,decorate,decoration={brace}]
  (8.5,0.7) -- (9.5,0.7) node[above,pos=0.5] {M};

\end{tikzpicture}
\end{center}
\end{figure}

In Figure 2, the duration, $\lambda_2$, of a full measurement cycle is much larger than the length, $\lambda_1$, of the time interval during which projective measurements in $E$ are carried out. In time intervals of length $\lambda_2 - \lambda_1$, the evolution of the system $S$ is unitary; the corresponding unitary propagator being given by 
$$e^{-i (\lambda_2- \lambda_1)H_S},$$
for some Hamiltonian $H_S$ acting on $\mathcal{H}_S = \mathcal{H}_{\overline{P}} \otimes \mathcal{H}_E$ of the form 
$$H_S = H_{\overline{P}} \otimes \id + \id \otimes H_E.$$
In time intervals of length $\lambda_1$, the unitary evolution is ``interrupted'' by $M$ projective measurements of a quantity of $E$ represented by a self-adjoint operator $X$ acting on $\mathcal{H}_E$; (we refer to the Introduction for a simple, concrete model; and to \cite{FS} for an outline of the theory of projective measurements). The times of measurement of $X$, during the $(n+1)^{st}$ cycle, are approximately given by
\begin{equation*}
n \lambda_2  +   \frac{j}{M-1}\lambda_1, \qquad j=0,1,...,M-1.
\end{equation*}
If $M$ is sufficiently large and $\lambda_1$ is small enough,  we can use Theorem  \ref{princ1} to approximately calculate the transition probabilities for jumps on the state space $\Xi_{\infty} = \lbrace 1,...,N \rbrace$ during each cycle. It is not hard to show that, in the limit where first $M \rightarrow \infty$ and then 
$\lambda_1 \rightarrow 0$ the transition probabilities for jumps from $\nu$  to $\nu' $ (with $\nu, \nu' \in \Xi_{\infty}$) approach the ones of a Markov chain whose transition probabilities are given by
\begin{equation}
\mathrm{Tr}(\Pi_{\nu} e^{i \lambda_2 H_{\overline{P}}} \Pi_{\nu'}  e^{-i \lambda_2 H_{\overline{P}}} \Pi_{\nu}).
\end{equation}

To conclude this section, we present the proof of the second inequality in Theorem \ref{princ1}. 

\begin{proof} \textit{(Second inequality of Theorem \ref{princ1})} 
We first remark that 
\begin{align*}  
 \mathbb{E}  &   \| \Pi_\nu   \rho^{(k+r)} \Pi_{\nu'}   \|   =  \sum_{\underline{\xi}^{(k+r)} }  
  \| \Pi_\nu   \rho^{(k+r)}(\underline{\xi}^{(k+r)}) \Pi_{\nu'}   \|  \text{ } \mu_{\omega}(\underline{\xi}^{(k+r)})\\
&=  \sum_{\underline{\xi}^{(k+r)} }  
  \| \Pi_\nu   \Phi_{ \underline{\xi}}^{(r+ k)} \circ \cdots  \circ \Phi_{ \underline{\xi}}^{(k+1)} [  \rho^{(k)} (\underline{\xi}^{(k)}) ]   \Pi_{\nu'}   \|  \text{ }\mu_{\omega}(\underline{\xi}^{(k)})\\
  & \leq ( \delta_{\nu \nu'})^{r} +    \frac{ d_1 }{d^{-1}-1}  d^{-r-1}, 
\end{align*}
where we  have used \eqref{dec} and a simple perturbative estimate based on item (i) of Assumption \ref{Ass:m1}; (see also Appendix \ref{App2} for  similar, more detailed calculations). The above inequality and \eqref{eq:m7primaprimaprima} enable us to choose $r$ so large and  $d_1$ so small (depending on $ r$ and $d_2$) that 
\begin{align} \label{nop1}
\sum_{\nu \ne \nu'}  \mathbb{E}   \Big [ \| \Pi_\nu \rho^{(k+r)} \Pi_{\nu'}   \|   \Big ] \leq  \varepsilon^2/4,  
\end{align}
and such that \eqref{eq:m7primaprimaprima} is fulfilled, with $\varepsilon$ replaced by $\varepsilon/2$. Introducing the sets
\begin{equation}\label{put1}
\mathcal{A}_{k+ r} :=  \Big \{ \underline \xi \:\Big | \:  \max_{\nu \in \Xi_{\infty}} \tr(  \Pi_\nu \rho^{(k+r)}(\underline\xi )) \leq 1-  \varepsilon/2 \Big \},  
\end{equation}
and
\begin{equation}\label{put2}
\mathcal{B}_{k+r} :=  \Big \{ \underline \xi \: \: \Big | \: \:  \sum_{\nu \ne \nu'} \:   \| \Pi_\nu  \rho^{(k+r)}(\underline \xi) \Pi_{\nu'}   \|    \geq  \varepsilon/2  \Big \}, 
\end{equation}
for all $k\geq 0$, it follows from  \eqref{eq:m7primaprimaprima}  and  \eqref{nop1} that $ \mu_{\omega}(\mathcal{A}_{k+ r} ) \leq  \varepsilon/2$ and  $\mu_{\omega}(\mathcal{B}_{k+r} ) \leq  \varepsilon/2$, for all $k \geq 0$.  Let $\underline \xi \not\in \mathcal{A}_{k+r} \cup \mathcal{B}_{k+ r}$. Eq. \eqref{put1} shows that there exists some $\nu_{0} \in \Xi_{\infty}$ such that
\begin{equation}\label{casi1}
\sum_{\nu \ne \nu_0} \tr(  \Pi_\nu  \rho^{(k+r)}(\underline\xi )) =
\sum_{\nu \ne \nu_0} \|  \Pi_\nu  \rho^{(k+r)} \Pi_\nu  \| \leq \varepsilon/2,  
\end{equation}
while Eq. \eqref{put2} implies that 
\begin{equation}\label{casi2}
 \sum_{\nu \ne \nu'} \:   \| \Pi_\nu  \rho^{(k+r)}(\underline \xi) \Pi_{\nu'}   \|    \leq  \varepsilon/2. 
\end{equation}
Gathering \eqref{casi1}-\eqref{casi2},  we finally deduce that  $
\|  \rho^{(k+r)}(\underline \xi)  - \Pi_{\nu_0}  \rho^{(k+r)}(\underline \xi) \Pi_{\nu_0} \| \leq \varepsilon.   
$
\end{proof}

\section{Appendix }
In this appendix we present all proofs that have been omitted in the main body of the text.
\subsection{Proofs of results in Section \ref{Section3}} \label{App1}
We begin this subsection by explaining why the measures $\mu_{\omega}(\cdot \vert \nu)$ and $\mu_{\omega}(\cdot \vert \nu')$ are mutually singular, unless $\nu=\nu'$. Afterwards, we complete the proof of Lemma \ref{312}.
\subsubsection{Comments concerning the mutual singularity of the product measures $\mu_{\omega}(\cdot \vert \nu)$}

Since the measures $p_{\nu} \neq p_{\nu'}$ unless $\nu=\nu'$,  we choose  open sets $\mathcal{U}_\nu$ (defined by the metric on $\mathcal{M}(\sigma_S)$)  centered on $p_{\nu}$ such that $\mathcal{U}_\nu \cap \mathcal{U}_{\nu'}=\emptyset$ unless $\nu=\nu'$. For every $ \nu \in \Xi_{\infty}$ we define $ \Xi_\nu^{(n)} \subset \Xi_{\infty} $ as the inverse image of the set $\mathcal{U}_\nu$ under  the empirical measure $\sum_{\xi \in \sigma_S} f_{\xi}^{(n)} \delta_{\xi}$ .
 It directly follows from \eqref{largee} that 
 \begin{equation}\label{es1}
\mu_{\omega}(\Xi_{ \nu'}^{(n)} \vert \nu) \leq\mu_{\omega}((\Xi_\nu^{(n)})^c \vert \nu ) \leq C_{ \mathcal{U}_\nu } \exp(-n  I_{p_\nu}( (\mathcal{U}_\nu)^c )/2)   
\end{equation}
if $\nu' \neq \nu$, and  hence Borel-Cantelli's Lemma implies that 
$$\mu_{\omega} (\limsup  (\Xi_\nu^{(n)})^c \vert \nu) = 0 = \mu_{\omega}(\limsup  (\Xi_{ \nu'}^{(n)}) \vert \nu) = 0    . $$ 
We define 
$ \Xi_\nu  =  \Big ( \limsup  (\Xi_\nu^{(n)})^c \Big )^c =
 \liminf  (\Xi_\nu^{(n)})
 $. Then we have that 
\begin{equation}\label{es2}
\mu_{\omega}( \Xi_{\nu'} \vert \nu ) = \delta_{\nu,\nu'}.
 \end{equation}
Clearly, 
\begin{equation} \label{es3}
\mathbb{E}  \Big ( \|  \Pi_{\nu} \rho^{(n)} \Pi_{\nu} \| 
\chi_{ \Xi_{ \nu'}^{(n)}  } \Big ) = \mu_{\omega}( \Xi_{ \nu'}^{(n)} \vert \nu ) \text{ }
\| \Pi_{\nu} \rho^{(0)} \Pi_{\nu} \|,
\end{equation}
where $ \mathbb{E} $ denotes the expectation value with respect to the measure $\mu_{\omega}$.  Using again \eqref{es1}, we deduce that 
\begin{equation}
\sum_{n \geq 0} \mathbb{E}  \Big ( \|  \Pi_{\nu} \rho^{(n)} \Pi_{\nu} \| 
\chi_{ \Xi_{ \nu'}^{(n)}  } \Big )< \infty,
\end{equation}
and hence  that $\|  \Pi_{\nu} \rho^{(n)} \Pi_{\nu} \| 
\chi_{ \Xi_{ \nu'}^{(n)}  }$  converges to zero, $\mu_{\omega}$-a.e.. For  every element $\underline{\xi} \in \Xi_{ \nu'}$, there is $n_0 \in \mathbb{N}$ such that $\underline{\xi} \in \Xi_{ \nu'}^{(n)}$ for all $n \geq n_0$, and we conclude that $\|  \Pi_{\nu} \rho^{(n)} \Pi_{\nu} \| 
\chi_{ \Xi_{ \nu'}  }$  converges to zero, $\mu_{\omega}$-a.e..

\subsubsection{ Completion of the proof of Theorem  \ref{312}} \label{proof1}
As $ \sigma_{S}$ is finite and the relative entropy function $ I_{p_\nu} $ is lower-semicontinuous,  we can choose the sets $\{  \mathcal{U}_{\nu} \}_{\nu \in  \Xi_{\infty}}$ (also used in the previous paragraph) such that they satisfy:
\vspace{2mm}

 (a) $  \mathcal{U}_\nu   \cap \mathcal{U}_{\nu'}  = \emptyset$, 

 (b) For every $\nu \ne \nu' : $
$ I_{p_\nu}(p) \geq \min_{\nu \neq \nu'}  I_{p_\nu}(p_{\nu'})  - \delta/2 
\qquad \text{ }\forall \: p \in  \mathcal{U}_{\nu'}   $,
\vspace{2mm}

\noindent where $0<\delta \leq  \min_{\nu \neq \nu'}  I_{p_\nu}(p_{\nu'}) $.   In this particular case, it is  convenient to identify $\mathcal{M}(\sigma_S)$ with the compact convex subset of $\mathbb{R}^{\vert \sigma_S \vert}$ made of all vectors $p=(p_1,...,p_{\vert \sigma_S \vert})$ such that $p_i \geq 0$ and $\sum_i p_i=1$. We define again the sets $\Xi_{\nu}^{(n)}$  as the inverse image of  the sets $\mathcal{U}_\nu$ under the empirical measure $\sum_{\xi \in \sigma_S} f_{\xi}^{(n)} \delta_{\xi}$. 
Since $\Xi_{\infty}$ is finite, we deduce from Theorem \ref{bs} that there are  constants  $ a \in (0, 1) $ and $C > 0$ such that,  for every $\nu \ne \nu'$,
\begin{equation}\label{deci1sn}
\mu_{\omega}(\Xi_{\nu'}^{(n)} \vert \nu )\leq \mu_{\omega}((\Xi_{\nu}^{(n)})^c \vert \nu ) \leq C a^n .
\end{equation}
Using \eqref{deci1sn} we obtain 
\begin{align} \label{co1}
 \mathbb{E} \Big (  \chi_{ \Xi_{\nu}^{(n)} }  \sum_{\nu'\ne \nu}  {\rm Tr}(\Pi_{\nu'} \rho^{(0)} )  \frac{    \mu_{\omega}(\cdot \vert \nu')}{\mu_{\omega}   }  \Big ) 
 \leq C   a^n.  
\end{align}
Eq.  \eqref{co1}  and some easy calculations imply further that 
\begin{align} \label{coucoucsn}
 \mathbb{E} \Big [    
\chi_{ \Xi_{\nu}^{(n)} }  &  \Big \|  {\rm Tr}(\Pi_{\nu} \rho^{(0)})  \Pi_{\nu} \rho^{(n)}
 \Pi_{\nu} 
 -    \Pi_{\nu} \rho^{(0)} \Pi_{\nu} \Big \| \Big ]    \leq C a^n. 
\end{align}
Following similar procedures as above with 
$\chi_{ \Xi_{\nu}^{(n)} }    \Big \|  {\rm Tr}(\Pi_{\nu} 
\rho^{(0)} \Pi_{\nu})  \Pi_{\nu} \rho^{(n)}
\Pi_{\nu} 
 -    \Pi_{\nu} \rho^{(0)} \Pi_{\nu} \Big \|$ instead of 
$ \|  \Pi_{\nu} \rho^{(n)} \Pi_{\nu} \| 
\chi_{ \Xi_{\tilde \nu}^{(n)}  } $, using the fact that $ \liminf \Xi^{(n)}_\nu = \Xi_\nu $, we find that  
\begin{equation} \label{colo1}
\lim_{n \to \infty } \chi_{ \Xi_\nu}     \Big \|  {\rm Tr}(\Pi_{\nu} 
\rho^{(0)} \Pi_{\nu})  \Pi_{\nu} \rho^{(n)}
 \Pi_{\nu} 
 -    \Pi_{\nu} \rho^{(0)} \Pi_{\nu} \Big \| = 0,
\end{equation}
$\mu_{\omega}$-almost surely.

\subsection{Proofs of results in Section \ref{Section4}} \label{App2}

In this part of the appendix we prove Lemma \ref{lem:m2}. For ease of comprehension, the proof  is subdivided into a series of lemmas. The inequalities of Lemma  \ref{lem:m2} are proven in paragraphs \ref{stepone} and \ref{steptwo}.

\subsubsection{Auxiliary lemmas} \label{intermed}

Our first result concerns the speed of ``purification'' in non-demolition measurements. We recall  that the functionals 
\begin{equation} \label{petitrappel}
\tilde{\mu}_{\omega}(\xi_1,...,\xi_n \vert \nu)=\prod_{i=1}^{n} p(\xi_i \vert \nu) \qquad \text{and} \qquad \tilde{\mu}_{\omega}(\xi_1,...,\xi_n)= \sum_{\nu \in \Xi_{\infty}} \tr(\Pi_{\nu} \rho^{(0)}) \tilde{\mu}_{\omega}(\xi_1,...,\xi_n\vert \nu)
\end{equation}
generate probability measures on the space $(\Xi, \Sigma)$ that we denote by the same symbols, and that the probability distributions $p_{\nu}=\sum_{\xi \in \sigma_S} p(\xi \vert \nu) \delta_{\xi}$ and $p_{\nu'}=\sum_{\xi \in \sigma_S} p(\xi \vert \nu') \delta_{\xi}$ on $\sigma_S$ are assumed to be distinct unless $\nu=\nu'$. Furthermore it is assumed that $p(\xi \vert \nu) \neq 0$ for all $\xi, \nu$; see Item (ii) of Assumption \ref{Ass:m1}, and that the sets $\Xi_{\infty}$ and $\sigma_S$ are finite.

\begin{lemma}\label{conv}
Let $\delta \in (0, \min_{\nu \neq \nu'}  I_{p_\nu}(p_{\nu'}))$. There are  constants $C>0$ and 
$a \in (0, 1)$ (independent of $\rho^{(0)}$),  and a family of sets $\Xi^{(n)}_\nu \subset \Xi$ (independent of $\rho^{(0)}$) such that the following holds true: 
\begin{enumerate}[(i)]
\item 
$\Xi_{\nu}^{(n)} \cap \Xi_{\nu'}^{(n)}=\emptyset$ and $\vert  \tilde{\mu}_{\omega}(\Xi_{\nu'}^{(n)}\vert \nu) \vert   \leq Ca^n$   if  $\nu \neq \nu'$.
\item For   all $\nu \in  \Xi_{\infty}$,
\begin{equation}
\label{disjoi}
\vert  \tilde{\mu}_{\omega}(\Xi_{\nu}^{(n)}) - {\rm{Tr}}(\Pi_{\nu} \rho^{(0)})  \vert \leq C a^n, \qquad  
  1-  \tilde{\mu}_{\omega}\big( \underset{\nu \in \Xi_{\infty}}{ \bigcup} \Xi_{\nu}^{(n)} \big) \leq
 C | \Xi_{\infty}| a^n. 
\end{equation}

\item If $\underline{\xi} \in  \Xi_{\nu}^{(n)}$ and if $\mathrm{Tr}(\rho^{(0)} \Pi_{\nu}) > 0$, then 
\begin{equation} \label{concon}
\sum_{\nu'\neq \nu} \mathrm{Tr}(\Pi_{\nu'} \tilde \rho^{(n)}(\underline{\xi}^{(n)})) \leq \sum_{\nu'\neq \nu} \frac{ \mathrm{Tr}(\Pi_{\nu'} \rho^{(0)})}{ \mathrm{Tr}(\Pi_{\nu} \rho^{(0)}) } e^{- n ( \underset{ \nu \neq \nu'}{\min} I_{p_\nu}(p_{\nu'}) - \delta )}. 
\end{equation}
\end{enumerate}
\end{lemma}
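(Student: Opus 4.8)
The plan is to begin by fixing, independently of $\rho^{(0)}$, a family of pairwise disjoint \emph{open} neighbourhoods $\mathcal{U}_\nu\subset\mathcal{M}(\sigma_S)$ of the points $p_\nu$, calibrated so that two properties hold: (a) since $\mathcal{U}_\nu$ is open, $\mathcal{U}_\nu^c$ is a closed subset of the compact simplex $\mathcal{M}(\sigma_S)$ not containing $p_\nu$, hence $I_{p_\nu}(\mathcal{U}_\nu^c)>0$ and Theorem~\ref{bs} is applicable to $K=\mathcal{U}_\nu^c$; and (b) for every $\nu$, every $\nu'\neq\nu$ and every $q\in\mathcal{U}_\nu$ one has $\sum_{\xi\in\sigma_S} q(\xi)\ln\tfrac{p(\xi\vert\nu')}{p(\xi\vert\nu)} < -\min_{\mu\neq\mu'} I_{p_\mu}(p_{\mu'}) + \delta$. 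Property (b) can be arranged because $p(\xi\vert\mu)\neq 0$ for all $\xi,\mu$ (Item (ii) of Assumption~\ref{Ass:m1}), so $q\mapsto\sum_\xi q(\xi)\ln\tfrac{p(\xi\vert\nu')}{p(\xi\vert\nu)}$ is an affine, hence continuous, function on $\mathcal{M}(\sigma_S)$ whose value at $q=p_\nu$ equals $-I_{p_{\nu'}}(p_\nu)\leq-\min_{\mu\neq\mu'}I_{p_\mu}(p_{\mu'})$; since $\delta>0$, this strict inequality persists on a small enough neighbourhood, and one intersects over the finitely many $\nu'$ and shrinks further to secure (a) and pairwise disjointness. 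Then set $\Xi_\nu^{(n)}:=\{\underline\xi\mid\sum_{\xi\in\sigma_S}f_\xi^{(n)}(\underline\xi)\,\delta_\xi\in\mathcal{U}_\nu\}$; this depends only on the $p(\cdot\vert\nu)$, not on $\rho^{(0)}$.

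Item (i) is then immediate: disjointness of the $\Xi_\nu^{(n)}$ comes from that of the $\mathcal{U}_\nu$, and for $\nu\neq\nu'$ one has $\Xi_{\nu'}^{(n)}\subset(\Xi_\nu^{(n)})^c=\{\underline\xi\mid\text{empirical measure}\in\mathcal{U}_\nu^c\}$, so Theorem~\ref{bs} gives $\tilde\mu_\omega(\Xi_{\nu'}^{(n)}\vert\nu)\leq\tilde\mu_\omega((\Xi_\nu^{(n)})^c\vert\nu)\leq C_\nu e^{-nI_{p_\nu}(\mathcal{U}_\nu^c)/2}$; as $\Xi_\infty$ is finite, $a:=\max_\nu e^{-I_{p_\nu}(\mathcal{U}_\nu^c)/2}\in(0,1)$ and a common constant $C$ work for all $\nu$. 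For item (ii), expand $\tilde\mu_\omega(\Xi_\nu^{(n)})=\sum_\mu\mathrm{Tr}(\Pi_\mu\rho^{(0)})\,\tilde\mu_\omega(\Xi_\nu^{(n)}\vert\mu)$: the $\mu=\nu$ term equals $\mathrm{Tr}(\Pi_\nu\rho^{(0)})\big(1-\tilde\mu_\omega((\Xi_\nu^{(n)})^c\vert\nu)\big)$, which lies within $Ca^n$ of $\mathrm{Tr}(\Pi_\nu\rho^{(0)})$, while the terms with $\mu\neq\nu$ are nonnegative and sum to at most $Ca^n$ by item (i) together with $\sum_{\mu\neq\nu}\mathrm{Tr}(\Pi_\mu\rho^{(0)})\leq 1$; this yields the first bound of (ii) after doubling $C$. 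Summing over $\nu$, using disjointness and $\sum_\nu\Pi_\nu=\mathds{1}$ (whence $\sum_\nu\mathrm{Tr}(\Pi_\nu\rho^{(0)})=1$), gives the second bound of (ii).

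The substance of the lemma is item (iii), and the key point is that one needs the \emph{sharp} rate $\min_{\mu\neq\mu'}I_{p_\mu}(p_{\mu'})-\delta$ rather than the factor-$\tfrac12$-degraded rate supplied by Theorem~\ref{bs}; so I would argue directly from the explicit density-matrix formula \eqref{decsn} instead of through Sanov's bound. Specialized to the non-demolition model and to the diagonal blocks $\Pi_\mu(\cdot)\Pi_\mu$ (where $\rho^{(k)}=\tilde\rho^{(k)}$, $\mu_\omega=\tilde\mu_\omega$ and $\vert c_{\xi_i}(\mu)\vert^2=p(\xi_i\vert\mu)$, so $\prod_i\vert c_{\xi_i}(\mu)\vert^2=\tilde\mu_\omega(\underline\xi^{(n)}\vert\mu)$), formula \eqref{decsn} reads $\mathrm{Tr}(\Pi_\mu\tilde\rho^{(n)})=\tfrac{\tilde\mu_\omega(\underline\xi^{(n)}\vert\mu)}{\tilde\mu_\omega(\underline\xi^{(n)})}\mathrm{Tr}(\Pi_\mu\rho^{(0)})$ for each $\mu$; and when $\underline\xi\in\Xi_\nu^{(n)}$ with $\mathrm{Tr}(\rho^{(0)}\Pi_\nu)>0$ one has $\tilde\mu_\omega(\underline\xi^{(n)})\geq\mathrm{Tr}(\Pi_\nu\rho^{(0)})\prod_ip(\xi_i\vert\nu)>0$, so $\tilde\rho^{(n)}$ is well defined. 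Taking the ratio of the $\mu=\nu'$ and $\mu=\nu$ cases and using $\tilde\mu_\omega(\underline\xi^{(n)})\geq\mathrm{Tr}(\Pi_\nu\rho^{(0)})\,\tilde\mu_\omega(\underline\xi^{(n)}\vert\nu)$, one obtains $\mathrm{Tr}(\Pi_{\nu'}\tilde\rho^{(n)})\leq\tfrac{\mathrm{Tr}(\Pi_{\nu'}\rho^{(0)})}{\mathrm{Tr}(\Pi_\nu\rho^{(0)})}\prod_{i=1}^n\tfrac{p(\xi_i\vert\nu')}{p(\xi_i\vert\nu)}=\tfrac{\mathrm{Tr}(\Pi_{\nu'}\rho^{(0)})}{\mathrm{Tr}(\Pi_\nu\rho^{(0)})}\exp\!\big(n\sum_\xi f_\xi^{(n)}(\underline\xi)\ln\tfrac{p(\xi\vert\nu')}{p(\xi\vert\nu)}\big)$, and since the empirical measure of $\underline\xi$ lies in $\mathcal{U}_\nu$, property (b) bounds the exponent by $-n(\min_{\mu\neq\mu'}I_{p_\mu}(p_{\mu'})-\delta)$. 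Summing over $\nu'\neq\nu$ gives (iii). The only genuinely delicate step is the simultaneous calibration of the $\mathcal{U}_\nu$ in the first paragraph so that (a), (b) and pairwise disjointness all hold at once; the rest is bookkeeping with the finitely many constants $C_\nu$, $I_{p_\nu}(\mathcal{U}_\nu^c)$ and $\vert\Xi_\infty\vert$.
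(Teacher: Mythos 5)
Your proof is correct and follows essentially the same route as the paper: disjoint neighbourhoods $\mathcal{U}_\nu$ of the $p_\nu$ pulled back through the empirical measure, Theorem~\ref{bs} for items (i)--(ii), and a direct likelihood-ratio bound on $\mathrm{Tr}(\Pi_{\nu'}\tilde\rho^{(n)})/\mathrm{Tr}(\Pi_\nu\tilde\rho^{(n)})$ for item (iii). The only (cosmetic) difference is that you calibrate the $\mathcal{U}_\nu$ by a single affine condition on $q\mapsto\sum_\xi q(\xi)\ln\tfrac{p(\xi\vert\nu')}{p(\xi\vert\nu)}$, whereas the paper imposes the equivalent pair of conditions $I_{p_\nu}(q)\leq\delta/2$ on $\mathcal{U}_\nu$ and $I_{p_{\nu'}}(q)\geq\min I-\delta/2$ for $q\in\mathcal{U}_\nu$, and then identifies the exponent as the difference of these two relative entropies.
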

\vspace{2mm}

\begin{proof} The assumption  that $p(\xi \vert \nu) \neq 0$ for all $\xi, \nu$ implies that the sets $\mathcal{U}_{\nu}$ constructed in Paragraph \ref{proof1} can be chosen such that the further property
\vspace{2mm}

(c) For every $\nu$,  $ I_{p_\nu}(p) \leq \frac{\delta}{2}, \text{ } \qquad \forall \: p \in  \mathcal{U}_{\nu},   $
\vspace{2mm}

\noindent is satisfied (as well as (a) and (b) of Paragraph  \ref{proof1}), because the relative entropy functions are continuous in that case. The family of sets $(\Xi_{\nu}^{(n)})$ is defined as in Section \ref{proof1} and still satisfies
\begin{equation}\label{deci0sn}
\tilde \mu_{\omega}(\Xi_{\nu'}^{(n)}\vert \nu)\leq \tilde \mu_{\omega}((\Xi_{\nu}^{(n)})^c\vert \nu) \leq C a^n, 
\end{equation}
if $\nu \neq \nu'$ and hence Item (i) is clear. Furthermore, 
\begin{align}\label{alf1}
\Big | \tilde \mu_{\omega}(\Xi_{\nu}^{(n)}   )-  {\rm Tr}(\Pi_\nu \rho^{(0)}  )  \Big |  
=  \Big |  \sum_{\nu ' \in \Xi_{\infty}} {\rm Tr}(\Pi_{\nu'} \rho^{(0)}) \tilde \mu_{\omega} (\Xi_{\nu}^{(n)} \vert \nu'  )-  {\rm Tr}(\Pi_\nu \rho^{(0)})   \Big |
\leq C  a^n,
\end{align}
where we have used that $  \sum_{\nu' } {\rm Tr}(\Pi_{\nu'} \rho^{(0)}) = 1 $. This implies the first equation in  Item (ii). The second equation of  Item (ii) follows easily:
\begin{align} \label{alf2}
\Big | 1 - \tilde \mu_{\omega}\Big ( \bigcup_{\nu \in \Xi_{\infty}}  \Xi_{\nu}^{(n)} \Big) \Big | = 
\Big | \sum_{\nu \in \Xi_{\infty}} {\rm Tr}(\Pi_\nu \rho^{(0)}  ) - 
\sum_{\nu \in \Xi_{\infty}} \tilde \mu_{\omega}\Big (   \Xi_{\nu}^{(n)} \Big)
\Big | 
\leq  C |\Xi_{\infty}| a^n,  
\end{align}  
where we use \eqref{alf1}. We finally prove Item (iii).  We recall that the empirical frequencies are given by  
\begin{equation}\label{alf3} 
f_{\xi'}^{(n)}( \underline{\xi})  = \frac{1}{n} \# \Big \{ j \in \{1, \cdots, n \} \: \Big | \: \xi_j = \xi'  \Big \}.
\end{equation}

\noindent We have that 
\begin{equation}\label{alf5.1}
\begin{split}
\exp \Big (- n I_{p_\nu} \big( \sum_{\xi' \in \sigma_S} f_{\xi'}^{(n)}( \underline{\xi}) \delta_{\xi'} \big)
\Big )& =
 \prod_{\xi' \in \sigma_S}   p(\xi' \vert  \nu )^{ n f_{\xi'}^{(n)}(\underline \xi) } 
  \prod_{\xi' \in \sigma_S}   f_{\xi'}^{(n)}(\underline \xi)^{ - n f_{\xi'}^{(n)}(\underline \xi) }  \\
  &= \tilde  \mu_{\omega} (\underline{\xi}^{(n)} \vert \nu)
  \prod_{\xi' \in \sigma_S}   f_{\xi'}^{(n)}(\underline \xi)^{ - n f_{\xi'}^{(n)}(\underline \xi) }   .   
\end{split}
\end{equation}
Using  \eqref{alf5.1}, we deduce that  for $\underline \xi \in \Xi^{(n)}_\nu $ and $\nu' \ne \nu$,   
\begin{align*} 
{\rm Tr} \Big (\Pi_{\nu'} \tilde \rho^{(n)}(\underline{\xi}^{(n)}) \Big ) = & \frac{\tilde{\mu}_{\omega} (\underline{\xi}^{(n)} \vert \nu')}{ \tilde \mu_{\omega}(\underline{\xi}^{(n)}) } \text{ }
{\rm Tr}(\Pi_{\nu'} \rho^{(0)} ) \leq  \frac{\tilde{\mu}_{\omega}(\underline{\xi}^{(n)} \vert \nu' )}{ \tilde  \mu_{\omega}(\underline{\xi}^{(n)}\vert \nu)} \text{ }
\frac{{\rm Tr}(\Pi_{\nu'} \rho^{(0)} )}{  {\rm Tr}(\Pi_{\nu} \rho^{(0)} ) } \frac{  \prod_\xi   f_\xi^{(n)}(\underline \xi)^{ - n f_\xi^{(n)}(\underline \xi) }   }{   \prod_\xi   f_\xi^{(n)}(\underline \xi)^{ - n f_\xi^{(n)}(\underline \xi) }   }
\\ \notag \leq  & \frac{{\rm Tr}(\Pi_{\nu'} \rho^{(0)} )}{  {\rm Tr}(\Pi_{\nu} \rho^{(0)} ) }
\exp \Big ( n I_{p_\nu} \big( \sum_{\xi' \in \sigma_S} f_{\xi'}^{(n)}( \underline{\xi}) \delta_{\xi'} \big) - n I_{p_{\nu'}} \big( \sum_{\xi' \in \sigma_S} f_{\xi'}^{(n)}( \underline{\xi}) \delta_{\xi'} \big) \Big  ). 
\end{align*}
As $ \underline \xi   \in \Xi^{(n)}_\nu $,  it follows from the property (c) that
\begin{equation}\label{alf7}
{\rm Tr} \Big (\Pi_{\nu'} \tilde \rho^{(n)}(\underline{\xi}^{(n)}) \Big )  \leq  \frac{{\rm Tr}(\Pi_{\nu'} \rho^{(0)} )}{  {\rm Tr}(\Pi_{\nu} \rho^{(0)} ) } e^{-n ( \underset{\nu \neq \nu'}{\min} I_{p_\nu}(p_{\nu'}) - \delta )}.
\end{equation}    
\end{proof}

 Then we  compare the density matrices $\rho^{(k)}$ and   and their associated measures to density matrices obeying a non-demolition evolution. Given a density matrix $\hat \rho$ on $\mathcal{H}_{\overline{P}}$ and  a finite length protocol $\underline{\eta}^{(k)}   \in \sigma_{S}^{ \times k}$, we introduce the random variable 
\begin{equation}
\hat{\rho}(\underline{\xi}^{(r)} \vert  \underline{\eta}^{(k)} ):= \Phi^{(k+r)}_{\underline{\eta:\xi}_{r}} \circ ... \circ \Phi^{(k+1)}_{\underline{\eta:\xi}_{1}}  [ \hat \rho],
\end{equation}
 on $\sigma_{S}^{\times r}$, where $\underline{\eta:\xi}_l:=(\eta_{1},..., \eta_{k}, \xi_1,...,\xi_{l})$. We stress  that $\hat{\rho}(\underline{\xi}^{(r)} \vert  \underline{\eta}^{(k)} )$ is not normalized, and hence it is not a density matrix.

\begin{lemma}\label{comparison}
Let $r,k \in \mathbb{N}$.  For any set $\Gamma \subset \sigma_{S}^{\times r}$  and  for  any  $\underline{\eta}^{(k)} \in  \sigma_{S}^{\times k}$, we have that
\begin{align}
\| \hat{\rho}(\underline{\xi}^{(r)} \vert  \underline{\eta}^{(k)} )  - \tilde{\Phi}_{\xi_{r}} \circ ...\circ \tilde{\Phi}_{\xi_{1}} [\hat{\rho}]\| & \leq   \frac{d_1}{d^{-1}-1} d^{-r-1} \text{ }\mathrm{Tr}( \hat{\rho}(\underline{\xi}^{(r)} \vert  \underline{\eta}^{(k)} )),\\
\Big |   \sum_{\underline{\xi}^{(r)} \in \Gamma}  \Big( \mathrm{Tr}( \hat{\rho}(\underline{\xi}^{(r)} \vert  \underline{\eta}^{(k)} )) - \tr( \tilde{\Phi}_{\xi_{r}} \circ ...\circ \tilde{\Phi}_{\xi_{1}} [\hat{\rho}]) \Big) \Big | & \leq   \frac{d_1}{d^{-1}-1} d^{-r-1}, \label{54}
\end{align}
where the first inequality holds for all $\underline{\xi}^{(r)} \in \sigma_{S}^{\times r}$.

\end{lemma}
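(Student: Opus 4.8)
The plan is to derive both inequalities from a single telescoping (Duhamel-type) expansion. Write $A_l := \Phi^{(k+l)}_{\underline{\eta:\xi}_l}$ and $\tilde A_l := \tilde\Phi_{\xi_l}$ for $l=1,\dots,r$, so that $\hat\rho(\underline{\xi}^{(r)}\vert\underline{\eta}^{(k)}) = A_r\cdots A_1[\hat\rho]$ and the non-demolition comparison operator is $\tilde A_r\cdots\tilde A_1[\hat\rho] = \tilde\Phi_{\xi_r}\circ\cdots\circ\tilde\Phi_{\xi_1}[\hat\rho]$. I will use the elementary identity
\[
A_r\cdots A_1 - \tilde A_r\cdots\tilde A_1 = \sum_{j=1}^{r}\tilde A_r\cdots\tilde A_{j+1}\,(A_j-\tilde A_j)\,A_{j-1}\cdots A_1,
\]
with empty products read as the identity map. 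Applying it to $\hat\rho$ and estimating term by term, the ingredients I control are: submultiplicativity of the operator norm on trace-class operators, $\|\tilde A_r\cdots\tilde A_{j+1}\|\le\prod_{l=j+1}^{r}\|\tilde\Phi_{\xi_l}\|$; item (i) of Assumption~\ref{Ass:m1}, which gives $\|A_j-\tilde A_j\|\le d_1\|\tilde\Phi_{\xi_j}\|$ because the $(k+j)$-th component of $\underline{\eta:\xi}_j$ is $\xi_j$; and positivity of the completely positive maps $A_i$, so that $\|A_{j-1}\cdots A_1[\hat\rho]\| = \mathrm{Tr}(A_{j-1}\cdots A_1[\hat\rho])$.

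The decisive point is to make the resulting bound proportional to $\mathrm{Tr}(\hat\rho(\underline{\xi}^{(r)}\vert\underline{\eta}^{(k)}))$ rather than merely to $1$. By homogeneity, the combined lower bound \eqref{eq:m1} reads $\mathrm{Tr}(A_i[\sigma])\ge d\,\|\tilde\Phi_{\xi_i}\|\,\mathrm{Tr}(\sigma)$ for every positive $\sigma$; iterating this for $i=j,j+1,\dots,r$ gives
\[
\mathrm{Tr}\big(\hat\rho(\underline{\xi}^{(r)}\vert\underline{\eta}^{(k)})\big)\ \ge\ d^{\,r-j+1}\Big(\prod_{l=j}^{r}\|\tilde\Phi_{\xi_l}\|\Big)\,\mathrm{Tr}(A_{j-1}\cdots A_1[\hat\rho]).
\]
If some $\|\tilde\Phi_{\xi_l}\|$ vanishes then $A_l=0$ by item (i), both sides of the lemma are zero, and there is nothing to prove, so I may assume all these norms are strictly positive. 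Substituting the corresponding upper bound on $\mathrm{Tr}(A_{j-1}\cdots A_1[\hat\rho])$ into the product of the three factors above, the norms $\prod_{l=j}^{r}\|\tilde\Phi_{\xi_l}\|$ cancel, so the $j$-th telescoping term is bounded by $d_1\,d^{-(r-j+1)}\,\mathrm{Tr}(\hat\rho(\underline{\xi}^{(r)}\vert\underline{\eta}^{(k)}))$. Summing the geometric series $\sum_{j=1}^{r}d^{-(r-j+1)}=\sum_{m=1}^{r}d^{-m}\le d^{-r-1}/(d^{-1}-1)$ yields the first inequality.

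For the second inequality I take traces in the telescoping identity applied to $\hat\rho$; since $\hat\rho(\underline{\xi}^{(r)}\vert\underline{\eta}^{(k)})$ and $\tilde\Phi_{\xi_r}\circ\cdots\circ\tilde\Phi_{\xi_1}[\hat\rho]$ are positive operators, $|\mathrm{Tr}(\hat\rho(\underline{\xi}^{(r)}\vert\underline{\eta}^{(k)}))-\mathrm{Tr}(\tilde\Phi_{\xi_r}\circ\cdots\circ\tilde\Phi_{\xi_1}[\hat\rho])|\le\|\hat\rho(\underline{\xi}^{(r)}\vert\underline{\eta}^{(k)})-\tilde\Phi_{\xi_r}\circ\cdots\circ\tilde\Phi_{\xi_1}[\hat\rho]\|$, and the first inequality bounds each such term. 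Hence
\[
\Big|\sum_{\underline{\xi}^{(r)}\in\Gamma}\big(\mathrm{Tr}(\hat\rho(\underline{\xi}^{(r)}\vert\underline{\eta}^{(k)}))-\mathrm{Tr}(\tilde\Phi_{\xi_r}\circ\cdots\circ\tilde\Phi_{\xi_1}[\hat\rho])\big)\Big|\ \le\ \frac{d_1}{d^{-1}-1}\,d^{-r-1}\sum_{\underline{\xi}^{(r)}\in\Gamma}\mathrm{Tr}\big(\hat\rho(\underline{\xi}^{(r)}\vert\underline{\eta}^{(k)})\big).
\]
Enlarging $\Gamma$ to all of $\sigma_S^{\times r}$ and applying the trace-normalization $\sum_{\xi_k}\mathrm{Tr}(\Phi^{(k)}_{\underline\xi}[\rho])=\mathrm{Tr}(\rho)$ once per step gives $\sum_{\underline{\xi}^{(r)}}\mathrm{Tr}(\hat\rho(\underline{\xi}^{(r)}\vert\underline{\eta}^{(k)}))=\mathrm{Tr}(\hat\rho)=1$, which finishes the proof. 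The only step requiring care is the cancellation in the first inequality: the lower bound \eqref{eq:m1} has to be applied to exactly the block $A_r\cdots A_j$ so that the (possibly very small) product of non-demolition norms drops out, leaving a clean geometric sum in $d^{-1}$.
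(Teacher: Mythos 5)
Your proof is correct and follows essentially the same route as the paper: the explicit telescoping sum you write down is just the unrolled form of the paper's iterated one-step estimate (their Eq.~\eqref{co}), and your cancellation of $\prod_{l}\|\tilde\Phi_{\xi_l}\|$ against the iterated lower bound \eqref{eq:m1} is exactly the paper's inequality \eqref{rep}. The treatment of the degenerate case $\|\tilde\Phi_{\xi_l}\|=0$ and the final summation over $\Gamma$ using trace preservation also match the paper's argument.
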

{\em Proof:}   Assumption~\ref{Ass:m1}(i)  and a standard telescopic estimate imply that for any trace class operators $\rho_1, \rho_2$,
\begin{equation}\label{co}
\|  \Phi^{(n)}_{\underline{\xi}^{(n)}} \rho_1 - \tilde{ \Phi}_{ \xi_n} \rho_2 \| \leq \| \tilde \Phi_{ \xi_n} \| \left( d_1 \| \rho_1 \| + \| \rho_1 - \rho_2 \|  \right).
\end{equation}
Iterating this inequality, we  get  that for any density matrix $\hat{\rho}$,
\begin{align}
\|   \hat{\rho}(\underline{\xi}^{(r)} \vert  \underline{\eta}^{(k)} )  -   \tilde{\Phi}_{\xi_{r}} \circ \dots \circ \tilde{\Phi}_{\xi_{1}} [\hat{ \rho}]  \| \leq   \frac{d_1}{d^{-1}-1} d^{-r-1} \text{ } \tr(  \hat{\rho}(\underline{\xi}^{(r)} \vert  \underline{\eta}^{(k)} )) \label{eq:m2}
\end{align}
where we have used that  $\| \tilde \Phi_{ \xi}\rho \|  = \tr(\tilde \Phi_{ \xi} \rho)$ for all density matrices $\rho$, and that 
\begin{equation}\label{rep}
\|  \tilde \Phi_{ \xi_{k+r} }  \|  \dots \|  \tilde \Phi_{ \xi_{k+r-m+1}  }  \| \tr(  \Phi^{(k+r-m)}_{\underline{\eta:\xi}_{r-m}} \circ ... \circ \Phi^{(k+1)}_{\underline{\eta:\xi}_{1}}  [ \hat \rho]) \leq d^{-m} \tr(  \hat{\rho}(\underline{\xi}^{(r)} \vert  \underline{\eta}^{(k)} )),
\end{equation}
which follows directly from repeated use of \eqref{eq:m1} and \eqref{co}. To prove \eqref{54}, it is sufficient to use that 
\begin{align*}
\big |  \mathrm{Tr}( \hat{\rho}(\underline{\xi}^{(r)} \vert  \underline{\eta}^{(k)} ))-  \tr( \tilde{\Phi}_{\xi_{r}} \circ ...\circ \tilde{\Phi}_{\xi_{1}} [\hat{\rho}])  \big |  \leq  \|   \hat{\rho}(\underline{\xi}^{(r)} \vert  \underline{\eta}^{(k)} )  -   \tilde{\Phi}_{\xi_{r}} \circ \dots \circ \tilde{\Phi}_{\xi_{1}} [\hat{ \rho}]  \|
\end{align*}
and to sum over  $\underline{\xi}^{(r)} \in \Gamma$. \qed
\vspace{3mm}

\subsubsection{Proof of the inequalities  \eqref{abu1} and \eqref{abu2} of Lemma \ref{lem:m2} } \label{stepone}
We use the family of  sets $\Xi^{(n)}_\nu \subset \Xi$  defined in Lemmata \ref{conv}.   
We introduce the family of sets $$\Xi^{(k,k+r)}_{\nu} : = \sigma_{S}^{\times k} \times \Xi^{(r)}_\nu  \subset \Xi.$$ The projection on the first $l$ entries of an element $\underline{\xi}$ is denoted by $\mathrm{pr}_{l}$:  $\mathrm{pr}_{l} (\underline{\xi})=\underline{\xi}^{(l)}$.  It clearly holds that
\begin{equation*} \begin{split} 
\tilde{\mu}&_{\omega}(\Xi^{(k,k+r)}_{\nu}) -  \tilde{ \mathbb{E}} \mathrm{Tr}(\Pi_{\nu} \tilde \rho^{(k)}) \\
&=  \sum_{\underline{\eta}^{(k)} } \Big(  \sum_{\underline{\xi}^{(r)} \in  \mathrm{pr}_{r}( \Xi^{(r)}_\nu)}\tr( \tilde{\Phi}_{\xi_{r}} \circ \dots \circ \tilde{\Phi}_{\xi_{1}} [\tilde \rho^{(k)}(\underline{\eta}^{(k)} )])   -  
  \tr(\Pi_\nu \tilde  \rho^{(k )}(\underline{\eta}^{(k)} )  )   \Big )
\tilde{\mu}_{\omega} (\underline{\eta}^{(k)} ). 
\end{split}
\end{equation*}
Using   Eq.~(\ref{disjoi}) of Lemma~\ref{conv} we deduce that
\begin{equation}\label{pqno1}
\big | \tilde{\mu}_{\omega}(\Xi^{(k,k+r)}_{\nu}) - \tilde{ \mathbb{E}} \mathrm{Tr}(\Pi_{\nu} \tilde \rho^{(k)}) \big |  
 \leq   Ca^{r}. 
\end{equation}
Now we notice that for sufficiently small $\delta$, $$\Xi^{(k,k+r)}_{\nu} \subset  \Xi_{\hat{\mathcal{N}}^{(k,k+r)} =\nu} \qquad \text{and} \qquad  
    \Xi_{\hat{\mathcal{N}}^{(k,k+r)} =\nu}\subset \big (\Xi^{(k,k+r)}_{\nu'}\big )^c \qquad  (\nu' \ne \nu ),$$ as follows from the construction of the sets $\Xi^{(k,k+r)}_{\nu}$; see Section \ref{intermed}. From Eq.  \eqref{deci0sn} using \eqref{petitrappel}, arguing as above, we obtain that
\begin{equation}\label{pqno2}
\tilde{\mu}_{\omega}\big ( \Xi_{\hat{\mathcal{N}}^{(k,k+r)} =\nu}\setminus   \Xi^{(k,k+r)}_{\nu} \big )  \leq  Ca^{r}.  
\end{equation}     
Eqs. \eqref{pqno1}-\eqref{pqno2} imply \eqref{abu1}.  We now prove  \eqref{abu2}. By definition of $\tilde \epsilon^{(k,k+r)}(\nu) $ and from \eqref{petitrappel}, we deduce that 
\begin{align*}
\tilde \epsilon^{(k,k+r)}(\nu) & =  \sum_{\underline{\xi} \in \Xi_{\hat{\mathcal{N}}^{(k,k+r)} =\nu}} \mathrm{Tr} \big ((1-\Pi_{\nu}) 
 \tilde{\rho}^{(k+r)}(\underline \xi^{(k+r)} )\big )  \text{ }\tilde{\mu}_{\omega}(\underline{\xi}^{(k+r)})\\
 & = \sum_{\nu' \neq \nu} \text{ } \sum_{\underline{\xi} \in \Xi_{\hat{\mathcal{N}}^{(k,k+r)} =\nu}}  \tr(\Pi_{\nu'} \rho^{(0)})  \text{ }\tilde{\mu}_{\omega}(\underline \xi^{(k+r)} \vert \nu' ). 
\end{align*}
 Eq. \eqref{abu2} follows by using  that $     \Xi_{\hat{\mathcal{N}}^{(k,k+r)} =\nu}\subset \big (\Xi^{(k,k+r)}_{\nu'}\big )^c$ if $\nu \neq \nu'$  and Eq. \eqref{deci0sn}.

\subsubsection{Proof of the inequalities \eqref{otra01} and \eqref{otra02}  of  Lemma \ref{lem:m2} } \label{steptwo}
We use the same notations as in Lemma \ref{comparison} with $\underline{\eta}^{(k)} =\underline{\xi}^{(k)}$ and  $\hat{\rho}=\rho^{(k)}(\underline{\xi}^{(k)})$. Furthermore, we use the notation $\underline{\xi}^{(k,k+r)}=(\xi_{k+1},...,\xi_{k+ r})$.  We have that
\begin{align*}
\epsilon^{(k,k+r)}(\nu) & =  \sum_{\underline{\xi} \in \Xi_{\hat{\mathcal{N}}^{(k,k+r)} =\nu}} \mathrm{Tr} \big ((1-\Pi_{\nu}) 
\rho^{(k+r)}(\underline \xi^{(k+r)} )\big )  \text{ } \mu_{\omega}(\underline{\xi}^{(k+r)})\\
&= \sum_{\underline{\xi} \in \Xi_{\hat{\mathcal{N}}^{(k,k+r)} =\nu}} \mathrm{Tr} \big ((1-\Pi_{\nu}) 
\hat{\rho}(\underline{\xi}^{(k,k+r)} \vert  \underline{\xi}^{(k)})\big )  \text{ }  \text{ } \mu_{\omega}(\underline{\xi}^{(k)})
\end{align*}
Lemma \ref{comparison} implies that
\begin{equation}\label{otra2}
\| \hat{\rho}(\underline{\xi}^{(k,k+r)} \vert  \underline{\xi}^{(k)})  - \tilde{\Phi}_{\xi_{k+r}} \circ ...\circ \tilde{\Phi}_{\xi_{k+1}} [\rho^{(k)}(\underline{\xi}^{(k)})]\|  \leq   \frac{d_1}{d^{-1}-1} d^{-r-1} \text{ }\mathrm{Tr}( \hat{\rho}(\underline{\xi}^{(k,k+r)} \vert  \underline{\xi}^{(k)} )),
\end{equation} 
and hence we deduce that
\begin{align*}
\epsilon^{(k,k+r)}(\nu) & \leq   \sum_{\underline{\xi} \in \Xi_{\hat{\mathcal{N}}^{(k,k+r)} =\nu}} \mathrm{Tr} \big ((1-\Pi_{\nu}) 
\tilde{\Phi}_{\xi_{k+r}} \circ \dots \circ \tilde{\Phi}_{\xi_{k+1}} [ \rho^{(k)}(\underline{\xi}^{(k)})] \big )  \text{ } \mu_{\omega}(\underline{\xi}^{(k)}) +   \frac{d_1}{d^{-1}-1} d^{-r-1}\\
& \leq \sum_{\nu' \neq \nu} \sum_{\underline{\xi} \in \Xi_{\hat{\mathcal{N}}^{(k,k+r)} =\nu}}  \mu_{\omega}(\underline{\xi}^{(k)}) \text{ } \mathrm{Tr} \big (\Pi_{\nu'}\rho^{(k)}(\underline{\xi}^{(k)})) \prod_{i=k+1}^{k+r} p(\xi_i \vert \nu')  +   \frac{d_1}{d^{-1}-1} d^{-r-1},
\end{align*}
and \eqref{otra02} follows by using \eqref{deci0sn} and the fact that $ \Xi_{\hat{\mathcal{N}}^{(k,k+r)} =\nu}\subset \big (\Xi^{(k,k+r)}_{\nu'}\big )^c$ if $\nu \neq \nu'$.  Eq. \eqref{otra01} follows from similar calculations and is left to the reader.

\subsection{Some properties of completely positive maps}

In this section we prove the statement left out in Section~\ref{des_non_dem}. We consider a family of completely positive maps $\Phi_{*\xi}$ acting on a finite dimensional space $\mathcal{B}(\mathcal{H})$ such that $\Phi_*(\sigma_S) = \sum_{\xi \in \sigma_S} \Phi_{*\xi}$ is a unital map. We call a collection of operators $\Gamma_\alpha$  satisfying
$$
\Phi_*(\sigma_S)[ X] = \sum_{\alpha \in I} \Gamma_\alpha^* X \Gamma_\alpha
$$ 
a Kraus decomposition of $\Phi_*(\sigma_S)$. Note that such a decomposition can always be taken as a sum of decompositions of $\Phi_{*\xi}$ over $\xi$, i.e. the index set $I$ can be written as a union of sets $I_\xi$ with $\xi \in \sigma_S$ and 
$$
\Phi_*(\sigma_S)[ X] = \sum_{\xi \in \sigma} \Phi_{*\xi} [X ]= \sum_{\xi \in \sigma} \sum_{\alpha \in I_\xi} \Gamma_\alpha^* X \Gamma_\alpha.
$$
 The following preparatory lemma is well-known (see e.g. \cite{Fagnola}). 

\begin{lemma}
Suppose that $\Phi(\sigma_S)$ has a faithful stationary state $\rho$. Then $\ker(\Phi_*(\sigma_s) - \mathds{1}) = \{\Gamma_\alpha\}_{\alpha \in I}'$ for any Kraus representation   $\Gamma_\alpha$.
\end{lemma}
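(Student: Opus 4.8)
The plan is to prove the two inclusions $\ker(\Phi_*(\sigma_S)-\mathds{1}) \subseteq \{\Gamma_\alpha\}_{\alpha\in I}'$ and $\{\Gamma_\alpha\}_{\alpha\in I}' \subseteq \ker(\Phi_*(\sigma_S)-\mathds{1})$ separately. The second inclusion is the easy direction: if $X$ commutes with every $\Gamma_\alpha$ (hence also with every $\Gamma_\alpha^*$, since the commutant is a $*$-algebra), then $\Phi_*(\sigma_S)[X] = \sum_\alpha \Gamma_\alpha^* X \Gamma_\alpha = X\sum_\alpha \Gamma_\alpha^*\Gamma_\alpha = X\,\Phi_*(\sigma_S)[\mathds{1}] = X$, using that $\Phi_*(\sigma_S)$ is unital. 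So $X \in \ker(\Phi_*(\sigma_S)-\mathds{1})$.

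For the hard direction, first I would reduce to self-adjoint fixed points: if $X$ is fixed then so are $X^*$ (because $\Phi_*(\sigma_S)$ is $*$-preserving, being a sum of completely positive maps), hence so are $\mathrm{Re}\,X$ and $\mathrm{Im}\,X$; so it suffices to treat $X = X^*$. The key tool is the Kadison–Schwarz inequality for the unital completely positive map $\Phi_*(\sigma_S)$: for self-adjoint $X$,
\begin{equation*}
\Phi_*(\sigma_S)[X^2] \geq \Phi_*(\sigma_S)[X]^2 = X^2,
\end{equation*}
the last equality because $X$ is a fixed point. Now pair this against the faithful stationary state $\rho$ of the dual map $\Phi(\sigma_S)$: stationarity means $\mathrm{Tr}(\rho\,\Phi_*(\sigma_S)[Y]) = \mathrm{Tr}(\Phi(\sigma_S)[\rho]\,Y) = \mathrm{Tr}(\rho\,Y)$ for all $Y$. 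Applying this with $Y = X^2$ gives $\mathrm{Tr}\big(\rho\,(\Phi_*(\sigma_S)[X^2]-X^2)\big)=0$, while the operator inside is nonnegative by Kadison–Schwarz and $\rho$ is faithful, forcing $\Phi_*(\sigma_S)[X^2] = X^2$. In other words, equality holds in the Schwarz inequality. The standard multiplicative-domain argument then shows that any $X$ achieving equality in the Schwarz inequality for all relevant products lies in the multiplicative domain of $\Phi_*(\sigma_S)$; concretely, writing out $\sum_\alpha \Gamma_\alpha^*(X^2 - X\Gamma_\alpha^*{}^{-1}\cdots)$ — more cleanly, expanding $\sum_\alpha (\Gamma_\alpha X - X\Gamma_\alpha)^*(\Gamma_\alpha X - X\Gamma_\alpha) \geq 0$ and comparing with $\Phi_*(\sigma_S)[X^2] - X\Phi_*(\sigma_S)[X] - \Phi_*(\sigma_S)[X]X + X\Phi_*(\sigma_S)[\mathds{1}]X = 0$ — yields $\sum_\alpha \|(\Gamma_\alpha X - X\Gamma_\alpha)\psi\|^2 = 0$ for every vector $\psi$, hence $[X,\Gamma_\alpha]=0$ for all $\alpha$. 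Thus $X \in \{\Gamma_\alpha\}_{\alpha\in I}'$, completing the inclusion.

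The main obstacle is getting the bookkeeping of the multiplicative-domain computation exactly right: one must verify that the cross terms $X\Phi_*(\sigma_S)[X] + \Phi_*(\sigma_S)[X]X$ assemble correctly from $\sum_\alpha(\Gamma_\alpha^* X X\Gamma_\alpha - \Gamma_\alpha^* X X\Gamma_\alpha\cdots)$ — i.e., that the identity
\begin{equation*}
\sum_\alpha (\Gamma_\alpha X - X\Gamma_\alpha)^*(\Gamma_\alpha X - X\Gamma_\alpha) = \Phi_*(\sigma_S)[X^2] - X^2
\end{equation*}
holds precisely because $\Phi_*(\sigma_S)[X] = X$ and $\Phi_*(\sigma_S)[\mathds{1}]=\mathds{1}$ (and $X=X^*$). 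Once this identity is in hand, faithfulness of $\rho$ does the rest. I would also remark that the independence of the conclusion from the choice of Kraus representation is automatic, since the commutant $\{\Gamma_\alpha\}'_{\alpha\in I}$ equals the commutant of the operator system spanned by the $\Gamma_\alpha$, which (together with the identity) is an invariant of the completely positive map; but for the lemma as stated it suffices to observe the argument above works verbatim for any fixed Kraus decomposition.
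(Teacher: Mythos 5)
Your proof is correct and follows essentially the same route as the paper's: the Kadison--Schwarz inequality for the unital completely positive map, equality forced by pairing with the faithful stationary state, and the explicit sum-of-squares identity $\sum_\alpha (\Gamma_\alpha X-X\Gamma_\alpha)^*(\Gamma_\alpha X-X\Gamma_\alpha)=\Phi_*(\sigma_S)[X^2]-X^2$ to kill the commutators; the paper merely works with a general fixed point $A$ and $AA^*$ rather than reducing to self-adjoint elements. (One small caveat: in the easy inclusion pull $X$ to the right, $\Gamma_\alpha^* X\Gamma_\alpha=\Gamma_\alpha^*\Gamma_\alpha X$, since the commutant of the non-self-adjoint set $\{\Gamma_\alpha\}$ need not be $*$-closed a priori.)
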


{\em Proof:} We recall that $\Phi(\sigma_S) = (\Phi_*(\sigma_S))^*$ and that $\rho$ is a stationary state if $\Phi(\sigma_S) [\rho] = \rho$ and a faithful state if $\tr( \rho A^*A) = 0$ implies $A^*A=0$. We write $\Phi_* \equiv \Phi_*(\sigma_s)$.

If $A \in \ker(\Phi_* - \mathds{1})$ then so does $A^*$. Then Kadison inequality $\Phi_*[A A^*] \geq \Phi_*[A] \Phi_*[A^*] = A A^*$ holds and the difference between the LHS and the RHS can be explicitly expressed as 
$$
\Phi_*[A A^*] - A A^* = \sum_\alpha |[A^*,\Gamma_\alpha]|^2.
$$
Applying the stationary state, $\tr(\rho\, \cdot)$, on both sides we get $\tr(\rho \sum_\alpha |[A^*,\Gamma_\alpha]|^2) = 0$. Since $\rho$ is a faithful state the statement follows. \qed
\vspace{2mm}

The following lemma gives a description of a non-demolition family $\Phi_{*\xi}$.

\begin{lemma}
\label{lem:ins2}
Suppose that \begin{equation} \Phi_{* \xi} \circ  \Phi_{* \xi'} =  \Phi_{* \xi'} \circ  \Phi_{* \xi} \label{appa} \end{equation} for all $\xi,\,\xi' \in \sigma_S$ and that the map $\Phi(\sigma_S)$ has a faithful stationary state. Then $[\Phi_{* \xi}[\mathds{1}], \Phi_{* \xi'}[\mathds{1}] ] = 0$ and the corresponding joint spectral decomposition of the  identity $\mathds{1} = \sum_{\nu} \Pi_\nu$ has the property that
$$
\Phi_{*\xi} [\Pi_\nu] = p(\xi|\nu) \Pi_\nu \quad \mbox{for some} \quad p(\xi|\nu) \geq 0.
$$
Moreover if $\nu \neq \nu'$ then there exists $\xi$ such that $p(\xi|\nu) \neq p(\xi|\nu')$.
\end{lemma}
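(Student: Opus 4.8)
The plan is to use the identification of the fixed-point set of $\Phi_*(\sigma_S)$ with the commutant of a Kraus family, furnished by the preceding lemma, and then to propagate commutation relations. First I would note that the commutativity \eqref{appa} of the $\Phi_{*\xi}$'s forces $\Phi_*(\sigma_S)=\sum_{\xi'\in\sigma_S}\Phi_{*\xi'}$ to commute with every $\Phi_{*\xi}$, so, using that $\Phi_*(\sigma_S)$ is unital,
\[
\Phi_*(\sigma_S)\big[\Phi_{*\xi}[\mathds{1}]\big]=\Phi_{*\xi}\big[\Phi_*(\sigma_S)[\mathds{1}]\big]=\Phi_{*\xi}[\mathds{1}],
\]
i.e.\ $\Phi_{*\xi}[\mathds{1}]\in\ker(\Phi_*(\sigma_S)-\mathds{1})$ for every $\xi$. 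Fixing a Kraus decomposition $\{\Gamma_\alpha\}_{\alpha\in I}$ of $\Phi_*(\sigma_S)$ adapted to the splitting $I=\bigcup_\xi I_\xi$ (so $\Phi_{*\xi}[X]=\sum_{\alpha\in I_\xi}\Gamma_\alpha^* X\Gamma_\alpha$), the preceding lemma gives $\ker(\Phi_*(\sigma_S)-\mathds{1})=\{\Gamma_\alpha\}'_{\alpha\in I}$, hence $[\Phi_{*\xi}[\mathds{1}],\Gamma_\alpha]=0$ for all $\alpha,\xi$.

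Next I would record that, since $\Phi_{*\xi}$ is completely positive and $\mathds{1}$ is self-adjoint, $\Phi_{*\xi}[\mathds{1}]$ is self-adjoint and positive. Taking adjoints in $[\Phi_{*\xi}[\mathds{1}],\Gamma_\alpha]=0$ then yields $[\Phi_{*\xi}[\mathds{1}],\Gamma_\alpha^*]=0$ as well, so $\Phi_{*\xi}[\mathds{1}]$ commutes with $\Phi_{*\xi'}[\mathds{1}]=\sum_{\alpha\in I_{\xi'}}\Gamma_\alpha^*\Gamma_\alpha$; this is the first claim $[\Phi_{*\xi}[\mathds{1}],\Phi_{*\xi'}[\mathds{1}]]=0$. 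Since $\{\Phi_{*\xi}[\mathds{1}]\}_{\xi\in\sigma_S}$ is then a commuting family of self-adjoint operators on the finite-dimensional space $\mathcal H$, I would take their joint spectral decomposition $\mathds{1}=\sum_\nu\Pi_\nu$, indexing $\nu$ by the distinct tuples of joint eigenvalues, and write $p(\xi|\nu)\ge 0$ for the eigenvalue of $\Phi_{*\xi}[\mathds{1}]$ on $\Pi_\nu$, so that $\Phi_{*\xi}[\mathds{1}]=\sum_\nu p(\xi|\nu)\Pi_\nu$ (and summing over $\xi$, $\sum_\xi p(\xi|\nu)=1$).

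Each $\Pi_\nu$ is a polynomial in the operators $\Phi_{*\xi}[\mathds{1}]$, hence commutes with every $\Gamma_\alpha$ and every $\Gamma_\alpha^*$; therefore
\[
\Phi_{*\xi}[\Pi_\nu]=\sum_{\alpha\in I_\xi}\Gamma_\alpha^*\,\Pi_\nu\,\Gamma_\alpha=\Big(\sum_{\alpha\in I_\xi}\Gamma_\alpha^*\Gamma_\alpha\Big)\Pi_\nu=\Phi_{*\xi}[\mathds{1}]\,\Pi_\nu=p(\xi|\nu)\,\Pi_\nu,
\]
which is the second claim. The ``moreover'' is then immediate from the construction of the joint spectral decomposition: distinct $\nu,\nu'$ label distinct joint eigenspaces of $\{\Phi_{*\xi}[\mathds{1}]\}_{\xi}$, so the tuples $(p(\xi|\nu))_{\xi}$ and $(p(\xi|\nu'))_{\xi}$ must differ in at least one coordinate $\xi$.

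I expect the only delicate point to be the passage from ``$\Phi_{*\xi}[\mathds{1}]$ commutes with each $\Gamma_\alpha$'' to ``it commutes with each $\Gamma_\alpha^*$'', which rests on the self-adjointness of $\Phi_{*\xi}[\mathds{1}]$ (hence on complete positivity of $\Phi_{*\xi}$) together with the fact that the preceding lemma genuinely identifies $\ker(\Phi_*(\sigma_S)-\mathds{1})$ with the commutant of the Kraus family. Everything else is routine finite-dimensional linear algebra, requiring no analytic input beyond the cited lemma.
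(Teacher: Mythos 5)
Your proof is correct and follows essentially the same route as the paper's: identify $\Phi_{*\xi}[\mathds{1}]$ as a fixed point of $\Phi_*(\sigma_S)$, invoke the preceding lemma to place it in the commutant of the Kraus family, and then read off the commutation relations and the action on the joint spectral projections. The only (harmless) variation is in deriving $[\Phi_{*\xi}[\mathds{1}],\Phi_{*\xi'}[\mathds{1}]]=0$: you use self-adjointness of $\Phi_{*\xi}[\mathds{1}]$ to commute it past $\Gamma_\alpha^*$, whereas the paper evaluates $\Phi_{*\xi'}\circ\Phi_{*\xi}[\mathds{1}]$ in two ways and invokes the hypothesis \eqref{appa} a second time; both arguments are valid.
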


{\em Proof:} Since $\Phi_*(\sigma_S) = \sum_{\xi \in \sigma_S} \Phi_{*\xi}$, \eqref{appa} implies that  $\Phi_*(\sigma_S) \circ \Phi_{*\xi} =\Phi_{*\xi} \circ  \Phi_*(\sigma_S) $ and hence
$$
\Phi_*(\sigma_S) \circ \Phi_{*\xi}[\id] = \Phi_{*\xi}[\id].
$$
It then follows from the previous lemma that $[\Phi_{*\xi}[\id],\Gamma_\alpha] = 0$ for any Kraus decomposition $\Gamma_\alpha$ of $\Phi_*(\sigma_S)$. We then have
\begin{align*}
\Phi_{*\xi'} \circ \Phi_{*\xi}[\id] &= \sum_{\alpha \in I_{\xi'}} \Gamma_\alpha^* \Phi_{*\xi}[\id] \Gamma_\alpha
				        = \sum_{\alpha \in I_{\xi'}} \Gamma_\alpha^* \Gamma_\alpha \Phi_{*\xi}[\id] \\
				        &= \Phi_{*\xi'}[\id] \Phi_{* \xi}[\id].
\end{align*}
Repeating the same with $\xi$ and $\xi'$ exchanged we conclude that $\Phi_{*\xi'}[\id]$ and $\Phi_{*\xi}[\id]$ commute. Let now 
$\id = \sum_{\nu} \Pi_\nu$ be the joint spectral decomposition of the commuting family $\{\Phi_{*\xi}[\id]\}_{\xi \in \sigma_S}$ and suppose that
$$
\Phi_{*\xi}[\id]  = \sum_{\nu'} p(\xi|\nu') \Pi_{\nu'}. 
$$
By hitting the equation by $\Pi_{\nu}$ from the right and using  the commutativity of the operators $\Gamma_\alpha$ with $\Pi_{\nu}$ we get
$$
\Phi_{*\xi}[\Pi_\nu] = p(\xi|\nu) \Pi_\nu.
$$
To finish the proof notice that $p(\xi|\nu)$ is non-negative because $\Phi_{*\xi}$ is a positive map. Furthermore,  if $p(\xi|\nu) = p(\xi|\nu')$ for all $\xi$ then the projections $\Pi_{\nu}$ and  $\Pi_{\nu'}$ would not appear separately in the spectral decomposition, but rather their sum $\Pi_{\nu} + \Pi_{\nu'}$ would appear.
 \qed
\footnotesize
\bibliographystyle{plain}
\bibliography{BFFS_2015_final}

\begin{thebibliography}{10}

\bibitem{Aldou}
D.~Aldous.
\newblock {\em Exchangeability and related topics}.
\newblock Springer, 1985.

\bibitem{BaBeBe2}
M.~Bauer, T.~Benoist, and D.~Bernard.
\newblock Repeated quantum non-demolition measurements: convergence and
  continuous time limit.
\newblock {\em Ann. Henri Poincar{\'e}}, 14(4):639--679, 2013.

\bibitem{BaBe}
M.~Bauer and D.~Bernard.
\newblock Convergence of repeated quantum nondemolition measurements and
  wave-function collapse.
\newblock {\em Phys. Rev. A}, 84(4):044103, 2011.

\bibitem{BaBeTi}
M.~Bauer, D.~Bernard, and A.~Tilloy.
\newblock Statistics of quantum jumps and spikes, and limits of diffusive weak
  measurements.
\newblock {\em arXiv preprint arXiv:1410.7231}, 2014.

\bibitem{BePel}
T.~Benoist and C.~Pellegrini.
\newblock Large time behavior and convergence rate for quantum filters under
  standard non demolition conditions.
\newblock {\em Comm. Math. Phys.}, 331(2):703--723, 2014.

\bibitem{Chernoff}
H.~Chernoff.
\newblock A measure of asymptotic efficiency for tests of a hypothesis based on
  the sum of observations.
\newblock {\em Ann. Math. Stat.}, 23(4):493--507, 1952.

\bibitem{Finetti}
B.~De~Finetti.
\newblock La pr{\'e}vision: ses lois logiques, ses sources subjectives.
\newblock {\em Ann. Inst. Henri Poincar{\'e}}, 7(1):1--68, 1937.

\bibitem{dembo}
A.~Dembo and O.~Zeitouni.
\newblock {\em Large deviations techniques and applications}.
\newblock Springer, 2009.

\bibitem{Fagnola}
J.~Deschamps, F.~Fagnola, E.~Sasso, and V.~Umanita.
\newblock Structure of uniformly continuous quantum markov semigroups.
\newblock {\em arXiv preprint arXiv:1412.3239}, 2014.

\bibitem{Ellis}
R.~Ellis.
\newblock {\em Entropy, large deviations, and statistical mechanics}.
\newblock Springer, 2005.

\bibitem{esseen}
C.G. Esseen.
\newblock A moment inequality with an application to the central limit theorem.
\newblock {\em Scand. Actuar. J.}, 1956(2):160--170, 1956.

\bibitem{FaFS}
J.~Faupin, J.~Fr{\"o}hlich, and B.~Schubnel.
\newblock On the probabilistic nature of quantum mechanics and the notion of
  closed systems.
\newblock {\em To appear in Ann. Henri Poincar\'{e}}.

\bibitem{Feller}
W.~Feller.
\newblock {\em An introduction to probability theory and its applications},
  volume~2.
\newblock John Wiley \& Sons, 2008.

\bibitem{FS}
J.~Fr{\"o}hlich and B.~Schubnel.
\newblock Quantum probability theory and the foundations of quantum mechanics.
\newblock {\em The {M}essage of {Q}uantum {S}cience}, 2015.

\bibitem{Griffiths}
R.B. Griffiths.
\newblock {\em Consistent quantum theory}.
\newblock Cambridge Univ. Pr., 2003.

\bibitem{guerlin}
C.~Guerlin, J.~Bernu, S.~Deleglise, C.~Sayrin, S.~Gleyzes, S.~Kuhr, M.~Brune,
  J.M. Raimond, and S.~Haroche.
\newblock Progressive field-state collapse and quantum non-demolition photon
  counting.
\newblock {\em Nature}, 448(7156):889--893, 2007.

\bibitem{Hewitt}
E.~Hewitt and L.~Savage.
\newblock Symmetric measures on cartesian products.
\newblock {\em T. Am. Math. Soc.}, 80(2):470--501, 1955.

\bibitem{Hoef}
W.~Hoeffding.
\newblock Asymptotically optimal tests for multinomial distributions.
\newblock {\em Ann. Stat.}, 36(2):369--401, 1965.

\bibitem{Holevo}
A.S. Holevo.
\newblock {\em Statistical structure of quantum theory}.
\newblock Springer, 2001.

\bibitem{JaPi12}
V.~Jak{\v{s}}ic, Y.~Ogata, C.A. Pillet, and R.~Seiringer.
\newblock Quantum hypothesis testing and non-equilibrium statistical mechanics.
\newblock {\em Rev. Math. Phys}, 24(6):1230002, 2012.

\bibitem{Kraus}
K.~Kraus.
\newblock {\em States, effects and operations}.
\newblock Springer, 1983.

\bibitem{Leb14}
J.L. Lebowitz, B.~Pittel, D.~Ruelle, and E.R. Speer.
\newblock Central limit theorems, {L}ee-{Y}ang zeros, and graph-counting
  polynomials.
\newblock {\em arXiv preprint arXiv:1408.4153}, 2014.

\bibitem{Lue}
G.~L{\"u}ders.
\newblock {\"U}ber die {Z}ustands{\"a}nderung durch den {M}e{\ss}proze{\ss}.
\newblock {\em Ann. Phys.-Leipzig}, 443(5-8):322--328, 1950.

\bibitem{Mass}
H.~Maassen and B.~K{\"u}mmerer.
\newblock Purification of quantum trajectories.
\newblock {\em Lecture Notes-Monograph Series}, 48:252--261, 2006.

\bibitem{Rao}
M.M. Rao.
\newblock {\em Conditional measures and applications}.
\newblock CRC Press, 2005.

\bibitem{Schw}
J.~Schwinger.
\newblock The algebra of microscopic measurement.
\newblock {\em Proc. Natl. Acad. Sci. USA}, 45(10):1542--1553, 1959.

\bibitem{Wig}
E.P. Wigner.
\newblock {\em The {C}ollected {W}orks of {E}ugene {P}aul {W}igner}.
\newblock Springer-Verlag, 1993.

\end{thebibliography}

\end{document}